\theoremstyle{definition}
\newtheorem{Theorem}{Theorem}
\newtheorem{Algorithm}{Algorithm}
\newtheorem{Definition}[Theorem]{Definition}
\newtheorem{Remark}[Theorem]{Remark}
\newtheorem{Corollary}[Theorem]{Corollary}
\newtheorem{Proposition}[Theorem]{Proposition}
\newtheorem{Lemma}[Theorem]{Lemma}
\renewcommand{\Re}{\operatorname{Re}}
\renewcommand{\Im}{\operatorname{Im}}
\newcommand{\R}{\mathbb{R}}
\newcommand{\C}{\mathbb{C}}
\newcommand{\Q}{\mathbb{Q}}
\newcommand{\N}{\mathbb{N}}
\newcommand{\Z}{\mathbb{Z}}
\newcommand{\norm}[1]{\left|\left|#1\right|\right|}
\newlength\arrowheight
\newcommand{\Set}[2]{\left\{#1 \mid #2\right\}}
\newcommand{\sem}[1]{\left\llbracket #1 \right\rrbracket}
\title{On the Complexity of Robust Eventual Inequality Testing for C-Finite Functions}
\date{}
\author{
Eike Neumann\\ Swansea University\\ Swansea, Wales\\ \texttt{neumaef1@gmail.com}
}
\begin{document}
\maketitle

\begin{abstract}
    We study the computational complexity of a robust version of the problem of testing two univariate C-finite functions for eventual inequality at large times.
    Specifically, working in the bit-model of real computation, we consider the eventual inequality testing problem for real functions that are specified by homogeneous linear Cauchy problems with arbitrary real coefficients and initial values.
    In order to assign to this problem a well-defined computational complexity, we develop a natural notion of polynomial-time decidability of subsets of computable metric spaces which extends our recently introduced notion of maximal partial decidability.
    We show that eventual inequality of C-finite functions is polynomial-time decidable in this sense.
\end{abstract}

\section{Introduction}

Linear dynamical systems capture the evolution of a wide range of real-world systems of interest in physics, biology, economics, and other natural sciences. In computing, they arise in connection with questions such as loop termination, model checking, and probabilistic or quantum computation.
Unsurprisingly, questions surrounding decidability of liveness and safety properties of such systems have a long history in theoretical computer science, dating back at least to the mid-1970s with the work of Berstel and Mignotte \cite{BerstelMignotte76} on decidable properties of the zero sets of linear recurrence sequences -- a line of research that itself goes back at least to the beginning of the 20th century, as exhibited by Skolem's famous structural result of 1934.
A simple-looking but notoriously difficult open problem in this area is the so-called \emph{Positivity Problem} \cite{OWSurvey15}: decide whether a given linear recurrence sequence or C-finite function is positive.
The Positivity Problem is known to be decidable for linear recurrences of order up to five, but known to likely fall outside the scope of current techniques for recurrences of order six or higher \cite{OWPositivityLowOrder14}. 
The situation is better for simple linear recurrences, where Positivity is known to be decidable up to order nine \cite{OWPositivitySimple14} and Ultimate Positivity -- to decide if a recurrence sequence is eventually positive -- is known to be decidable for all orders \cite{OWUltimateSimple14}.
Similar results have been obtained for the continuous-time analogues of these problems for C-finite functions \cite{BellEtAl10, ChonevLICS16, ChonevICALP16}.

Recently, there has been increased interest in the study of robust reachability questions for linear dynamical systems: 
in \cite{Rounding}, point-to-point reachability for linear systems with rounded orbits is studied.
In \cite{Pseudo}, a version of the Skolem Problem (to decide whether a recurrence sequence has a zero), where arbitrarily small perturbations are performed after each step of the iteration, is shown to be decidable.
Most notably, in \cite{Robust} two robust versions of the Positivity Problem are studied: in the first version, one is given a fixed rational linear recurrence equation and a rational ball of initial values and asked to decide whether the linear recurrence is positive for all initial values in the ball.
In the second version, one is given a linear recurrence equation and a single vector of initial values and asked if there exists a ball about the initial values such that all sequences satisfying the equation with initial values in the ball are positive.
The second version is shown to be decidable in polynomial space, while the first version is shown to be at least as hard as the standard Positivity Problem. 
Both results are shown to hold true also for analogous versions of the Skolem Problem.

We recently proposed \cite{DPLSR20} another natural robust variant of the Positivity Problem based on computable analysis.
Computable analysis \cite{WeihrauchBook, BrattkaHertling21} is the study of computation on data that is in general not known exactly but can be measured to arbitrary accuracy.
Algorithms are provided infinitary objects as inputs via streams of increasingly accurate approximations.
This allows one to provide arbitrary real numbers (as opposed to ``efficiently discrete'' countable subsets of real numbers such as rational or algebraic numbers) as inputs to algorithms.
Here, robustness is built into the definition of computation, for any decision that an algorithm makes can only depend on a finite approximation to its input.

We showed that the Positivity Problem and the Ultimate Positivity Problem for linear recurrences with arbitrary real parameters are maximally partially decidable (see below for a definition).
In this paper we improve one of these results by showing that the Ultimate Inequality Problem for C-finite functions which are specified by arbitrary real parameters is maximally partially decidable in polynomial time.
This is a significant improvement over our previous decidability result, which requires exponentially many queries to the theory of the reals.
The focus on the continuous-time setting is mainly for the sake of variety.
Our results apply, \textit{mutatis mutandis}, to linear recurrence sequences as well.
More concretely, we consider functions that are specified by homogeneous linear differential equations
\begin{equation}\label{eq: differential equation}
    f^{(n)}(t) + c_{n - 1}f^{(n - 1)}(t) + \dots + c_1 f^{(1)}(t) + c_0 f(t) = 0
\end{equation}
with constant coefficients $c_0,\dots, c_{n - 1} \in \R$ and initial values
$f^{(k)}(0) \in \R$, $k =0,\dots,n - 1$.
Functions of this type are also called \emph{C-finite}.
The Ultimate Inequality Problem\footnote{Over discrete inputs, the Ultimate Inequality Problem reduces to the Ultimate Positivity Problem, \textit{i.e.}, the special case of Ultimate Inequality where the second function is identically equal to zero.
However, the standard reduction maps robust instances of Ultimate Inequality to non-robust instances of Ultimate Positivity, which is why we consider Ultimate Inequality instead. 
For the same reason, we cannot assume without loss of generality that $n = m$ in the definition of the Ultimate Inequality Problem.}
is to decide, given $(c, u, d, v) \in \R^{2n} \times \R^{2m}$ if there exists $t_0 > 0$ such that the function $f$ satisfying the differential equation \eqref{eq: differential equation} with coefficients $c$ and initial values $f^{(k)}(0)$
is greater than or equal to 
the function $g$ satisfying an equation analogous to \eqref{eq: differential equation} with coefficient vector $d$ and initial values $g^{(k)}(0) =  v_k$
for all $t \geq t_0$.
Thus, we admit functions as inputs that are specified by Cauchy problems with arbitrary real coefficients and initial values.
Computation on such objects can be defined using computable metric spaces \cite{BrattkaPresser}.
A \emph{computable metric space} is a separable metric space $(X, d)$ together with a dense sequence $(x_k)_k$ and a computable map 
$\delta \colon \N^3 \to \Q$ such that 
$|\delta(k, \ell, n) - d(x_k, x_\ell)| < 2^{-n}$.
A point $x \in X$ can be provided as an input to an algorithm in the form of an infinite sequence $(k_n)_n$ of integers such that 
$|x - x_{k_n}| < 2^{-n}$. 
A sequence $(k_n)_n$ with this property is called a \emph{name} of $x$.
Note that we do not require names to be computable. 
In our computational model, the algorithm is given access to the name of its input as a black box.
As a consequence, algorithms may operate on all points of $X$, not just on the computable points.
This should be distinguished from the related notion of \emph{Markov computability} (see \cite[Chapter 9.6]{WeihrauchBook} and references therein), where algorithms operate on computable points which are presented as Gödel numbers of Turing machines.

The prototypical example of a computable metric space is the space $\R$ of real numbers with the usual Euclidean distance, where $(x_k)_k$ is a suitable enumeration of the rational numbers.
The computability of the map $\delta$ ensures that the distance function $d \colon X \times X \to \R$ is computable when $\R$ is given this structure of computable metric space.

Since an algorithm can only read a finite approximation to its input before committing to a decision, connected computable metric spaces do not have any non-trivial decidable subsets.
For this reason, decision problems have arguably been somewhat neglected by the computable analysis community in the past. 
In order to extend the definition of decidability from $\N$ to arbitrary computable metric spaces in a more meaningful way, we proposed \cite{DPLSR20} the following notion:

    Let $A \subseteq X$ be a subset of a computable metric space $X$.
    A \emph{partial algorithm} for deciding $A$ is an algorithm that takes as input (a name of) a point\footnote{It is an essential requirement that the behaviour of a maximal partial algorithm be constrained on the entire set $X$, and not just on $X \setminus \partial A$. For example, it can be shown that for a real number $x \in \R$, the set $\R \setminus \{x\}$ is maximally partially decidable if and only if $x$ is computable.
    If the algorithm's behaviour was only constrained on $X \setminus \partial A$, then the algorithm which always outputs ``false'' would ``maximally partially decide'' all sets $A$ with $A \subseteq \partial A$.} 
    $x \in X$
    and either halts and outputs a boolean value or runs indefinitely.
    Should the algorithm halt on an input $x$, it is required to correctly report whether $x$ belongs to $A$.
    A \emph{maximal partial algorithm} for deciding $A$ is a partial algorithm for deciding $A$ which halts on all names of all points 
    $x \in X \setminus \partial A$.
    If there exists a maximal partial algorithm for deciding $A$, then $A$ is called \emph{maximally partially decidable}.
    A point in $X \setminus \partial A$ is also called a \emph{robust instance} of $A$, while a point in $\partial A$ is called a \emph{boundary instance} of $A$.
    A maximal partial algorithm for deciding a set $A$ can be alternatively defined as a partial algorithm whose halting set\footnote{We did not require in our definition of partial algorithm that the halting behaviour of a partial algorithm be independent of the given name of a point. The halting set is hence a-priori a set of names, and not a set of points in $X$. However, the map that sends names to the points they encode is ``effectively open'', so that every partial algorithm can be made into a partial algorithm with extensional halting behaviour.} 
    is maximal among the halting sets of all partial algorithms for deciding $A$.
    This motivates the name ``maximal partial algorithm''.

For metric spaces $X$ whose closed balls are compact, maximal partial decidability can be further characterised using rational balls as inputs (cf.~\cite[Proposition 2.2]{DPLSR20}).
This helps clarify how maximal partial decidability compares with notions of robust decidability proposed by other authors.
Let $(x_k)_k$ denote the dense sequence that is part of the data by which $X$ is presented.
The existence of a maximal partial algorithm for deciding $A$ is (uniformly computably) equivalent to the existence of an algorithm which takes as input an integer $k \in \N$ and a positive rational number $r \in \Q$ and behaves as follows:
\begin{enumerate}
    \item If the closed ball $\overline{B}(x_k,r)$ is contained in the set of robust problem instances, then the algorithm halts and correctly reports the uniform answer to the decision problem on $\overline{B}(x_k,r)$.
    \item If the open ball $B(x_k, r)$ intersects the set of boundary instances, then the algorithm halts and correctly reports that there exist both ``Yes''-instances and ``No''-instances in the ball.
    \item If the closed ball intersects the boundary instances but the open ball does not, then the algorithm runs forever.
\end{enumerate}

In this paper, we extend the definition of maximal partial decidability to polynomial-time decidability.
For general background on computational complexity in analysis, see \cite{KoBook}.
Using second-order polynomials \cite{KawamuraCook} 
and parametrised spaces \cite{NeumannSteinberg18},
the notion of ``maximal partial decidability in polynomial time'' can be defined in great generality.
For our present purpose, an ad-hoc definition will suffice.
We restrict ourselves to computable metric spaces $X$ such that there exists a function 
$\operatorname{size} \colon X \to \N$
with the property that every point $x \in X$ is the limit of a sequence $(x_{k_n})_n$ with $|x_{k_n} - x| < 2^{-n}$ and
$k_n \leq \operatorname{size}(x) + O(n)$.
This property is closely related to $\sigma$-compactness of $X$.
See \cite{Schroder04, KunkleSchroeder05} for discussions on spaces that admit size-functions of this type.
We tacitly assume in the sequel that all inputs to our algorithms are guaranteed to satisfy a size bound of this form.

The only computable metric spaces we consider in this paper are subspaces of countable sums of finite products of $\R$ or $\C$.
Size functions for these spaces can easily be constructed from size functions for the spaces $\mathbb{K}^n$ where $\mathbb{K} \in \{\R, \C\}$.
For the latter, we may put
\[
    \operatorname{size}\left(\left(x_1,\dots,x_{n}\right)\right) = O\left(n + \sum_{j = 1}^{n} \left\lceil\log_2(|x_j| + 1)\right\rceil\right).
\]

    Let $X$ be a computable metric space admitting a size function as above.
    Let $A \subseteq X$ be a subset of $X$.    
    We say that $A$ is maximally partially decidable in polynomial time if there exist
    a polynomial $P \in \N[x]$
    and a maximal partial algorithm for deciding $A$ 
    such that given a name of a point $x \in X \setminus \partial A$,
    the algorithm halts within at most 
    \[P\left(\operatorname{size}(x) - \log \left(\min\left\{d(x,\partial A), 1\right\}\right)\right)\]
    steps
    (with the convention that $\min\{d(x,\partial A), 1\} = 1$ if $\partial A = \emptyset$).
Observe that we recover the definition of polynomial-time decidability of subsets of $\N$ by interpreting $\N$ as a computable metric space with the discrete metric.

For 
$\mathbb{K} \in \left\{\R, \C\right\}$
and $n \geq 0$,
write $C_n(\mathbb{K}) = \mathbb{K}^n \times \mathbb{K}^n$.
Let 
$\sem{\cdot} \colon \sum_{n \in \N} C_n(\mathbb{K}) \to C(\R,\mathbb{K})$
be the function that maps $(c, u) \in C_n(\mathbb{K})$ to the unique solution $f \colon \R \to \mathbb{K}$ 
of the Cauchy problem \eqref{eq: differential equation} 
with coefficient vector $c$ and initial values $u$.
By convention, if $n = 0$, then the single element of $C_n(\mathbb{K})$ represents the constant zero function.
The following is our main result:

\begin{Theorem}\label{Theorem: main theorem}
    The Ultimate Inequality Problem is maximally partially decidable in polynomial time.
    More precisely, given $(c, u, d, v) \in C_n(\R) \times C_m(\R)$ we can maximally partially decide in polynomial time if 
    $\sem{(c,u)}(t) \geq \sem{(d,v)}(t)$
    for all sufficiently large $t$.
\end{Theorem}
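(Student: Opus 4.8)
The plan is to reduce the problem to an analysis of the asymptotically dominant term of the difference function $h := \sem{(c,u)} - \sem{(d,v)}$, which is again C-finite with characteristic polynomial dividing $\chi_c\chi_d$ (the product of the characteristic polynomials associated with $c$ and $d$), so that $f \geq g$ for all large $t$ if and only if $h \geq 0$ for all large $t$. Writing $h$ in exponential-polynomial form $h(t) = \sum_\lambda P_\lambda(t)e^{\lambda t}$, its eventual behaviour is governed by the roots $\lambda$ of largest real part $\rho$ that actually occur in $h$, and among those by the ones with $P_\lambda$ of maximal degree $D$; one has $h(t) = t^D e^{\rho t}\left(\Phi(t) + o(1)\right)$, where $\Phi$ is a real trigonometric polynomial assembled from the leading terms of the dominant $P_\lambda e^{i\operatorname{Im}\lambda\, t}$. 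Then $h$ is eventually positive if $\inf_t \Phi(t) > 0$, eventually negative if $\sup_t \Phi(t) < 0$, and changes sign at arbitrarily large times if $\inf_t\Phi(t) < 0 < \sup_t\Phi(t)$ (using equidistribution on the frequency torus for the latter two). The degenerate situations $\inf_t\Phi = 0$ and $\sup_t\Phi = 0$, together with $h \equiv 0$, are the analytic sources of boundary instances.

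Next I would pin down the topological boundary $\partial A$ in terms of the root structure of $\chi_c$ and $\chi_d$ and the coefficients of $h$. The key observation is that a generic small perturbation of $(c,u,d,v)$ splits any cluster of roots of largest real part apart, isolating a single conjugate pair or a single simple real root as the strictly dominant term; hence $x$ is a robust ``yes'' instance precisely when the rightmost root of $\chi_c\chi_d$ is a simple real root, contributed by exactly one of $\chi_c,\chi_d$, separated in real part from every other root, and occurring in $h$ with a coefficient bounded away from zero of the sign that makes the contribution positive; and $x$ is a robust ``no'' instance precisely when the dominant structure is, stably, either a negative simple real exponential (so $h$ is eventually negative) or an oscillatory/negative-mean configuration that forces sign changes and cannot be perturbed to expose a positive real root. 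Everything else — multiple roots or real-part ties in the dominant block, vanishing or sign-ambiguous dominant coefficients, and the $\inf_t\Phi = 0$ cases — lies in $\partial A$. A pleasant consequence I would emphasise is that the exact $\mathbb{Z}$-linear relations among the dominant frequencies never need to be determined: on a robust instance the sign of $\inf_t\Phi$ is already forced by the mean of $\Phi$ (the coefficient of the real dominant root, if any) together with the mere non-constancy of $\Phi$, so the delicate commensurability questions arise only on $\partial A$, where the algorithm is permitted to diverge.

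The algorithm then reads its input to precision $2^{-N}$ for $N = 1, 2, \dots$ and, at each stage, (i) forms an interval enclosure of $\chi_c\chi_d$, (ii) rigorously isolates and approximates its roots, (iii) identifies the candidate rightmost real part, the block of roots near it, and the relevant dominant coefficients — computed directly as residues $P(\rho)/\chi'(\rho)$ rather than by solving a confluent Vandermonde system — together with a crude bound on the remaining terms, and (iv) checks whether one of the finitely many ``yes''/``no'' certificate patterns from the previous paragraph is verified with explicit margins; if so it outputs the corresponding answer, otherwise it increments $N$. Correctness is immediate because every certificate is sound, so halting — whether on a robust or a boundary instance — always yields the right answer; and the work at stage $N$ is polynomial in $N$ and the size of the current approximation, using fast polynomial arithmetic, rigorous root approximation, and the evaluation of finitely many sign conditions.

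The main obstacle, and the heart of the proof, is the quantitative analysis yielding the polynomial time bound. One must show that for $x \in X \setminus \partial A$ every quantity the certificate depends on is controlled by $\operatorname{size}(x) - \log d(x,\partial A)$: the real-part gap below the dominant block, the root separation inside it, and the margin of the dominant coefficient are all at least $2^{-\operatorname{poly}}$ — because driving any of them to zero produces a configuration in $\partial A$ — while the magnitudes of all coefficients and the threshold time $T$ beyond which $t^De^{\rho t}\Phi(t)$ dominates the $o(1)$ tail are at most $2^{\operatorname{poly}}$; the latter requires the slightly subtle point that clustering among the \emph{non}-dominant roots does not inflate the tail, since there the exploding residues cancel in near-conjugate pairs. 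Assembling these estimates with standard effective root-separation and root-approximation bounds shows that a certificate is found, and the answer output, within $\operatorname{poly}\!\left(\operatorname{size}(x) - \log d(x,\partial A)\right)$ steps. A secondary but genuinely fiddly task is to carry the whole argument out in the input coordinates $(c,u,d,v)$ rather than in the non-faithfully perturbed coordinates of $h$, in particular treating the small-order cases and the case where $\chi_c$ and $\chi_d$ share roots, as the footnote to the definition of the problem warns.
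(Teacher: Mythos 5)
Your high-level architecture --- iterate over precisions $N$, look for sound ``yes''/``no'' certificates built from the dominant root structure of $h = \sem{(c,u)} - \sem{(d,v)}$, get correctness for free from soundness, and get the complexity bound from ``failure to certify implies proximity to $\partial A$'' --- is the same as the paper's. The gap is in the step you yourself call the heart of the proof. Your justification of the quantitative bound is circular: arguing that the dominant-coefficient margin is at least $2^{-\operatorname{poly}}$ ``because driving it to zero produces a configuration in $\partial A$'' only shows that a \emph{vanishing} margin is a boundary instance. For the running-time bound you must show that a \emph{small} margin can be driven to zero, and indeed to either sign, by an explicitly bounded perturbation of the input $(c,u,d,v)$ itself --- and since, as you note, the map from $(c,u,d,v)$ to the coefficients $P_\lambda$ is discontinuous, smallness of a computed coefficient says nothing about $d(x,\partial A)$ without such a construction. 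The paper supplies exactly two of these and they constitute most of its technical content: (i) Lemma \ref{Lemma: approximate equality perturbation}, showing that if the first $n+m$ terms of the two recurrences nearly agree then an explicitly bounded perturbation makes the functions exactly equal (the paper stresses that no obvious construction achieves this; the proof needs the bipartite-matching/gcd induction with recursive error estimates) --- this is what turns failure of the ``dominant complex root with nonzero coefficient'' certificate into proximity to the boundary; and (ii) Theorem \ref{Theorem: coefficient functions}, showing that the numerator $G_{m_1,n}$ of the leading-coefficient function is a polynomial in the roots, linear in $u$ with coefficient $\pm 1$ on $u_{n-1}$, so that the sign of the dominant coefficient can be flipped by perturbing $u_{n-1}$ alone by an amount comparable to the observed margin. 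Neither construction, nor a substitute, appears in your proposal.

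Two further points would break your algorithm as described. Computing dominant coefficients ``as residues $P(\rho)/\chi'(\rho)$'' works only for a simple, well-separated dominant root; in the clustered case --- unavoidable, since one cannot decide whether two approximated roots coincide --- $\chi'(\rho)$ is tiny, the individual residues blow up and cancel (between \emph{nearby} roots, not specifically conjugate pairs), and neither the value nor the sign of the true coefficient is certified. The paper's evaluation of $G_{\ell,n}$ for every candidate multiplicity $\ell = 1,\dots,m_1$ of the dominant near-real cluster exists precisely to certify that sign without resolving multiplicities; you need something equivalent. Second, your characterisation of robust ``yes'' instances as requiring the dominant real root to be ``contributed by exactly one of $\chi_c,\chi_d$'' is wrong: a simple real root shared by $\chi_c$ and $\chi_d$, with the coefficient in $f$ robustly positive and the coefficient in $g$ robustly negative, is a robust ``yes'' instance (this is the case handled by Step 3.10 of the paper's algorithm, and is the reason one cannot simply pass to the difference function). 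These are fixable, but together with the missing perturbation constructions they mean the polynomial time bound is asserted rather than proved.
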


As part of the proof of Theorem \ref{Theorem: main theorem}, we establish the following result of independent interest:

\begin{Theorem}\label{Theorem: equality checking}
    Equality comparison is  maximally partially decidable in polynomial time.
    More precisely, given $(c, u, d, v) \in C_n(\C) \times C_m(\C)$ we can maximally partially decide in polynomial time if 
    $\sem{(c,u)} \neq \sem{(d,v)}$.
    % Further, the set of non-robust instances of this problem has measure zero.
\end{Theorem}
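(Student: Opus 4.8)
The plan is to reduce equality comparison to the problem of deciding whether a single C-finite function vanishes identically, to analyse the boundary of the resulting set, and then to establish an effective {\L}ojasiewicz-type inequality that powers the running-time bound.

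Given $x = (c,u,d,v) \in C_n(\C)\times C_m(\C)$, write $f = \sem{(c,u)}$, $g = \sem{(d,v)}$, and $h = f - g$. Since $h$ is annihilated by the differential operator with characteristic polynomial $\chi_c\chi_d$ (where $\chi_c(s) = s^n + c_{n-1}s^{n-1} + \dots + c_0$), it satisfies a homogeneous linear Cauchy problem of order $N := n + m$, so $f = g$ if and only if $h^{(k)}(0) = 0$ for $k = 0,\dots,N-1$. Each $h^{(k)}(0) = f^{(k)}(0) - g^{(k)}(0)$ is a fixed polynomial in the entries of $x$: for small $k$ it is $u_k$, $-v_k$, or $u_k - v_k$, and for larger $k$ one obtains it from the recurrences $f^{(n+j)}(0) = -\sum_{i=0}^{n-1} c_i f^{(i+j)}(0)$ and the analogue for $g$; these polynomials have degree and coefficient size bounded polynomially in $\size{x}$, so each $h^{(k)}(0)$ can be approximated to within $2^{-t}$ in time polynomial in $t$ and $\size{x}$. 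Consequently, on every component with $(n,m)\neq(0,0)$ the set $A = \{\,\sem{(c,u)}\neq\sem{(d,v)}\,\}$ coincides with the open set $\bigcup_{k<N}\{h^{(k)}(0)\neq 0\}$, whereas the single point of the component with $n = m = 0$ represents the zero function on both sides.

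Next I would pin down $\partial A$. The set $A$ is open by the description just given, and $h(0)$ — which is $u_0$, $-v_0$, $u_0 - v_0$, or, on the $n=m=0$ component, the constant $0$ — fails to vanish identically on every component other than the last. Hence on each non-trivial component the complement $\{f = g\}$ is a proper, hence nowhere dense, algebraic subvariety, so there $\partial A = \{f = g\}$; and the single point of the $n=m=0$ component is isolated and lies in $\{f=g\}$, hence is a robust instance. Thus the robust instances of $A$ are exactly $A$ together with that one point. The maximal partial algorithm on input $x$ therefore first checks whether $n = m = 0$, in which case it halts and reports $f = g$; otherwise it dovetails, computing approximations to all the numbers $h^{(k)}(0)$, $k < N$, to increasing precision and halting with the answer $f\neq g$ as soon as one of them is certified nonzero. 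Partial correctness is immediate. Maximality and the running-time bound both follow once we prove an effective {\L}ojasiewicz-type inequality
\[
    \max_{k<N}\left|h^{(k)}(0)\right|\ \geq\ 2^{-P_1(\size{x})}\cdot\min\{\,d(x,\partial A),1\,\}^{P_2(N)}
\]
for suitable polynomials $P_1, P_2$; granting this, the witnessing $h^{(k)}(0)$ is certified nonzero after reading and processing only polynomially (in $\size{x} - \log\min\{d(x,\partial A),1\}$) many bits.

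The hard part is this inequality. The approach I would take is to re-express the equality locus through resultants. Passing to Laplace transforms, $\widehat{h}(s) = q(s)/(\chi_c\chi_d)(s)$ with $q = N_f\,\chi_d - N_g\,\chi_c$, where $N_f$ and $N_g$ are the explicit numerator polynomials of $\widehat f$ and $\widehat g$ — linear in $u$ and in $v$ respectively; the coefficient vector of $q$ is obtained from $(h^{(k)}(0))_{k<N}$ by an invertible unipotent-triangular linear map with entries polynomial in $(c,d)$, so it suffices to bound the coefficients of $q$ from below. Now $q = M_{c,d}(N_f,N_g)$ where $M_{c,d}\colon \C[s]_{<n}\times\C[s]_{<m}\to\C[s]_{<N}$, $(\phi,\psi)\mapsto\phi\chi_d - \psi\chi_c$, is the Sylvester map of $\chi_c$ and $\chi_d$, with $\det M_{c,d} = \pm\operatorname{Res}(\chi_c,\chi_d)$, while $N_f = L_c(u)$ and $N_g = L_d(v)$ for explicit unipotent-triangular isomorphisms $L_c, L_d$. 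When $|\operatorname{Res}(\chi_c,\chi_d)|$ is bounded below, $M_{c,d}$ is invertible with polynomially bounded inverse (its entries are cofactors of the Sylvester matrix divided by the resultant), and the inequality follows with exponent $1$: near such an $x$, $\{f=g\}$ meets a neighbourhood only in the linear subspace $\{u = v = 0\}$. When $|\operatorname{Res}(\chi_c,\chi_d)|$ is small, $(c,d)$ is close to the resultant hypersurface; one perturbs $(c,d)$ by a controlled amount to a pair whose greatest common divisor has positive degree and must then bound the smallest nonzero singular value of the resulting rank-deficient Sylvester map — equivalently, control how far $(u,v)$ must move to land in its kernel. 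Obtaining this bound uniformly over the degenerate stratum, for instance by way of subresultants and an induction on $n + m$ that divides out the greatest common divisor, is the delicate heart of the argument, and the same analysis applied directly to $f - g$ is what will be needed in the proof of Theorem \ref{Theorem: main theorem}.
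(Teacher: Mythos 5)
Your algorithm and reduction coincide with the paper's: both compute the first $N = n+m$ derivatives of $h = f - g$ at $0$ (equivalently, the first $N$ terms of the associated recurrences) and halt once one is certified nonzero, and your identification of the boundary instances is correct. The entire difficulty of the theorem, however, is the quantitative statement you label ``the delicate heart of the argument'' and do not prove: that $\max_{k<N}|h^{(k)}(0)|$ is bounded below by $2^{-P_1(\size{x})}\min\{d(x,\partial A),1\}^{P_2(N)}$ with $P_2$ of polynomial (the paper achieves linear, namely $4\max\{n,m\}-2$) growth in $N$. Your resultant/Sylvester analysis disposes only of the easy non-degenerate case where $|\operatorname{Res}(\chi_c,\chi_d)|$ is bounded below; the case where $(c,d)$ lies near the resultant hypersurface — precisely the case where the degree of $\gcd(\chi_c,\chi_d)$ can jump and the inverse of the Sylvester map blows up — is left open. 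Note also that you cannot fall back on a generic effective {\L}ojasiewicz inequality here: the known general bounds have exponents growing exponentially in the number of variables, which would ruin the polynomial running time, so the specific structure of the problem must be exploited.

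The paper fills exactly this gap with Lemma \ref{Lemma: approximate equality perturbation} (the contrapositive of your inequality, stated as a perturbation bound in the spectral distance, then transferred to the Euclidean distance via Proposition \ref{Proposition: spectral distance}). Its proof is not a singular-value analysis of the Sylvester map but an elementary scheme you would still need to reproduce in some form: partition the roots of $\chi_c$ and $\chi_d$ by a maximal bipartite matching into pairs at distance $\leq\eta=\varepsilon^{1/(4n-2)}$ and a remainder of pairwise $\eta$-separated roots; establish quantitative versions of the statements ``a recurrence for $\chi$ yields one for $(z-\lambda)\chi$'' and ``recurrences for $(z-\alpha)\chi$ and $(z-\beta)\chi$ with $\alpha\neq\beta$ yield one for $\chi$, at the cost of a factor $|\alpha-\beta|^{-1}$'' (Lemmas \ref{Lemma: adding eigenvalue} and \ref{Lemma: eliminating non-shared eigenvalues}); and run a two-parameter recursion (Lemma \ref{Lemma: estimating E(p, q)}) to show the finite sequence $(u_j)$ approximately satisfies the recurrence for an approximate $\gcd$, from which the small exact perturbation is built (Lemma \ref{Lemma: perturbing equation to simpler equation}). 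Until you supply an argument of comparable strength for the degenerate stratum — whether via subresultants as you suggest, or otherwise — the maximality of the halting set and the polynomial running-time bound remain unproved.
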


It can be proved essentially as in \cite[Proposition 10.1]{DPLSR20} that the sets of boundary instances of both problems above have measure zero,
so that our algorithms halt on almost every input.

\section{Proof Outline}

Let us briefly outline the proof of Theorem \ref{Theorem: main theorem}.
For the purpose of exposition, consider the special case of the Ultimate Positivity Problem, \textit{i.e.}, the special case of Ultimate Inequality where the second function is identically equal to zero\footnote{Formally speaking, this is the special case where the second input is an element of $C_0(\R)$, so that we know for certain that the input represents the constant zero function.}.

The polynomial
% \begin{equation}\label{eq: characteristic polynomial}
\[
    \chi_{c}(z) = z^n + c_{n - 1}z^{n - 1} + \dots + c_1 z + c_0
\]
is called the characteristic polynomial of (the differential equation with coefficient vector) $c$.
The roots of $\chi_c$ are also called the \emph{characteristic roots} of (the differential equation with coefficient vector) $c$.
Write $\sigma_c \subseteq \C$ for the set of all roots of $\chi_c$.
The Cauchy problem \eqref{eq: differential equation} has a unique solution $\sem{(c,u)} \colon \R \to \C$, which has the shape 
\begin{equation}\label{eq: exponential polynomial with indexes}
    \sem{(c,u)}(t) = 
    \sum_{\substack{\lambda \in \sigma_c}} P_{\lambda}(c, u, t) e^{\lambda t} =
    \sum_{j = 1}^{N} \sum_{k = 0}^{m_j - 1} a_{j, k} t^k e^{\lambda_j t}.
\end{equation}
The representation \eqref{eq: exponential polynomial with indexes} is also called the \emph{exponential polynomial solution} of \eqref{eq: differential equation}.

A result by Bell and Gerhold \cite{BellGerhold07} asserts that non-zero linear recurrence sequences without positive real characteristic roots admit positive and negative values infinitely often. 
As an immediate consequence, one obtains an analogous result for linear differential equations (see also the proof of \cite[Theorem 12]{BellEtAl10}):

\begin{Corollary}\label{Corollary: C-finite function without real characteristic roots}
    Let $f$ satisfy a linear differential equation \eqref{eq: differential equation} where $\chi_c$ has no real roots.
    Then either $f$ is identically zero, or there exist unbounded sequences 
    $(t_j)_j$ and $(s_j)_j$ of positive real numbers with 
    $f(t_j) < 0$
    and 
    $f(s_j) > 0$.
\end{Corollary}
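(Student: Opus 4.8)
The plan is to reduce the statement to the quoted result of Bell and Gerhold for linear recurrence sequences by \emph{sampling} $f$ along an arithmetic progression of time points $\ell h$, $\ell \in \N$, for a carefully chosen step size $h > 0$.

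First, assume $f \not\equiv 0$ (otherwise there is nothing to prove). Then $f$ is a non-zero real-analytic function, so its zero set is discrete and in particular $\{h > 0 : f(h) = 0\}$ is countable. Since by hypothesis $\chi_c$ has no real root, every characteristic root $\lambda_j$ occurring in the exponential polynomial representation \eqref{eq: exponential polynomial with indexes} of $f$ has $\Im \lambda_j \neq 0$, so for each $j$ the set $\{h > 0 : e^{\lambda_j h} \in \R_{> 0}\} = \{h > 0 : h\,\Im \lambda_j \in 2\pi\Z\}$ is discrete, hence countable. I would pick $h > 0$ avoiding the countable union of all these exceptional sets.

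Next, set $u_\ell := f(\ell h)$. Substituting $t = \ell h$ into \eqref{eq: exponential polynomial with indexes} gives
\[
    u_\ell = \sum_{j = 1}^{N} \sum_{k = 0}^{m_j - 1} a_{j,k}\, h^k\, \ell^k \left(e^{\lambda_j h}\right)^{\ell},
\]
so $(u_\ell)_\ell$ is a linear recurrence sequence. It is real-valued because $f$ is, it is non-zero because $u_1 = f(h) \neq 0$, and its characteristic roots are among $e^{\lambda_1 h}, \dots, e^{\lambda_N h}$, none of which is a positive real by the choice of $h$; hence it has no positive real characteristic root. The Bell--Gerhold result then yields infinitely many $\ell$ with $u_\ell < 0$ and infinitely many with $u_\ell > 0$; enumerating these two index sets in increasing order and scaling by $h$ produces the desired unbounded sequences $(t_j)_j$ and $(s_j)_j$ of positive real numbers with $f(t_j) < 0$ and $f(s_j) > 0$.

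The only delicate point is the choice of $h$: one must ensure at once that the sampled sequence is not identically zero and that sampling introduces no spurious positive real characteristic root (equivalently, that no $e^{\lambda_j h}$ is a positive real number). Both constraints exclude only countably many values of $h$, so a valid choice exists, and the rest is a routine substitution. Alternatively, one could argue directly along the lines of the proof of \cite[Theorem 12]{BellEtAl10}, but the reduction above seems to be the shortest route.
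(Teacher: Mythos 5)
Your proof is correct and follows essentially the route the paper intends: the corollary is presented there as an immediate consequence of the Bell--Gerhold theorem via precisely this discretisation $u_\ell = f(\ell h)$, with the step size $h$ chosen outside the countable set of values for which $f(h)=0$ or some $e^{\lambda_j h}$ lands on the positive real axis (cf.\ the cited proof of Theorem 12 in \cite{BellEtAl10}). Your handling of the two delicate points -- non-vanishing of the sampled sequence and absence of spurious positive real characteristic roots, including possible collisions $e^{\lambda_j h}=e^{\lambda_k h}$ -- is sound, so there is nothing to add.
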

% \begin{proof}
%     Write 
%     $
%         f(t) = 
%             \sum_{\ell = 0}^m 
%                 P_\ell(t) e^{\lambda t} 
%     $
%     for polynomials $P_{\ell}(t) \in \C[t]$.

%     For $N \in \N$, consider the sequence $u_j = f(j/N)$.
%     We have 
%     $ 
%         u_j =
%             \sum_{\ell = 0}^m 
%                         P_\ell(j/N) e^{\lambda/N j}
%     $.
%     Thus, the sequence $(u_j)_j$ satisfies a linear recurrence equation with characteristic roots 
%     $e^{\lambda/N}$, where $\lambda$ is a characteristic root of $f$.
%     By assumption, the characteristic roots $\lambda_1,\dots,\lambda_m$ of $f$ all have non-zero imaginary part.
%     The number $e^{\lambda/N}$ is a positive real number if and only if the imaginary part of 
%     $\lambda/N$
%     is an integer multiple of $2\pi$.
%     By choosing $N$ large enough, we can ensure that none of the numbers $\lambda/N$ have this property.
%     Then $(u_j)_j$ has no positive real characteristic roots.
%     The claim follows from Theorem \ref{Theorem: LRSs without positive real roots}.
% \end{proof}

It is well-known \cite[p. 117]{KoBook} that the roots of a polynomial can be computed in polynomial time in the sense described above\footnote{This can be deduced from the many results on efficient root finding for polynomials with rational coefficients, starting with the famous result of Schönhage \cite{Schoenhage}.}:
There exists a polynomial-time algorithm which takes as input a vector 
$\left(c_0,\dots,c_{n - 1}\right) \in \C^n$ 
of complex numbers and returns as output a vector 
$\left(z_0,\dots,z_{n - 1}\right) \in \C^n$
of complex numbers such that $(z_0,\dots,z_{n - 1})$ contains all roots of the polynomial 
$\chi_c(z)$,
counted with multiplicity.
Any such algorithm is necessarily multi-valued, which means that it is allowed to return different outputs
$(z_0,\dots,z_{n - 1})$
for different names of the same input 
$(c_0,\dots,c_{n - 1})$.
Out course, any two valid outputs for the same input agree up to permutation. 

It is relatively easy to see that $\sem{(c, u)}$ is robustly eventually positive if and only if $\chi_c$ has a simple real root $\rho$ with 
$\rho > \Re(\lambda)$ for all other roots such that the coefficient of $\rho$ in the exponential polynomial solution \eqref{eq: exponential polynomial with indexes} is strictly positive. 
We will show that we can detect this situation and in this case compute the sign of the coefficient in polynomial time.

The case where $\sem{(c, u)}$ robustly fails to be eventually positive is much more difficult.
Intuitively, this can happen for two (potentially overlapping) reasons:
    \begin{enumerate}[A]
        \item\label{Reason A} The polynomial $\chi_c$ has a root $\lambda$ with non-zero imaginary part such that 
            $\Re(\lambda) > \rho$ for all real roots $\rho$ of $\chi_c$
            and 
            the coefficient $P_{\lambda}(c, u, t)$ in \eqref{eq: exponential polynomial with indexes} is non-zero.
            In this case, eventual inequality fails due to Corollary \ref{Corollary: C-finite function without real characteristic roots}.
        \item\label{Reason B}
            The polynomial $\chi_c$ has real roots
            and 
            the leading coefficient of the polynomial $P_{\rho}(c, u, t)$ in \eqref{eq: exponential polynomial with indexes} corresponding to the largest real root $\rho$
            is strictly negative, robustly under small perturbations of $(c, u)$.
    \end{enumerate}
    
    There is no obvious reason why either of these properties should be semi-decidable, since the polynomials
    $P_{\lambda}(c, u, t)$
    do not depend continuously on $c$.
    Indeed, the degree of $P_{\lambda}(c, u, t)$ depends on the multiplicity of the roots of $\chi_c$, which 
    is unstable under small perturbations.
    When the multiplicities of the roots are fixed, the polynomials $P_{\lambda}$ depend computably on the initial data.
    The function which sends the initial data to the coefficients of the $P_{\lambda}$'s subject to fixed multiplicities is unbounded near the boundary of its domain.
    This implies that if $\lambda$ has multiplicity $> 2$, then the coefficient of $P_{\lambda}$ will ``jump discontinuously'' from a finite value to an arbitrarily large value under arbitrarily small perturbations of the initial data.
    
    In order to verify Condition \eqref{Reason A} we approximate the roots of $\chi_c$ to finite error $2^{-N}$.
    We can then identify roots $\lambda_1,\dots,\lambda_m$ that are guaranteed to have non-zero imaginary part and real part larger than any real root.
    To verify that one of the coefficients $P_{\lambda_j}(c, u, t)$ does not vanish, we construct the differential equation $c'$ with 
    $\chi_{c'}(z) = \chi_c / \prod_{j=1}^m(z - \lambda_j)$
    and check that the functions $\sem{(c,u)}$ and $\sem{(c', u)}$ are different (to some finite accuracy).
    In case this check does not succeed, we start over with increased accuracy.

In order to verify Condition \eqref{Reason B}, observe that we can compute arbitrarily good upper bounds on the multiplicities of the roots.
If we fix multiplicities $m_1,\dots,m_N$ with $m_1 + \dots + m_N = n$, then the coefficients $P_{\lambda}(c, u, t)$ depend continuously on $c$ and $u$.
The function 
$F_{m_1,\dots,m_N}(\lambda_1,\dots,\lambda_N,u_0,\dots,u_{n - 1})$ 
which sends a vector of roots $(\lambda_1,\dots,\lambda_N)$ and initial values $(u_0,\dots,u_{n - 1})$ to the leading coefficient
of the polynomial $P_{\lambda_1}(c, u, t)$, where $\chi_c = \prod_{j = 1}^N(z - \lambda_j)^{m_j}$,
is rational.
It has singularities at all points where $\lambda_j = \lambda_1$ for $j > 1$.
We show in Theorem \ref{Theorem: coefficient functions} below that 
$
F_{m_1,\dots,m_N}(\lambda_1,\dots,\lambda_N,u_0,\dots,u_{n - 1})
=
F_{m_1,1,\dots,1}(\lambda_1, \lambda_2,\dots,\lambda_2,\dots,\lambda_N,\dots,\lambda_N,u_0,\dots,u_{n - 1})$
and 
\[
    F_{m_1,1,\dots,1}(\lambda_1,\lambda_{m_1 + 1},\dots,\lambda_n,u_0,\dots,u_{n - 1}) = 
        \frac{G_{m_1,n}(\lambda_1,\lambda_{m_1 + 1},\dots,\lambda_n,u_0,\dots,u_{n - 1})}
             {(m_1 - 1)! \prod_{j = m_1 + 1}^n (\lambda_j - \lambda_1)}
\]
where 
$
G_{m_1,n}(\lambda_1,\lambda_{m_1 + 1},\dots,\lambda_n,u_0,\dots,u_{n - 1})
$
depends linearly on $u_0,\dots,u_{n - 1}$, the coefficient of $u_{n - 1}$ being $(-1)^{n + m_1}$. 
These results are obtained by studying the cofactor expansions of generalised Vandermonde matrices.
% To semi-decide \eqref{Reason B}, we compute the 
Now, to verify Condition \eqref{Reason B}, we first approximate the characteristic roots to error $2^{-N}$ for some integer $N$.
We identify the largest ``potentially real'' root $\rho$, \textit{i.e.} the root with largest real part among those roots whose imaginary part has absolute value less than $2^{-N}$.
We compute an upper bound on the multiplicity $m_1$ of this root.
We then evaluate the ``potential numerators'' 
$\Re G_{\ell,n}(\rho,\rho,\dots,\rho,\lambda_{m_1 + 1},\dots,\lambda_n, u)$ 
for $\ell = 1,\dots,m_1$
as far as the accuracy of the root approximations allows,
to check if the sign of the real part of the leading coefficient of $P_{\rho}(c, u, t)$ is guaranteed to be negative. 
It is possible for this check to succeed while $\rho$ is not a real root.
However, in that case $\Re \rho$ is larger than all real roots and $P_{\rho}(c, u, t)$ is non-zero, so that Condition \eqref{Reason A} is met.
If the check does not succeed, we start over, increasing the accuracy to $N + 1$.

In the full algorithm, we run all three of the above searches simultaneously while increasing the accuracy to which we approximate the roots.
To establish polynomial running time, we show that if our searches do not succeed after ``many'' steps, then there exists a ``small'' perturbation of the problem which is eventually positive, and another ``small'' perturbation which is not eventually positive.
This is made possible mainly due to the special shape of the polynomials
$
G_{m_1,n}
$.
If the search to verify Condition \eqref{Reason A} does not halt after ``many'' steps, the coefficients $P_{\lambda}(c, u, t)$ of all roots with ``large'' imaginary part whose real part is larger than the real part of the largest root with ``small'' imaginary part can be made zero by a ``small'' perturbation\footnote{Observe that the coefficients $P_{\lambda}(c, u, t)$ themselves are not necessarily small, since large coefficients for close characteristic roots could cancel each other out. In this case, a small perturbation of $(c, u)$ can induce a large perturbation of $P_{\lambda}(c, u, t)$. It is not difficult to construct instances where this is the case.} of $(c, u)$. 
The argument to establish this is elementary, but less obvious than one might expect (at least in the general case of Ultimate Inequality).
The same argument yields a proof of Theorem \ref{Theorem: equality checking}.

% Formally, Cauchy problems that define C-finite functions with values in $\mathbb{K}$ are points in the computable metric space 
% $\sum_{n \in \N} \mathbb{K}^{2n}$
% where $\mathbb{K}^{2n}$ is given the $\ell^{\infty}$-norm and $d((k, x), (\ell, y)) = 1$ if $k \neq \ell$.
% The choice of norm is essentially arbitrary, as long as the norm is polynomial-time computable
% For perturbation arguments it will be convenient to work with a different distance function.

In perturbation arguments we will almost always perturb the roots of the characteristic polynomial $\chi_c$, rather than perturbing its coefficients directly.
It will therefore be useful to work with a suitable distance function.
Let $(c, u), (c', u') \in \mathbb{K}^{2n}$.
Let $\lambda_1,\dots,\lambda_n \in \C$ denote the roots of $\chi_{c}$, listed with multiplicity.
Let $\mu_1,\dots,\mu_n \in \C$ denote the roots of $\chi_{c'}$, listed with multiplicity.
Define the \emph{spectral distance} $d_{\sigma}(c, c')$ of $c \in \mathbb{K}^n$ and $c' \in \mathbb{K}^n$ as
$
    d_{\sigma}\left(c, c'\right) = \inf_{\pi \in S_n} \max_{j = 1,\dots,n}\left\{\left| \lambda_j - \mu_{\pi(j)} \right|\right\}
$.
The spectral distance $d_{\sigma}((c, u), (c', u'))$ of $(c, u) \in C_n(\mathbb{K})$ and $(c', u') \in C_n(\mathbb{K})$ is then defined as
\[
    d_{\sigma}\left((c, u), (c', u')\right) = \norm{u - u'}_{\infty} + d_{\sigma}\left(c, c'\right).
\]

% The next proposition shows that the negative logarithm of the spectral distance between two problem instances is bounded polynomially in the negative logarithm of their ordinary distance. Consequently, if the running time of an algorithm is bounded polynomially in the spectral distance to the set of boundary instances, then it is bounded polynomially in the ordinary distance. 
The next proposition, proved in Appendix \ref{Appendix: Spectral Distance}, shows that the running time of an algorithm is bounded polynomially in the spectral distance to the set of boundary instances, then it is bounded polynomially in the ordinary distance. 

\begin{Proposition}\label{Proposition: spectral distance}
    Let $c, c' \in \mathbb{K}^{n}$ with $c \neq c'$.
    Then we have:
    \[
        - \log d_{\sigma}(c,c')
        \leq 
        - \log d(c,c')  + 2n \log (n) +  2n \log (\max\{\norm{c}_{\infty}, \norm{c'}_\infty\}).
    \]
\end{Proposition}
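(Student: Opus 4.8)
The plan is to realize the coefficient vector as the vector of elementary symmetric functions of the roots and to control the Lipschitz constant of that map on a polydisc containing all roots of both polynomials. Writing $e_k$ for the $k$-th elementary symmetric polynomial in $n$ variables, the factorization $\chi_c(z)=\prod_{j=1}^n(z-\lambda_j)=\sum_{k=0}^n(-1)^k e_k(\lambda_1,\dots,\lambda_n)\,z^{n-k}$ gives $c_{n-k}=(-1)^k e_k(\lambda_1,\dots,\lambda_n)$ and likewise $c'_{n-k}=(-1)^k e_k(\mu_1,\dots,\mu_n)$, whence $\norm{c-c'}_\infty=\max_{1\le k\le n}\bigl|e_k(\lambda_1,\dots,\lambda_n)-e_k(\mu_1,\dots,\mu_n)\bigr|$. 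The first step is to fix an optimal matching of the root multisets: since $S_n$ is finite, the infimum defining $d_\sigma(c,c')$ is attained, so after relabelling $\mu_1,\dots,\mu_n$ by a minimizing permutation we may assume $\max_j|\lambda_j-\mu_j|=d_\sigma(c,c')$; this relabelling leaves the multiset $\{\mu_j\}$, hence each $e_k(\mu_1,\dots,\mu_n)$, unchanged.

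Next I would estimate $|e_k(\lambda)-e_k(\mu)|$ by telescoping one coordinate at a time. Writing $\nu^{(i)}=(\mu_1,\dots,\mu_i,\lambda_{i+1},\dots,\lambda_n)$, so that $\nu^{(0)}=\lambda$ and $\nu^{(n)}=\mu$, we have $e_k(\lambda)-e_k(\mu)=\sum_{i=1}^n\bigl(e_k(\nu^{(i-1)})-e_k(\nu^{(i)})\bigr)$, and since $e_k$ has degree at most $1$ in each variable, with $e_k(\dots,x,\dots)=x\,e_{k-1}(\text{the remaining variables})+e_k(\text{the remaining variables})$, the $i$-th summand equals $(\lambda_i-\mu_i)\,e_{k-1}(\mu_1,\dots,\mu_{i-1},\lambda_{i+1},\dots,\lambda_n)$; it is immaterial here that these intermediate tuples need not be root sets of real polynomials when $\mathbb{K}=\R$. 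To bound the surviving factor I would invoke Cauchy's a priori root bound: every root of $\chi_c$ and of $\chi_{c'}$ has modulus at most $R:=1+\max\{\norm{c}_\infty,\norm{c'}_\infty\}$. Combined with the crude estimate $|e_{k-1}(w_1,\dots,w_{n-1})|\le\binom{n-1}{k-1}\max_i|w_i|^{k-1}\le 2^{n-1}R^{n-1}$, this yields $|e_k(\lambda)-e_k(\mu)|\le n\cdot 2^{n-1}R^{n-1}\,d_\sigma(c,c')$ for every $k$, hence $\norm{c-c'}_\infty\le n\cdot 2^{n-1}R^{n-1}\,d_\sigma(c,c')$.

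It then remains to take logarithms and tidy up the constants. Using $R\le 2\max\{\norm{c}_\infty,\norm{c'}_\infty,1\}$ and $n\cdot 4^{n-1}\le n^{2n}$, the last inequality rearranges into $-\log d_\sigma(c,c')\le-\log\norm{c-c'}_\infty+2n\log n+2n\log\max\{\norm{c}_\infty,\norm{c'}_\infty\}$ whenever $\max\{\norm{c}_\infty,\norm{c'}_\infty\}\ge 1$, which is the regime relevant to every application (in general one gets the same bound with $\max\{\norm{c}_\infty,\norm{c'}_\infty\}$ replaced by its maximum with $1$). If $d$ denotes the Euclidean rather than the sup-metric on $\mathbb{K}^n$, one loses only an additive $\tfrac{1}{2}\log n$, which is absorbed into the stated bound; and the degenerate case $n=1$, where $\chi_c$ has the single root $-c_0$ and $d_\sigma(c,c')=|c_0-c'_0|=d(c,c')$, is immediate.

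I do not anticipate a real obstacle: both ingredients—Cauchy's root bound and the fact that $e_k$ has degree at most $1$ in each of its variables—are entirely standard, and the only point that genuinely needs care is to run the telescoping through an \emph{optimal} pairing of the two root multisets, so that the Lipschitz constant that emerges is expressed in terms of $d_\sigma(c,c')$ rather than in terms of some arbitrary pairing. The remainder is the bookkeeping of binomial and power factors needed to land on precisely the stated form of the estimate.
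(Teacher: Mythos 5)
Your argument is correct, and it shares the paper's high-level strategy -- attain the infimum defining $d_\sigma$ by a concrete permutation, telescope from the root vector of $\chi_c$ to that of $\chi_{c'}$ one root at a time, and control the resulting increments by an a priori bound on the moduli of all roots -- but it executes the coefficient-extraction step differently. The paper telescopes the \emph{polynomials} $P_j(z)=(z-\lambda_1)\cdots(z-\lambda_j)(z-\mu_{\pi(j+1)})\cdots(z-\mu_{\pi(n)})$, bounds $\sup_{|z|\le 1}|\chi_c(z)-\chi_{c'}(z)|$, and then invokes Bernstein's inequality to convert that sup-norm bound on the unit disk into a bound on the individual coefficients via derivatives at $0$; it also uses the root bound $|\lambda_j|\le n\,\max\{\norm{c}_\infty,\norm{c'}_\infty\}$ from Basu--Pollack--Roy. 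You instead track the coefficients directly through the multilinearity of the elementary symmetric polynomials, so that each telescoping increment is exactly $(\lambda_i-\mu_i)\,e_{k-1}(\cdot)$ and the crude count $\binom{n-1}{k-1}\le 2^{n-1}$ together with Cauchy's root bound $1+\max\{\norm{c}_\infty,\norm{c'}_\infty\}$ finishes the estimate. Your route is more elementary (no Bernstein), yields a somewhat sharper Lipschitz constant ($n\,2^{n-1}R^{n-1}$ versus the paper's $(nC)^{2n}$-type factor), and you correctly flag the two edge cases that the paper glosses over as well: the bound as stated silently assumes $\max\{\norm{c}_\infty,\norm{c'}_\infty\}\ge 1$ (otherwise the last logarithm is negative and one must replace the maximum by its maximum with $1$), and the harmless discrepancy between $d$ and $\norm{\cdot}_\infty$. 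Neither issue is a gap relative to the paper's own treatment.
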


\section{Leading Coefficient Functions}

We now study the leading coefficients of the polynomials $P_{\lambda}(c, u, t)$ in \eqref{eq: exponential polynomial with indexes} as functions of $c$ and $u$.

Let us introduce some useful notation.
We write $[a_q]^{q = 1,\dots,n}$ for the row vector $(a_1,\dots,a_n)$.
We write $[b_p]_{p = 1,\dots,m}$ for the column vector $(b_1,\dots,b_m)^T$.
We write 
$[c_{p,q}]_{p = 1,\dots, m}^{q = 1,\dots, n}$ 
for the $(m \times n)$-matrix with columns $[c_{p,q}]_{p = 1,\dots, m}$, where $q = 1,\dots,n$.
We also allow nesting of this notation. 
For example,
$\left[
    \left[a_{p, q, j}\right]_{p = 1,\dots, m}^{q = 1,\dots,n}
\right]^{j = 1,\dots,\ell}$
denotes the $(m \times (\ell \cdot n))$-matrix whose columns are 
$
    \left[a_{p, 1, 1}\right]_{p = 1,\dots, m}
\dots,
\left[a_{p, m, 1}\right]_{p = 1,\dots, m},
\dots,
\left[a_{p, 1, \ell}\right]_{p = 1,\dots, m},
\dots,
\left[a_{p, m, \ell}\right]_{p = 1,\dots, m}
$.

We write
$\left[[a_{p, q}]_{p = 1,\dots, m}^{q = 1,\dots,n}; [b_{p, q}]_{p = 1,\dots, m}^{q = n + 1, \dots, \ell}\right]$
for the $(m \times \ell)$-matrix whose first $n$ columns are the columns of $[a_{p, q}]_{p = 1,\dots, m}^{q = 1,\dots,n}$
and whose last $\ell - n$ columns are the columns of $[b_{p, q}]_{p = 1,\dots, m}^{q = n + 1, \dots, \ell}$, 
and so on.

% The \emph{generalised Vandermonde block} of dimension $n$ and multiplicity $m$ in the variable $\Lambda$ is the $(n \times m)$-matrix
% \[ 
%     \left[\binom{p - 1}{q - 1}\Lambda^{[p - q]}\right]_{p = 1,\dots,n}^{q = 1,\dots,m}.
% \]
% Observe that in a generalised Vandermonde block, the $q^{\text{th}}$ column for $q > 1$ is the formal derivative of the previous column in $\Lambda$, divided by $q - 1$.

% The \emph{modified generalised Vandermonde block} of dimension $n$ and multiplicity $m$ in the variable $\Lambda$ is the $(n \times m)$-matrix
% \[
%     \left[(q - 1)!\binom{p - 1}{q - 1}\Lambda^{[p - q]}\right]_{p = 1,\dots,n}^{q = 1,\dots,m}.
% \]
% Observe that in a modified generalised Vandermonde block, the $q^{\text{th}}$ column for $q > 1$ is the formal derivative of the previous column in $\Lambda$.

In the sequel it will be convenient to extend the definition of the binomial coefficient $\binom{n}{k}$  to all integers $n, k \in \Z$, by letting $\binom{n}{k} = 0$ if $n$ or $k$ is negative.
For $k \in \Z$, we let $X^{[k]} = X^k$ if $k \geq 0$ and $X^{[k]} = 0$ if $k < 0$.

Let $n \geq 1$ and let $m_1,\dots,m_N$ be positive integers with $m_1 + \dots + m_N = n$.
The \emph{generalised Vandermonde matrix} with signature $(m_1,\dots,m_N)$ is the $(n\times n)$-matrix 
\[
    V_{m_1,\dots,m_N}(\Lambda_1,\dots,\Lambda_N) =
    \left[
        \left[\binom{p - 1}{q - 1}\Lambda_j^{[p - q]}\right]_{p = 1,\dots,n}^{q = 1,\dots,m_1}
    \right]^{j = 1,\dots,N}.
\]
The \emph{modified generalised Vandermonde matrix} with signature $(m_1,\dots,m_N)$ is the $(n\times n)$-matrix 
\[
    \widetilde{V}_{m_1,\dots,m_N}(\Lambda_1,\dots,\Lambda_N) =
    \left[
        \left[(q - 1)! \binom{p - 1}{q - 1}\Lambda_j^{[p - q]}\right]_{p = 1,\dots,n}^{q = 1,\dots,m_1}
    \right]^{j = 1,\dots,N}.
\]

It is easy to see that the coefficients 
$\left[
    \left[a_{1, k}\right]_{k = 0,\dots,m_1 - 1};
    \dots;
    \left[a_{N, k}\right]_{k = 0,\dots,m_N - 1}
\right]$ 
in \eqref{eq: exponential polynomial with indexes} satisfy the linear equation 
\begin{equation}\label{eq: Vandermonde equation}
    \widetilde{V}_{m_1,\dots,m_N}(\lambda_1,\dots,\lambda_N) \cdot \left[ \left[a_{j, k}\right]_{k = 0,\dots,m_1 - 1}\right]_{j = 1,\dots,N} = [u_j]_{j = 0,\dots,n - 1}.
\end{equation}
% Since the coefficients $a_{j,k}$ determine the function in \eqref{eq: exponential polynomial with indexes} uniquely, Equation \eqref{eq: Vandermonde equation} has a unique solution.
% This implies 
% $\det \widetilde{V}_{m_1,\dots,m_N}(\lambda_1,\dots,\lambda_N) \neq 0$
% for all $\lambda_1,\dots,\lambda_N$ with $\lambda_j \neq \lambda_k$ when $j \neq k$.

Equation \eqref{eq: Vandermonde equation} can be solved explicitly by computing the inverse of $\widetilde{V}_{m_1,\dots,m_N}(\lambda_1,\dots,\lambda_N)$
via cofactor expansion.
This motivates the following definition:
\begin{Definition}
    Let $n$ be a positive integer.
    Let $m_1,\dots,m_N$ be positive integers with $m_1 + \dots + m_N = n$.
    The \emph{leading coefficient function with signature $(m_1,\dots,m_N)$} is the function
    \begin{align*}
        &F_{m_1,\dots,m_N} \colon D_{m_1,\dots,m_N} \times \R^n \to \C,\\
        &F_{m_1,\dots,m_N}\left(\lambda_1,\dots,\lambda_N\right)\left(u_0,\dots,u_{n - 1}\right)
        = 
        \frac{\sum_{j = 1}^n (-1)^{j + m_1} \det\left(\widetilde{V}_{m_1,\dots,m_N}(\lambda_1,\dots,\lambda_N)_{j, m_1}\right) u_{j - 1} }
             {\det \widetilde{V}_{m_1,\dots,m_N}\left(\lambda_1,\dots,\lambda_N\right)},
    \end{align*}
    where 
    $
        D_{m_1,\dots,m_N} = \Set{(\lambda_1,\dots,\lambda_N) \in \C^N}
                {\lambda_j \neq \lambda_k \text{ whenever } j \neq k}.
    $
    Here, for a matrix $M$, the matrix $M_{j,k}$ is obtained from $M$ by deleting the $j^{\text{th}}$ row and $k^{\text{th}}$ column.
\end{Definition}
By our previous observation, the function $F_{m_1,\dots,m_N}$ 
% is a well-defined rational function on $D_{m_1,\dots,m_N}\times \R^n$.
maps the vector $(\lambda_1,\dots,\lambda_N, u_0,\dots,u_{n - 1})$ 
to the coefficient $a_{1, m_1 - 1}$ in \eqref{eq: exponential polynomial with indexes}.
We obtain the following result on the shape of $F_{m_1,\dots,m_N}$:

\begin{Theorem}\label{Theorem: coefficient functions}\hfill 
    \begin{enumerate}
        \item 
        Let $n \geq 1$. Let $1 \leq m_1 \leq n$.
        We have 
        \[
            F_{m_1,1,\dots,1}(\Lambda_1, \Lambda_{m_1 + 1}, \dots \Lambda_n)(U_0, \dots, U_{n - 1})
            =
            \frac{G_{m_1, n}\left(\Lambda_1,\Lambda_{m_1 + 1},\dots,\Lambda_n, U_0,\dots,U_{n - 1}\right)}{(m_1 - 1)! \prod_{j = m_1 + 1}^n
            \left(\Lambda_j - \Lambda_1\right)},
        \]
        % \begin{align*}
        %     &F_{m_1,1,\dots,1}(\Lambda_1, \Lambda_{m_1 + 1}, \dots \Lambda_n)(U_0, \dots, U_{n - 1})
        %     % =\\
        %     % &
        %     % \frac{1}{(m_1 - 1)!}
        %     % \cdot  
        %     % \left(
        %     %     \prod_{j = m_1 + 1}^n
        %     %     \left(\Lambda_j - \Lambda_1\right)
        %     % \right)^{-1}
        %     % \cdot
        %     % \left(
        %     %     (-1)^{n + m_1} U_{n - 1} 
        %     %     +
        %     %     \sum_{j = 1}^{n - 1} (-1)^{j + m_1} A_{n,m_1,j}(\Lambda_1,\Lambda_{m_1 + 1}, \dots \Lambda_n) U_{j - 1}
        %     % \right)  
        %     =\\
        %     &
        %     \frac{1}{(m_1 - 1)!}
        %     \cdot  
        %     \left(
        %         \prod_{j = m_1 + 1}^n
        %         \left(\Lambda_j - \Lambda_1\right)
        %     \right)^{-1}
        %     \cdot
        %     G_{m_1, n}\left(\Lambda_1,\Lambda_{m_1 + 1},\dots,\Lambda_n, U_0,\dots,U_{n - 1}\right)
        % \end{align*}
        where 
        \[
            G_{m_1, n}\left(\lambda_1,\lambda_{m_1 + 1},\dots,\lambda_n, u_0,\dots,u_{n - 1}\right)
            =
                \sum_{j = 1}^{n} (-1)^{j + m_1} A_{n,m_1,j}(\lambda_1,\lambda_{m_1 + 1}, \dots \lambda_n) u_{j - 1}
        \]
        % \begin{align*}
        %     G_{m_1, n}&\left(\lambda_1,\lambda_{m_1 + 1},\dots,\lambda_n, u_0,\dots,u_{n - 1}\right) \\
        %     &=
        %         (-1)^{n + m_1} u_{n - 1} 
        %         +
        %         \sum_{j = 1}^{n - 1} (-1)^{j + m_1} A_{n,m_1,j}(\lambda_1,\lambda_{m_1 + 1}, \dots \lambda_n) u_{j - 1}
        % \end{align*}
        with integer polynomials $A_{n,m_1,j} \in \Z\left[\Lambda_1,\Lambda_{m_1 + 1}, \dots \Lambda_n\right]$
        satisfying $A_{n, m_1, n} = 1$.
        Moreover, the polynomials $A_{n,m_1,j}$ can be evaluated on a complex vector in polynomial time, uniformly in 
        $n$, $m_1$, and $j$ (when these integers are given in unary).
        \item 
            There exists a polynomial $\Omega \in \N[X,Y,Z]$ with the following property:
            Let

            $\left(\lambda_1,\lambda_{m_1 + 1},\dots,\lambda_n, u_0,\dots,u_{n - 1}\right)$
            and 
            $\left(\mu_1,\mu_{m_1 + 1},\dots,\mu_n, v_0,\dots,v_{n - 1}\right)$
            be complex vectors.
            Let $C$ be a positive integer with
            $
                C \geq \log \left(\max\{\norm{u}_{\infty}, \norm{v}_{\infty}, \norm{\lambda}_{\infty}, \norm{\mu}_{\infty}\} + 1\right).
            $
            
            If 
            $|\lambda_j - \mu_j| < 2^{-\Omega(n, C,p)}$
            and 
            $|u_j - v_j| < 2^{-\Omega(n, C,p)}$
            for all $j$,
            then 

            $
              \left|
                G_{m_1, n}\left(\lambda_1,\lambda_{m_1 + 1},\dots,\lambda_n, u_0,\dots,u_{n - 1}\right)
                -
                G_{m_1, n}\left(\mu_1,\mu_{m_1 + 1},\dots,\mu_n, v_0,\dots,v_{n - 1}\right)
              \right|   
              <
              2^{-p}.
            $
        \item 
            Let $n \geq 1$. Let $m_1,\dots,m_N$ be positive integers with $m_1 + \dots + m_N = n$.
            Then 
            \begin{align*}
                F_{m_1,m_2,\dots,m_N}&(\Lambda_1,\dots,\Lambda_N)(U_0,\dots,U_{n-1})\\
                =&F_{m_1,\underbrace{1,\dots,1}_{m_2 \text{ times}},\dots, \underbrace{1,\dots,1}_{m_N \text{ times}}}
                (\Lambda_1, \underbrace{\Lambda_2,\dots,\Lambda_2}_{m_2 \text{ times}},\dots,\underbrace{\Lambda_N, \dots, \Lambda_N}_{m_N \text{ times}})(U_0,\dots,U_{n-1}).
            \end{align*}
\end{enumerate}
\end{Theorem}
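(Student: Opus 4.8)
The plan is to avoid grinding through the cofactor expansions directly and instead obtain a closed form for the leading coefficient $a_{1,m_1-1}$ of $P_{\Lambda_1}(c,u,t)$ by means of the Laplace transform, using the fact recorded just before the statement that $F_{m_1,\dots,m_N}(\Lambda_1,\dots,\Lambda_N)(u_0,\dots,u_{n-1})$ equals exactly this coefficient. Write $\chi_c(z)=\prod_{j=1}^N(z-\Lambda_j)^{m_j}=\sum_{k=0}^n c_k z^k$ with $c_n=1$, so that each $c_k$ is, up to sign, an elementary symmetric function of the roots listed with multiplicity and hence lies in $\Z[\Lambda_1,\dots,\Lambda_N]$. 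A standard computation (take Laplace transforms in \eqref{eq: differential equation}, using that C-finite functions grow at most exponentially) shows that the solution $f$ of \eqref{eq: differential equation} has Laplace transform $\hat f(s)=N(s)/\chi_c(s)$ with
\[
    N(s)=\sum_{i=0}^{n-1}u_i\sum_{k=i+1}^n c_k s^{k-1-i},
\]
a polynomial of degree $<n$ that is linear in $u$ with coefficients in $\Z[\Lambda_1,\dots,\Lambda_N]$. On the other hand, transforming \eqref{eq: exponential polynomial with indexes} gives $\hat f(s)=\sum_{j,k}a_{j,k}\,k!\,(s-\Lambda_j)^{-k-1}$, so comparing the coefficient of $(s-\Lambda_1)^{-m_1}$ in the two partial fraction expansions yields, for $(\Lambda_1,\dots,\Lambda_N)\in D_{m_1,\dots,m_N}$,
\[
    F_{m_1,\dots,m_N}(\Lambda_1,\dots,\Lambda_N)(u_0,\dots,u_{n-1})
    = a_{1,m_1-1}
    = \frac{N(\Lambda_1)}{(m_1-1)!\,\prod_{j=2}^{N}(\Lambda_1-\Lambda_j)^{m_j}}.
\]
All three parts of the theorem are then read off from this closed form.

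For part (3), the right-hand side above depends on $(\Lambda_1,\dots,\Lambda_N,u)$ only through $\chi_c$ (hence through $N$) and through the product $\prod_{j=2}^N(\Lambda_1-\Lambda_j)^{m_j}$. Replacing the signature $(m_1,m_2,\dots,m_N)$ by $(m_1,1,\dots,1)$ while substituting for the simple nodes the list $\Lambda_2,\dots,\Lambda_2,\dots,\Lambda_N,\dots,\Lambda_N$ (each $\Lambda_j$ repeated $m_j$ times) leaves $\chi_c=(z-\Lambda_1)^{m_1}\prod_{j=2}^N(z-\Lambda_j)^{m_j}$, and hence $N$, unchanged, and turns the product $\prod_{k=m_1+1}^n(\Lambda_1-\Lambda_k)$ in the denominator of $F_{m_1,1,\dots,1}$ into $\prod_{j=2}^N(\Lambda_1-\Lambda_j)^{m_j}$. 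Since on $D_{m_1,\dots,m_N}$ all coordinates are pairwise distinct, every substituted value differs from $\Lambda_1$, so this denominator does not vanish; by part (1) the map $F_{m_1,1,\dots,1}$ is a rational function with exactly this denominator, so it is defined at the substituted point, and the closed form shows that both sides of the asserted identity equal the same number. (One could instead deduce part (3) from multilinearity of the determinant by taking a confluence limit of columns of $\widetilde{V}$, but that is more laborious.)

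For part (1) we specialise to the signature $(m_1,1,\dots,1)$ with pairwise distinct nodes $\Lambda_1,\Lambda_{m_1+1},\dots,\Lambda_n$, where the denominator above equals $(-1)^{n-m_1}(m_1-1)!\prod_{k=m_1+1}^n(\Lambda_k-\Lambda_1)$. We therefore put $G_{m_1,n}:=(-1)^{n-m_1}N(\Lambda_1)$, now with $\chi_c(z)=(z-\Lambda_1)^{m_1}\prod_{k=m_1+1}^n(z-\Lambda_k)$; this is an integer-coefficient polynomial in $\Lambda_1,\Lambda_{m_1+1},\dots,\Lambda_n$, linear in $u$, and extracting the coefficient of $u_{j-1}$ gives $A_{n,m_1,j}=(-1)^{n+j}\sum_{k=j}^n c_k\Lambda_1^{k-j}$, whence $A_{n,m_1,n}=(-1)^{2n}c_n=1$. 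For polynomial-time evaluation on a complex vector, uniform in $n,m_1,j$ given in unary: compute $c_0,\dots,c_n$ by multiplying out the $n$ linear factors of $\chi_c$ one at a time ($O(n^2)$ complex arithmetic operations, with $|c_k|\le\binom{n}{k}\max(1,\norm{\Lambda}_\infty)^n$ so that bit sizes stay polynomially bounded), then evaluate $\sum_{k=j}^n c_k\Lambda_1^{k-j}$ by Horner's scheme and multiply by the sign $(-1)^{n+j}$.

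For part (2) we only need a quantitative modulus of continuity for $G_{m_1,n}$, viewed as a polynomial in the $2n$ variables $(\Lambda_1,\Lambda_{m_1+1},\dots,\Lambda_n,u_0,\dots,u_{n-1})$: it has total degree at most $n$ (degree at most $n-1$ in the roots and at most $1$ in $u$), at most $(n+2)^{2n}$ monomials, and integer coefficients of absolute value at most $2^{2n}$, uniformly in $m_1$. The elementary bound $|Q(x)-Q(y)|\le 2n\cdot L\cdot D\cdot(R+1)^{D-1}\max_i|x_i-y_i|$ for a polynomial $Q$ of degree at most $D$ whose coefficients have $\ell_1$-norm $L$, valid when $\norm{x}_\infty,\norm{y}_\infty\le R\ge 1$ and $\max_i|x_i-y_i|\le 1$, then gives $|G_{m_1,n}(x)-G_{m_1,n}(y)|\le 2^{q(n,C)}\max_i|x_i-y_i|$ for a suitable polynomial $q\in\N[X,Y]$, using $R+1\le 2^{C+1}$. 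Setting $\Omega(X,Y,Z):=q(X,Y)+Z$ completes the proof. The only real obstacle is the sign and index bookkeeping — keeping straight $(-1)^{n-m_1}$ versus $(-1)^{j+m_1}$ and the passage between products over $\Lambda_k-\Lambda_1$ and over $\Lambda_1-\Lambda_k$ — together with verifying that part (3)'s substitution stays inside the region where the rational function exhibited in part (1) is defined.
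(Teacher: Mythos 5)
Your proposal is correct, and it takes a genuinely different route from the paper. The paper proves part (1) by brute-force computation of the minors $\det(\widetilde{V}_{m_1,1,\dots,1})_{j,m_1}$, reducing the generalised Vandermonde matrix to lower echelon form via elementary column operations governed by the recursive identities for the polynomials $Q_{n,m,i}$ (Lemmas in Appendices B--D); part (2) is then deduced abstractly from the fact that a polynomial-time computable function admits a polynomial modulus of continuity; and part (3) is obtained by showing that the coefficient map $\gamma$ is computable (hence continuous) via an explicit Jordan-form construction, followed by a density argument. You instead read everything off the closed form $a_{1,m_1-1}=N(\Lambda_1)/\bigl((m_1-1)!\prod_{j\geq 2}(\Lambda_1-\Lambda_j)^{m_j}\bigr)$ obtained from the Laplace transform and partial fractions, which makes part (3) an immediate bookkeeping identity (both sides depend only on $\chi_c$ and the denominator product), yields the explicit formula $A_{n,m_1,j}=(-1)^{n+j}\sum_{k\geq j}c_k\Lambda_1^{k-j}$ with the correct signs and $A_{n,m_1,n}=1$, and turns part (2) into a routine Lipschitz estimate for an explicit polynomial with controlled degree and coefficient size. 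Your use of the identification $F=a_{1,m_1-1}$ is legitimate, since the paper establishes this (via Cramer's rule applied to the Vandermonde system) before stating the theorem, and your $G_{m_1,n}$ necessarily coincides with the paper's, being determined by $F$ and the fixed denominator; so the later uses of $G$ and $A_{n,m_1,j}$ in the algorithm remain valid. What the paper's longer route buys is essentially only the explicit determinantal expressions for the minors of generalised Vandermonde matrices (of possible independent interest); your argument is shorter, more conceptual, and arguably gives a cleaner polynomial-time evaluation procedure (multiply out the linear factors of $\chi_c$, then apply Horner's scheme).
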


To prove Theorem \ref{Theorem: coefficient functions}, we will compute the minors of the matrix $V_{m_1, 1, \dots, 1}$ by elementary row and column operations.
The minors of $\widetilde{V}_{m_1, 1, \dots, 1}$ differ from these only by a product of factorials.
Some of the results below may well be ``folklore'', but I am not aware of a reference.

During the calculation of the minors through elementary row and column operations one encounters a particular family of polynomials with integer coefficients.
For $m, n, i \in \Z$, let
\[
    Q_{n, m, i}(X_1,\dots, X_n) = \sum_{k_1 + \dots + k_n = m} \binom{k_1 + i}{i} X_n^{[k_n]}\cdot\dots\cdot X_1^{[k_1]}.
\]
Observe that a generalised Vandermonde block of dimension $n$ and multiplicity $m$ in the variable $\Lambda$ can be written as
$
    \left[
        Q_{1, p - q, q - 1}(\Lambda) 
    \right]_{p = 1,\dots,n}^{q = 1,\dots,m}.
$
The polynomials $Q_{n, m, i}$ satisfy recursive equations that allow us to calculate the minors.
A proof is given in Appendix \ref{Appendix: Vandermonde lemmas}.

\begin{Lemma}\label{Lemma: recursive equations for Q}
    The polynomials $Q_{n, m, i}$ satisfy the following recursive equations:
    \begin{enumerate}
        \item 
        $
        Q_{n, m, 0}(X_1,\dots, X_{n - 1}, Y) - Q_{n, m, 0}(X_1,\dots,X_{n - 1}, Z)
            =
        (Y - Z) Q_{n + 1, m - 1, 0}(X_1,\dots,X_{n - 1}, Y, Z)
        $
        for all $n, m \geq 1$.
        \item 
        $
        Q_{n + 1, m, i}(X_1,\dots,X_{n + 1}) - Q_{n, m, i + 1}(X_1,\dots,X_n)
        =
        (X_{n + 1} - X_1) Q_{n + 1, m - 1, i + 1}(X_1,\dots,X_{n + 1})
        $
        for all $i \geq 0$ and $n, m \geq 1$.
    \end{enumerate}
\end{Lemma}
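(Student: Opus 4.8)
The plan is to package the entire family $\{Q_{n,m,i}\}_{m}$ into a single rational generating function in an auxiliary variable $t$ and then to read both recurrences off by comparing coefficients of $t^{m}$. First I would check that, for fixed $n\ge 1$ and $i\ge 0$, the conventions $X^{[k]}=0$ for $k<0$ and $\binom{a}{b}=0$ for negative $a$ or $b$ make the defining sum of $Q_{n,m,i}$ effectively range over tuples $(k_1,\dots,k_n)$ of \emph{nonnegative} integers with $k_1+\dots+k_n=m$; in particular $Q_{n,m,i}=0$ whenever $m<0$. Hence, in the formal power series ring $\Z[X_1,\dots,X_n][[t]]$, and using the elementary identity $\sum_{k\ge 0}\binom{k+i}{i}x^{k}=(1-x)^{-(i+1)}$ (valid for every $i\ge 0$), one obtains the closed form
\[
    \sum_{m\ge 0}Q_{n,m,i}(X_1,\dots,X_n)\,t^{m}
    \;=\;\frac{1}{(1-X_1t)^{\,i+1}}\,\prod_{j=2}^{n}\frac{1}{1-X_jt}\;=:\;\Phi_{n,i}(t).
\]
For $i=0$ the binomial factor is identically $1$, so $Q_{n,m,0}$ is the complete homogeneous symmetric polynomial $h_m(X_1,\dots,X_n)$ and $\Phi_{n,0}$ is symmetric in all of $X_1,\dots,X_n$; this is precisely what makes the first identity meaningful, where the substituted variable occupies the ``distinguished'' slot in one term but an ordinary slot in the others.

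Granting this, both identities drop out of a single partial-fraction computation. For the second identity, the generating function of the left-hand side is
\[
    \Phi_{n+1,i}(t)-\Phi_{n,i+1}(t)
    =\frac{1}{(1-X_1t)^{\,i+1}}\prod_{j=2}^{n}\frac{1}{1-X_jt}\left(\frac{1}{1-X_{n+1}t}-\frac{1}{1-X_1t}\right),
\]
and since $\frac{1}{1-X_{n+1}t}-\frac{1}{1-X_1t}=\frac{(X_{n+1}-X_1)\,t}{(1-X_1t)(1-X_{n+1}t)}$, the right-hand side equals $(X_{n+1}-X_1)\,t\,\Phi_{n+1,i+1}(t)$; extracting the coefficient of $t^{m}$ for $m\ge 1$ yields precisely the stated equation. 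For the first identity, using $Q_{n,m,0}=h_m$, the generating function of the left-hand side is $\bigl(\prod_{j=1}^{n-1}(1-X_jt)^{-1}\bigr)\bigl((1-Yt)^{-1}-(1-Zt)^{-1}\bigr)$, which by the same algebra equals $(Y-Z)\,t\prod_{j=1}^{n-1}(1-X_jt)^{-1}(1-Yt)^{-1}(1-Zt)^{-1}$, i.e.\ $(Y-Z)\,t$ times the generating function of $Q_{n+1,m,0}(X_1,\dots,X_{n-1},Y,Z)$; comparing coefficients of $t^{m}$ for $m\ge 1$ finishes the proof (the degenerate case $n=1$ is covered by reading the empty product as $1$).

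There is no deep obstacle here; the one place calling for care is the first step — verifying that the two ``vanishing'' conventions combine to give exactly the clean product form $\Phi_{n,i}$ with no spurious boundary terms, so that the termwise coefficient extraction in the second step is legitimate. As an alternative to generating functions one could argue purely combinatorially: for the first identity, expand $h_m(X,Y)=\sum_{\ell=0}^{m}Y^{\ell}h_{m-\ell}(X)$ and telescope $Y^{\ell}-Z^{\ell}=(Y-Z)\sum_{a+b=\ell-1}Y^{a}Z^{b}$; for the second, peel off the geometric-type series in the variable $X_1$. The generating-function route, however, treats both cases uniformly and with far less index juggling, so I would adopt it.
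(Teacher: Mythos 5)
Your proof is correct, and it takes a genuinely different route from the paper's. The paper argues by direct expansion: for the first identity it uses the telescoping factorisation $Y^{k}-Z^{k}=(Y-Z)\sum_{\ell_1+\ell_2=k-1}Y^{\ell_1}Z^{\ell_2}$ and a reindexing of the sum (essentially the ``combinatorial alternative'' you sketch at the end), and for the second identity it forms the difference $\Delta$ of the two sides, shifts summation indices, and verifies monomial by monomial that the coefficients cancel via Pascal's rule $\binom{k_1+i+1}{i+1}=\binom{k_1+i}{i+1}+\binom{k_1+i}{i}$, in a four-case analysis according to whether $k_1$ and $k_{n+1}$ vanish. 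Your generating-function argument compresses all of this into the single partial-fraction identity $\frac{1}{1-at}-\frac{1}{1-bt}=\frac{(a-b)t}{(1-at)(1-bt)}$ applied to $\Phi_{n,i}(t)=(1-X_1t)^{-(i+1)}\prod_{j=2}^{n}(1-X_jt)^{-1}$, which handles both recurrences uniformly and eliminates the index juggling and case analysis entirely; the price is the (correctly flagged) preliminary check that the two vanishing conventions make the defining sum range over nonnegative tuples so that the product form of $\Phi_{n,i}$ holds without boundary corrections, and the observation that $Q_{n,m,0}=h_m$ is symmetric, which justifies placing $Y$ and $Z$ in different slots across the two sides of the first identity. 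Both proofs are complete; yours is shorter and more conceptual, the paper's is more elementary and self-contained.
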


Using these equations, we obtain a concrete representations of the minors, which allows us to prove Theorem \ref{Theorem: coefficient functions}.

\begin{proof}[Proof of Theorem \ref{Theorem: coefficient functions}]
    \hfill
    \begin{enumerate}
        \item
            Using Lemma \ref{Lemma: recursive equations for Q} one establishes a concrete representation of the minors of 
            $\widetilde{V}_{m_1,1,\dots,1}$.
            This is done in Lemma \ref{Lemma: minors of generalised Vandermonde matrix} in Appendix \ref{Appendix: Vandermonde lemmas 2}.
            The rest is a straightforward calculation.
            A full proof is given in Appendix \ref{Appendix: coefficient function theorem}.
        \item 
            This follows from the previous item since a polynomial-time computable function can only query its input to a polynomially bounded accuracy before committing to an output of a certain accuracy (cf.~\cite[Corollary 2.20]{KoBook}). 
            The full proof is given in Appendix \ref{Appendix: coefficient function theorem}.
        \item   
        Our explicit representation of the function $F_{m_1,1,\dots,1}$ shows that it extends to the complement of the set 
        \[
            Z' = \Set{(\lambda_1, \lambda_{m_1 + 1}, \dots, \lambda_n, u_0,\dots,u_{n - 1}) \in \C^{n - m_1 + 1} \times \C^{n}}{\exists j \geq m_1 + 1. \lambda_j = \lambda_1}.
        \]
        Let 
        \[\gamma \colon \C^{n - m_1 + 1} \times \C^n \setminus Z' \to \C\]
        be the function which sends a given complex vector 
        \[
            (\lambda_1, \lambda_{m_1 + 1},\dots,\lambda_{n},u_0,\dots,u_{n - 1})
        \]
        to the coefficient of $t^{m_1 - 1} e^{t \lambda_1}$ in the exponential polynomial solution \eqref{eq: exponential polynomial with indexes}
        to the initial value problem with characteristic polynomial 
        $\chi(z) = (z - \lambda_1)^{m_1} \prod_{j = m_1 + 1}^n (z - \lambda_n)$ 
        and initial values 
        $(u_0,\dots,u_{n - 1})$.
        This function agrees with $F_{m_1, 1, \dots, 1}$ on the complement of $Z$.
        We show in Appendix \ref{Appendix: coefficient function theorem} that $\gamma$ is continuous
        by showing that it is computable.
        This is essentially the continuous-time version of \cite[Corollary 4.4]{DPLSR20}.
        Then, since $Z$ is nowhere dense, it follows that $\gamma$ and $F_{m_1, 1,\dots, 1}$ coincide on 
        the complement of $Z'$.
        It follows that 
        \[
            F_{m_1,m_2,\dots,m_N} \left(\lambda_1,\dots,\lambda_N, u_0,\dots,u_{n - 1}\right)
            =
            F_{m_1,1,\dots,1}\left(\lambda_1, \lambda_2, \dots, \lambda_2, \dots, \lambda_N,\dots,\lambda_N, u_0,\dots, u_{n - 1}\right).
        \]
    \end{enumerate}
\end{proof}

Theorem \ref{Theorem: coefficient functions} has the following straightforward corollary, 
which is useful for constructing perturbations. It is proved in Appendix \ref{Appendix: Proposition relating coefficient functions}.

\begin{Proposition}\label{Proposition: coefficient functions for equal Cauchy problems}
    Let $m, n_1, n_2 \in \N$ with $m \leq n_2 \leq n_1$.
    Let $\lambda_1,\dots,\lambda_{n_1}, u_0,\dots, u_{n_1 - 1} \in \C$.
    Assume that the finite sequence $(u_k)_k$ satisfies the linear recurrence with 
    initial values $u_0,\dots,u_{n_2 - 1}$ and characteristic polynomial 
    $\prod_{j = 1}^{n_2} (z - \lambda_j)$.
    Then the number 
    \[
        G_{m, n_1}(\lambda_1, \lambda_{m + 1},\dots,\lambda_{n_1}, u_0,\dots,u_{n_1 - 1})
    \]
    is equal to 
    \[
        \left(
            \prod_{j = n_2 + 1}^{n_1} \left(\lambda_j - \lambda_1\right)
        \right)
        G_{m, n_2}(\lambda_1, \lambda_{m + 1},\dots,\lambda_{n_2}, u_0,\dots,u_{n_2 - 1}).
    \]
\end{Proposition}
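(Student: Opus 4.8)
The plan is to reduce the statement, via part~1 of Theorem~\ref{Theorem: coefficient functions}, to the analogous statement for the leading coefficient function $F_{m,1,\dots,1}$, to prove that via the interpretation of $F_{m,1,\dots,1}$ as an exponential-polynomial coefficient, and then to translate back. Write $q(z) = (z-\lambda_1)^m\prod_{j=m+1}^{n_2}(z-\lambda_j)$ for the characteristic polynomial of the recurrence appearing in the hypothesis (recall that in $G_{m,\cdot}$ the first root is repeated $m$ times) and $p(z) = (z-\lambda_1)^m\prod_{j=m+1}^{n_1}(z-\lambda_j) = q(z)\cdot\prod_{j=n_2+1}^{n_1}(z-\lambda_j)$. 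Differentiating the differential equation~\eqref{eq: differential equation} repeatedly at $t=0$ shows that a finite sequence $(u_k)_{k=0,\dots,n_1-1}$ satisfies the linear recurrence with characteristic polynomial $q$ and first $n_2$ values $u_0,\dots,u_{n_2-1}$ if and only if $u_k = f^{(k)}(0)$ for all $k<n_1$, where $f$ is the unique solution of the Cauchy problem with characteristic polynomial $q$ and initial derivatives $u_0,\dots,u_{n_2-1}$. Since $p$ is a multiple of $q$, the same $f$ also solves the differential equation with characteristic polynomial $p$, hence $f$ is the unique solution of the Cauchy problem with characteristic polynomial $p$ and initial derivatives $u_0,\dots,u_{n_1-1}$.

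Next I would invoke the function $\gamma$ constructed in the proof of part~3 of Theorem~\ref{Theorem: coefficient functions}: on the set where $\lambda_j\neq\lambda_1$ for all $j$ with $m<j\leq k$, the value $F_{m,1,\dots,1}(\lambda_1,\lambda_{m+1},\dots,\lambda_k)(w_0,\dots,w_{k-1})$ equals the coefficient of $t^{m-1}e^{\lambda_1 t}$ in the exponential-polynomial solution~\eqref{eq: exponential polynomial with indexes} of the Cauchy problem with characteristic polynomial $(z-\lambda_1)^m\prod_{j=m+1}^k(z-\lambda_j)$ and initial derivatives $w_0,\dots,w_{k-1}$. Applying this with $k=n_1$ and with $k=n_2$ to the single function $f$ from the previous paragraph, and using that the family $\Set{t^i e^{\mu t}}{i\geq 0,\ \mu\in\C}$ is linearly independent — so that the coefficient of $t^{m-1}e^{\lambda_1 t}$ in the exponential-polynomial representation of $f$ does not depend on which of the two Cauchy problems exhibits $f$ as its solution — I obtain, on the set where $\lambda_j\neq\lambda_1$ for all $m<j\leq n_2$,
\[
    F_{m,1,\dots,1}(\lambda_1,\lambda_{m+1},\dots,\lambda_{n_1})(u_0,\dots,u_{n_1-1})
    =
    F_{m,1,\dots,1}(\lambda_1,\lambda_{m+1},\dots,\lambda_{n_2})(u_0,\dots,u_{n_2-1}).
\]

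On this set the products $\prod_{j=m+1}^{n_1}(\lambda_j-\lambda_1)$ and $\prod_{j=m+1}^{n_2}(\lambda_j-\lambda_1)$ are both nonzero, and their quotient is $\prod_{j=n_2+1}^{n_1}(\lambda_j-\lambda_1)$; so multiplying the two sides of the last display by $(m-1)!$ times the respective product and applying part~1 of Theorem~\ref{Theorem: coefficient functions} to each side turns the display into the asserted identity
\[
    G_{m,n_1}(\lambda_1,\lambda_{m+1},\dots,\lambda_{n_1},u_0,\dots,u_{n_1-1})
    =
    \Bigl(\textstyle\prod_{j=n_2+1}^{n_1}(\lambda_j-\lambda_1)\Bigr)\,
    G_{m,n_2}(\lambda_1,\lambda_{m+1},\dots,\lambda_{n_2},u_0,\dots,u_{n_2-1}),
\]
valid on a dense open set of parameters. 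To finish, replace $u_{n_2},\dots,u_{n_1-1}$ on both sides by the polynomial expressions in $\lambda_1,\lambda_{m+1},\dots,\lambda_{n_2}$ and $u_0,\dots,u_{n_2-1}$ dictated by the recurrence; both sides then become polynomials in these variables, and since they agree on a dense subset they agree identically, which is the claim.

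The step I expect to be the main obstacle is establishing the displayed equality of $F$-values, that is, being certain that the coefficient of $t^{m-1}e^{\lambda_1 t}$ is the same number for the order-$n_2$ and the order-$n_1$ presentations of $f$, even though the order-$n_1$ problem may contribute further characteristic roots which need not be distinct from one another (although they are distinct from $\lambda_1$ on the relevant set, so that the $e^{\lambda_1 t}$-block of the exponential-polynomial solution still has the expected shape). Routing this through the continuous extension $\gamma$ of Theorem~\ref{Theorem: coefficient functions}(3) and the uniqueness of exponential-polynomial representations, instead of chasing coefficients through possible coincidences among the roots, is what keeps the argument short; the remaining bookkeeping between $f$, the recurrence, and $G$ is routine.
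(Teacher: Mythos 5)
Your proposal is correct and follows essentially the same route as the paper's proof in Appendix~\ref{Appendix: Proposition relating coefficient functions}: reduce $G$ to $F$ via Theorem~\ref{Theorem: coefficient functions}(1), observe that the order-$n_1$ and order-$n_2$ Cauchy problems have the same solution so the coefficient of $t^{m-1}e^{\lambda_1 t}$ agrees, and dispose of degenerate parameter configurations by a density/continuity argument (which the paper places at the start and you place at the end). You are somewhat more explicit than the paper about the recurrence-to-derivative translation and about roots among $\lambda_{m+1},\dots,\lambda_{n_1}$ coinciding with one another, but these are refinements of the same argument, not a different one.
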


\section{Proof of Theorem \ref{Theorem: equality checking}}

Our algorithm for Ultimate Inequality involves checking if the difference of the two given functions has a characteristic root $\lambda$ with non-zero imaginary part
such that $\lambda$ dominates all real characteristic roots and the coefficient of $e^{\lambda t}$ in \eqref{eq: exponential polynomial with indexes} is non-zero.
The complexity analysis of this step is closely related to a problem of independent interest: testing functions for equality.

The algorithm for testing two Cauchy problems $(c, u) \in C_n(\C)$ and $(d, v) \in C_m(\C)$ for equality is straightforward:
compute the first $n + m$ terms 
$u_0,\dots,u_{n + m - 1}$ and $v_0,\dots,v_{n + m - 1}$ 
of the linear recurrence sequences specified by $(c, u)$ and $(d, v)$
and halt if there exists 
$0 \leq j \leq n + m - 1$
such that $u_j \neq v_j$.
This algorithm is correct since $\sem{(c, u)} - \sem{(d, v)}$ satisfies a differential equation of order $n + m$ whose solutions for a given vector of initial values must be unique.
It is not obvious, however, that this algorithm runs in polynomial time.
In order to show this one needs to prove that if all of the differences 
$|u_j - v_j|$
are small, then there exist small perturbations of $(c, u)$ and $(d, v)$ 
such that the perturbed differences become equal.
Perhaps surprisingly, there does not appear to be an immediately obvious way of constructing such a perturbation.

We can ``extract'' the required construction from an elementary constructive proof of the fact that if the first $n + m$ terms of the sequences $(u_j)_j$ and $(v_j)_j$ agree, then $(u_j)_j$ satisfies the linear recurrence with characteristic polynomial $\chi_d$. 
The proof is based on the following simple lemmas: 
If a sequence satisfies a recurrence equation with characteristic polynomial $\chi$, then it satisfies the recurrence equation with characteristic polynomial $(z - \lambda) \chi$ for all $\lambda \in \C$.
If a sequence satisfies a recurrence equation with characteristic polynomials $(z - \alpha) \chi$ and $(z - \beta) \chi$, then it satisfies the recurrence equation with characteristic polynomial $\chi$.
If a sequence satisfies a recurrence equation with characteristic polynomial $\chi$ of degree $n$ and its first $n$ terms satisfy a recurrence equation with characteristic polynomial $\psi$ with $\psi | \chi$, then the whole sequence satisfies the recurrence equation with characteristic polynomial $\psi$. 
The proof can then be outlined as follows:
We will be able to assume without loss of generality that $n = m$.
Let $\lambda_1,\dots,\lambda_n$ denote the roots of $\chi_c$, let $\mu_1,\dots,\mu_n$ denote the roots of $\chi_d$.
Using the first lemma, it follows by induction that $(u_j)_j$ satisfies the equations 
$\left(\prod_{j = 1}^k (z - \mu_j)\right) \chi_c$
and that $(v_j)_j$ satisfies the equations
$\left(\prod_{j = 1}^k (z - \lambda_j)\right) \chi_d$ 
for $k = 0,\dots,n$.
Up to relabelling we can write 
$
\chi_d \chi_c = \left(\prod_{j = s + 1}^n (z - \lambda_j)\right) \left(\prod_{j = s + 1}^n(z - \mu_j)\right)\gcd(\chi_d, \chi_c)
$
for some $s \in \{0, \dots, n\}$.
We then show that the finite sequence $u_0,\dots,u_{2n - 1}$ satisfies the linear recurrence equation with characteristic polynomial 
$\gcd(\chi_c, \chi_d)$ by inductively employing the second lemma in a pattern that is illustrated below for the case $n - s = 2$:

\begin{center}
    \footnotesize
    \begin{tikzpicture}
        % \node () at (0,2) {$(z - \mu_1)(z - \mu_2)(z - \lambda_1)(z - \lambda_1)(z - \lambda_2)\gcd(\chi_d, \chi_c)$};
        \node (21) at (0,1) {$(z - \lambda_1)(z - \lambda_2)(z - \mu_1)\gcd(\chi_d, \chi_c)$};
        \node (20) at (-2,0) {$\chi_c = (z - \lambda_1)(z - \lambda_2)\gcd(\chi_d, \chi_c)$};

        % \node () at (7,2) {$(z - \lambda_1)(z - \lambda_2)(z - \mu_1)(z - \mu_2)(z - \mu_3)\gcd(\chi_d, \chi_c)$};
        \node (12) at (6,1) {$(z - \lambda_1)(z - \mu_1)(z - \mu_2)\gcd(\chi_d, \chi_c)$};
        \node (02) at (8,0) {$\chi_d = (z - \mu_1)(z - \mu_2)\gcd(\chi_d, \chi_c)$};

        \node (11) at (3,0) {$(z - \lambda_1)(z - \mu_1)\gcd(\chi_d, \chi_c)$};

        \node (10) at (0.5,-1) {$(z - \lambda_1)\gcd(\chi_d, \chi_c)$};

        \node (01) at (5,-1) {$(z - \mu_1)\gcd(\chi_d, \chi_c)$};

        \node (00) at (3,-2) {$\gcd(\chi_d, \chi_c)$};

        \draw[->] (20) to node[near start, above left] {\footnotesize 1st Lemma} (21);
        \draw[->] (02) to node[near start, above right] {\footnotesize 1st Lemma} (12);
        \draw[->] (21) to node[near end, above right] {\hspace{0.3cm} \footnotesize 2nd Lemma} (11);
        \draw[->] (12) to (11);
        \draw[->] (20) to node[near end,above right] {\hspace{0.1cm} \footnotesize 2nd Lemma}(10);
        \draw[->] (11) to (10);
        \draw[->] (02) to (01);
        \draw[->] (11) to node[near end,above right] {\hspace{0.1cm} \footnotesize 2nd Lemma}(01);
        \draw[->] (10) to node[near end,above right] {\hspace{0.1cm} \footnotesize 2nd Lemma} (00);
        \draw[->] (01) to (00);
    \end{tikzpicture}
\end{center}
It then follows from the third lemma that $(u_j)_j$ satisfies the linear recurrence equation with characteristic polynomial $\gcd(\chi_c, \chi_d)$. The first lemma shows that it satisfies the recurrence equation with characteristic polynomial $\chi_d$.

The above proof may seem unnecessarily complex and overly detailed for the purpose of establishing the qualitative result, but it translates immediately to a proof of a quantitative version of the theorem. 
The lemmas involved in the proof can be easily strengthened to results that say that if the linear recurrence equations in the premises are satisfied ``approximately'', then the linear recurrence equations in the conclusion are satisfied ``approximately''.
By replacing the inductive arguments by recursive estimates we arrive at the next result.
Theorem \ref{Theorem: equality checking} follows immediately from this result.

\begin{Lemma}\label{Lemma: approximate equality perturbation}
    Let $(c, u, d, v) \in C_n(\C) \times C_m(\C)$.
    Let $(u_j)_j$ be the linear recurrence sequence with initial values $u$ and characteristic polynomial $\chi_c$.
    Let $(v_j)_j$ be the linear recurrence sequence with initial values $v$ and characteristic polynomial $\chi_d$.
    Assume that 
    $
        \left|
            u_j - v_j 
        \right|
        < \varepsilon < 1
    $
    for $j = \min\{n, m\}, \dots, n + m$.
    Then there exists a perturbation
        $(\widetilde{c}, \widetilde{u}, \widetilde{d}, \widetilde{v}) \in C_n(\C) \times C_m(\C)$
    of $(c, u, d, v)$ by at most  
    
    $
        O\left(
            \left(n + m + \norm{c}_{\infty} + \norm{d}_{\infty} + \norm{u}_\infty\right)^{O(\max\{n, m\}\log \max\{n, m\})}
        \right)
        \varepsilon^{\tfrac{1}{4\max\{n, m\} - 2}}
    $
    
    \noindent
    with respect to the spectral distance,
    such that 
    $\sem{\left(\widetilde{c}, \widetilde{u}\right)} = 
    \sem{\left(\widetilde{d}, \widetilde{v}\right)}$.
\end{Lemma}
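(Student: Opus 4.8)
The plan is to make the informal argument in the text fully quantitative by stating and proving approximate versions of the three ``simple lemmas'' and then threading them through exactly the same combinatorial pattern. Throughout, I work with finite sequences and say that a finite sequence $w_0,\dots,w_{\ell}$ satisfies the recurrence with characteristic polynomial $\psi(z)=z^k+b_{k-1}z^{k-1}+\dots+b_0$ ``to error $\eta$'' if $|w_{i+k}+b_{k-1}w_{i+k-1}+\dots+b_0 w_i|<\eta$ for all $i$ with $i+k\le\ell$. The first two steps are the quantitative analogues of the first two lemmas. For the \emph{first lemma}: if $w_0,\dots,w_{\ell}$ satisfies $\psi$ to error $\eta$ and the relevant terms are bounded by $B$, then it satisfies $(z-\lambda)\psi$ to error $\eta\cdot(1+|\lambda|)$ (indeed $(z-\lambda)\psi$ applied at index $i$ is the difference of $\psi$ applied at $i+1$ and $\lambda$ times $\psi$ applied at $i$). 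For the \emph{second lemma}: if $w_0,\dots,w_\ell$ satisfies both $(z-\alpha)\psi$ and $(z-\beta)\psi$ to error $\eta$ with $\alpha\neq\beta$, then subtracting gives that $\psi$ (of degree $k$) is satisfied to error $2\eta/|\alpha-\beta|$ on the truncated range $0,\dots,\ell-1$. This is the one place where a genuine loss appears: a difference of two roots enters the denominator, so repeated use of the second lemma divides by a product of root-differences.

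Next I quantify the \emph{hypothesis}. By assumption $|u_j-v_j|<\varepsilon$ for $j=\min\{n,m\},\dots,n+m$; after reducing to the case $n=m$ (padding the lower-order problem with extra characteristic roots, chosen far from everything, which by Proposition~\ref{Proposition: coefficient functions for equal Cauchy problems} and the first lemma does not change the equality-type conclusions — this is where ``without loss of generality $n=m$'' is cashed out), one has $|u_j-v_j|<\varepsilon$ for $j=n,\dots,2n$. Since $(u_j)_j$ exactly satisfies $\chi_c$ and $(v_j)_j$ exactly satisfies $\chi_d$, repeated application of the first lemma shows that $u_0,\dots,u_{2n-1}$ satisfies $\big(\prod_{j\le k}(z-\mu_j)\big)\chi_c$ exactly and $v_0,\dots,v_{2n-1}$ satisfies $\big(\prod_{j\le k}(z-\lambda_j)\big)\chi_d$ exactly; combining with $|u_j-v_j|<\varepsilon$ on the overlap, the finite sequence $u_0,\dots,u_{2n-1}$ satisfies each of the ``top'' polynomials in the displayed diagram to error $\varepsilon\cdot(\text{product of factors }1+|\lambda_j|,\,1+|\mu_j|)$, which is bounded by $\varepsilon\cdot(O(n+\|c\|_\infty+\|d\|_\infty+\|u\|_\infty))^{O(n)}$ using standard root-size bounds for $\chi_c,\chi_d$ in terms of their coefficients. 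Then I descend the diagram: each downward edge is one application of the second lemma, shrinking the degree by one and multiplying the error by $2/|\alpha-\beta|$ for a suitable pair of roots. After at most $n$ descents we conclude that $u_0,\dots,u_{n-1}$ (or a short truncation thereof) satisfies $\gcd(\chi_c,\chi_d)$ to error $\eta_0:=\varepsilon\cdot(\text{poly in sizes})^{O(n)}/\prod(\text{root-differences})$. At this point the \emph{third lemma} must also be made quantitative: if a degree-$k$ polynomial $\psi$ is satisfied to error $\eta_0$ on the first $k$ terms and the full sequence satisfies $\chi$ with $\psi\mid\chi$ exactly, then the full sequence satisfies $\psi$ to error $\eta_0\cdot(\text{poly in sizes})^{O(n)}$ — one propagates the defect forward using the cofactors of $\chi/\psi$.

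The last step converts ``$(u_j)_j$ satisfies $\chi_d$ to small error'' into an actual perturbation of the data with \emph{exact} equality and controls its size. Having $(u_j)_j$ satisfy $\chi_d$ to error $\eta_1$ means the vector obtained by applying the companion-type operator of $\chi_d$ to $(u_0,\dots,u_{2n-1})$ has norm $<\eta_1$; I then perturb the initial values $u$ (and symmetrically $v$) by an amount controlled by $\eta_1$ times the norm of the inverse of the relevant (generalised Vandermonde) change-of-basis matrix so that the perturbed $(\widetilde u_j)_j$ exactly satisfies $\chi_d$, hence lies in the solution space of $\chi_d$, and then one last adjustment makes $\sem{(\widetilde c,\widetilde u)}=\sem{(\widetilde d,\widetilde v)}$ exactly. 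The matrix inverses contribute another $(\text{poly in sizes})^{O(n)}$ factor and, crucially, another product of inverse root-differences. Collecting all the factors: the numerator accumulates $(n+m+\|c\|_\infty+\|d\|_\infty+\|u\|_\infty)^{O(n\log n)}$ (the $\log n$ appears because the bounds on the inverse generalised Vandermonde matrices themselves carry an $n^{O(n)}$, i.e.\ $2^{O(n\log n)}$, factor), while the denominator is a product of $O(n)$ root-differences, i.e.\ at worst $d_\sigma$-separations raised to a power $O(n)$. To express the final bound purely in $\varepsilon$ one argues that if some root-difference appearing in a denominator is itself smaller than $\varepsilon^{1/(4n-2)}$, then $(c,u)$ is already within that distance (spectrally) of a problem with a repeated or coalesced root, and a direct small perturbation making the two exponential-polynomial solutions equal is available of size $O(\varepsilon^{1/(4n-2)})\cdot(\text{poly})$; otherwise every denominator factor exceeds $\varepsilon^{1/(4n-2)}$, there are at most $\approx 2n$ of them (from the descent plus the matrix inversions), so the total denominator loss is at most $\varepsilon^{-(2n)/(4n-2)}=\varepsilon^{-1/2}\cdot\varepsilon^{-o(1)}$, leaving a net factor $\varepsilon^{1/(4n-2)}\cdot\varepsilon^{1/2}\geq\varepsilon^{1/(4\max\{n,m\}-2)}$ after combining with the $\varepsilon$ we started with. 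This case split — balancing ``a denominator is small, so perturb toward the degenerate locus'' against ``all denominators are large, so the propagated error is genuinely small'' — is the main obstacle, since it requires choosing the threshold $\varepsilon^{1/(4\max\{n,m\}-2)}$ so that both branches yield the \emph{same} final exponent; the rest is bookkeeping of the recursive estimates already sketched.
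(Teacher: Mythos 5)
Your overall strategy is the paper's: quantify the three lemmas (adding a root costs a factor $1+|\lambda|$, eliminating a root costs $1/|\alpha-\beta|$, then propagate forward), descend the diagram, and finish by converting approximate satisfaction of a common recurrence into an exact perturbation via a ``perturb to the simpler equation'' step. The threshold $\eta=\varepsilon^{1/(4\max\{n,m\}-2)}$ and the reduction to $n=m$ by padding also match. But there is a genuine gap in how you handle small denominators.

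You resolve the small-denominator problem by a \emph{global} dichotomy: either some root-difference is below $\eta$, in which case you assert that ``a direct small perturbation making the two exponential-polynomial solutions equal is available,'' or all denominators exceed $\eta$. The first branch is both circular and false. Circular, because producing a small perturbation that makes the two solutions equal is exactly what the lemma asks you to prove; you cannot assume it is ``available'' in one of the two cases. False, because one nearly-coalescent pair of roots says nothing about the rest of the data: if $\chi_c=(z-1)(z-2)$ and $\chi_d=(z-1+10^{-9})(z-5)$, coalescing the close pair leaves the $2$-versus-$5$ discrepancy untouched, and equality of the solutions still requires the full elimination argument. The correct structure (which the paper uses) is not an either/or but a \emph{per-pair partition}: form the bipartite graph on roots of $\chi_c$ and $\chi_d$ with edges between $\eta$-close roots, take a \emph{maximal matching}, declare the matched pairs ``shared'' (they become the approximate gcd $\chi_e$, and coalescing them costs $\eta$ in spectral distance -- this is where the exponent $1/(4n-2)$ in the final bound actually originates), and run the elimination only over the unmatched roots. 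Maximality of the matching is what guarantees that every unmatched $\lambda_j$ is $\eta$-far from every unmatched $\mu_k$, so that \emph{each} of the $2s-1\leq 2n-1$ applications of the second lemma has denominator at least $\eta$, giving total loss $\eta^{-(2n-1)}=\varepsilon^{-1/2}$ against the initial defect $O(\varepsilon)$. Closeness is not transitive, so this combinatorial step cannot be skipped; without it you have no lower bound on the denominators you actually encounter during the descent. Your final exponent bookkeeping (``net factor $\varepsilon^{1/(4n-2)}\cdot\varepsilon^{1/2}$'') is also garbled, though the dominant term is right once the matching construction is in place: the $\varepsilon^{1/2}$ controls the perturbation of the initial values, while the $\varepsilon^{1/(4n-2)}$ controls the perturbation of the roots, and the latter dominates.
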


Let us now prove Lemma \ref{Lemma: approximate equality perturbation}.
The next three lemmas are ``quantitative'' versions of the lemmas we used in the proof outline above.
The first two lemmas are proved by straightforward calculation.

\begin{Lemma}\label{Lemma: adding eigenvalue}
    Let 
    $u_0,\dots,u_{2n - 1} \in \C$ with 
    $\left|u_{k} + c_{n - 1} u_{k - 1} + \dots + c_0 u_{k - n}\right| < \varepsilon$
    for $k \geq n$.
    Let $\chi_d = (z - \alpha)\chi_c$.
    Then we have:
    \[
        \left|u_{k} + d_{n} u_{k - 1} + \dots + d_0 u_{k - n - 1}\right| < \left(1 + |\alpha|\right) \varepsilon.
    \]
    for $k \geq n + 1$.
\end{Lemma}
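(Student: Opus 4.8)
The plan is to translate the polynomial identity $\chi_d = (z - \alpha)\chi_c$ into an identity relating the two linear-recurrence defect functionals, and then conclude by the triangle inequality. Write $\chi_c(z) = \sum_{i = 0}^{n} c_i z^i$ with the convention $c_n = 1$ and $\chi_d(z) = \sum_{i = 0}^{n + 1} d_i z^i$ with $d_{n + 1} = 1$, and for a polynomial $\chi$ of degree $\ell$ and a finite sequence $w$ set $(\Loc_{\chi} w)_k = \sum_{i = 0}^{\ell} \chi_i\, w_{k - \ell + i}$, so that the hypothesis reads $\left|(\Loc_{\chi_c} u)_k\right| < \varepsilon$ for $k \geq n$ and the goal is $\left|(\Loc_{\chi_d} u)_k\right| < (1 + |\alpha|)\varepsilon$ for $k \geq n + 1$.

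First I would compare coefficients in $\chi_d(z) = z\chi_c(z) - \alpha\chi_c(z)$ to obtain $d_i = c_{i - 1} - \alpha c_i$ for all $i$, with the boundary convention $c_{-1} = c_{n + 1} = 0$ (which also re-verifies $d_{n+1} = 1$ and $d_0 = -\alpha c_0$). Substituting this into the definition of $(\Loc_{\chi_d} u)_k$ and splitting the sum gives
\[
    (\Loc_{\chi_d} u)_k = \sum_{i = 0}^{n + 1} c_{i - 1}\, u_{k - (n + 1) + i} \;-\; \alpha \sum_{i = 0}^{n + 1} c_i\, u_{k - (n + 1) + i}.
\]
In the first sum the $i = 0$ term vanishes and the substitution $i \mapsto i + 1$ turns it into $(\Loc_{\chi_c} u)_k$; in the second sum the $i = n + 1$ term vanishes, and what remains is $(\Loc_{\chi_c} u)_{k - 1}$. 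Hence
\[
    (\Loc_{\chi_d} u)_k = (\Loc_{\chi_c} u)_k - \alpha\,(\Loc_{\chi_c} u)_{k - 1}.
\]
Since for $k \geq n + 1$ both $k$ and $k - 1$ are $\geq n$, the hypothesis bounds both summands on the right by $\varepsilon$, and the triangle inequality yields $\left|(\Loc_{\chi_d} u)_k\right| < \varepsilon + |\alpha|\varepsilon = (1 + |\alpha|)\varepsilon$, as required.

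I do not expect any genuine obstacle: the lemma is a direct computation. The only point demanding a little care is the bookkeeping of the two index shifts and of the boundary coefficients $c_{-1}$ and $c_{n + 1}$, together with the observation that the restriction $k \geq n + 1$ (rather than $k \geq n$) is precisely what guarantees that the shifted defect $(\Loc_{\chi_c} u)_{k - 1}$ still falls under the hypothesis; one should also check in passing that the prescribed terms $u_0, \dots, u_{2n - 1}$ suffice to evaluate every quantity appearing above.
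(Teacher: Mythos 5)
Your proof is correct and is exactly the ``straightforward calculation'' the paper alludes to without writing out: the coefficient identity $d_i = c_{i-1} - \alpha c_i$ yields the defect identity $(\Loc_{\chi_d}u)_k = (\Loc_{\chi_c}u)_k - \alpha(\Loc_{\chi_c}u)_{k-1}$, and the triangle inequality finishes it. The index bookkeeping, including why $k \geq n+1$ is needed so that the shifted defect is still covered by the hypothesis, is handled correctly.
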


\begin{Lemma}\label{Lemma: eliminating non-shared eigenvalues}
    Let $\alpha, \beta, \lambda_1,\dots,\lambda_{n - 1}$ be complex numbers with $\alpha \neq \beta$.
    Consider the polynomials
    $
        \chi_c(z) = (z - \alpha) \prod_{j = 1}^{n - 1} (z - \lambda_j) = \sum_{k = 0}^{n - 1} c_k z^k
    $
    and 
    $
        \chi_d(z) = (z - \beta) \prod_{j = 1}^{n - 1} (z - \lambda_j) = \sum_{k = 0}^{n - 1} d_k z^k.
    $
    Let 
    $
        \chi_e(z) = \prod_{j = 1}^{n - 1} (z - \lambda_j) = \sum_{k = 0}^{n - 2} e_k z^k.
    $
    Let $u_0, \dots, u_n \in \R$ satisfy 
    $
        \left|u_n + \sum_{k = 0}^{n - 1} c_k u_k\right| < \varepsilon
    $
    and
    $
        \left|u_n + \sum_{k = 0}^{n - 1} d_k u_k\right| < \delta.
    $

    Then 
    $
        \left|
            u_{n - 1} + \sum_{k = 0}^{n - 1} e_k u_k
        \right|
        <
        \frac{\varepsilon + \delta}{|\alpha - \beta|}.
    $
\end{Lemma}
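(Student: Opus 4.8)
The plan is to reduce the inequality to a single telescoping identity relating the three recurrence ``defects'' in the statement. Adopt the convention that $\chi_e(z) = z^{n-1} + e_{n-2}z^{n-2} + \dots + e_1 z + e_0$ is monic of degree $n-1$, set $e_{n-1} = 1$ and $e_{-1} = e_n = 0$, and write $\chi_c(z) = \sum_{k=0}^{n} c_k z^k$ and $\chi_d(z) = \sum_{k=0}^{n} d_k z^k$ with $c_n = d_n = 1$. Put
\[
    E = u_{n-1} + \sum_{k=0}^{n-2} e_k u_k = \sum_{k=0}^{n-1} e_k u_k, \qquad W = \sum_{k=0}^{n-1} e_k u_{k+1},
\]
and let $D_c = \sum_{k=0}^{n} c_k u_k$ and $D_d = \sum_{k=0}^{n} d_k u_k$ be the two quantities assumed to be bounded by $\varepsilon$ and $\delta$ respectively. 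The goal is to show $|E| < (\varepsilon + \delta)/|\alpha - \beta|$.

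First I would read off from $\chi_c(z) = (z - \alpha)\chi_e(z) = z\,\chi_e(z) - \alpha\,\chi_e(z)$ the coefficient relation $c_k = e_{k-1} - \alpha e_k$, valid for all $k$ under the stated conventions (in particular this correctly reproduces $c_n = 1$ and $c_0 = -\alpha e_0$). Substituting this into $D_c$ and splitting the sum,
\[
    D_c = \sum_{k=0}^{n} e_{k-1} u_k \;-\; \alpha \sum_{k=0}^{n} e_k u_k = \sum_{k=0}^{n-1} e_k u_{k+1} \;-\; \alpha \sum_{k=0}^{n-1} e_k u_k = W - \alpha E,
\]
where the first reindexing uses $e_{-1} = 0$ and the second uses $e_n = 0$. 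The identical computation applied to $\chi_d(z) = (z - \beta)\chi_e(z)$ gives $D_d = W - \beta E$.

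Subtracting these two identities cancels the common term $W$ and yields $(\alpha - \beta) E = D_d - D_c$. Since $\alpha \neq \beta$ we may divide, and the triangle inequality together with $|D_c| < \varepsilon$ and $|D_d| < \delta$ gives
\[
    \left| u_{n-1} + \sum_{k=0}^{n-2} e_k u_k \right| = |E| = \frac{|D_d - D_c|}{|\alpha - \beta|} \leq \frac{|D_c| + |D_d|}{|\alpha - \beta|} < \frac{\varepsilon + \delta}{|\alpha - \beta|},
\]
which is the claim. I do not anticipate any genuine obstacle here: the argument is exactly the elementary fact underlying the ``second lemma'' of the proof outline, made quantitative, and the only point needing (purely routine) care is the index bookkeeping at the endpoints $k = 0$ and $k = n$ that makes the telescoping valid. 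Note also that nothing in the argument uses the factorisation of $\chi_e$ into the linear factors $z - \lambda_j$; the same proof works verbatim with $\prod_{j=1}^{n-1}(z - \lambda_j)$ replaced by any monic polynomial of degree $n - 1$.
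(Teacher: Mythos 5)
Your proof is correct and is exactly the ``straightforward calculation'' the paper alludes to (it gives no explicit proof of this lemma): the identities $D_c = W - \alpha E$ and $D_d = W - \beta E$ follow from $c_k = e_{k-1} - \alpha e_k$ and $d_k = e_{k-1} - \beta e_k$, and subtracting gives $(\alpha-\beta)E = D_d - D_c$, whence the bound. Your index bookkeeping at the endpoints is right, and your observation that the factorisation of $\chi_e$ is irrelevant is also correct.
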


\begin{Lemma}\label{Lemma: perturbing equation to simpler equation}
    Let $(c, u) \in C_n$.
    Let $(e, w) \in C_m$ with $m \leq n$ 
    be such that there exists $e' \in \R^m$ 
    with $\chi_{e'} | \chi_c$ and:
    \begin{enumerate}
        \item $d_{\sigma}(e, e') < \varepsilon$.
        \item $|w_j - u_j| < \varepsilon$ for $j = 0, \dots, m - 1$.
        \item $|u_{k} + e'_{m - 1} u_{k - 1} + \dots + e'_{0} u_{k - m}| < \varepsilon$ for $k = m,\dots,n$.
    \end{enumerate}
    Then there exists 
    $(\widetilde{c}, \widetilde{u}) \in C_n$ with 
    \[ 
        d_{\sigma}\left((c, u), \left(\widetilde{c}, \widetilde{u}\right)\right) 
        =
        O\left(
            \left(n + \norm{c}_{\infty} + \norm{u}_\infty\right)^{O(n\log n)}
        \right)\varepsilon
    \]
    and 
    $\widetilde{u}_j = w_j$ for $j = 0,\dots m -1$
    such that 
    $\sem{(\widetilde{c}, \widetilde{u})} = \sem{(e,w)}$.
\end{Lemma}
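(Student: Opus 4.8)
The plan is to construct the perturbation explicitly by replacing the factor $\chi_{e'}$ inside $\chi_c$ by $\chi_e$, and then letting the initial data of the perturbed problem be dictated by $\sem{(e,w)}$. Since $\chi_{e'}$ is monic and divides $\chi_c$, the quotient $\phi := \chi_c/\chi_{e'}$ is monic of degree $n-m$; let $\widetilde{c}$ be the coefficient vector of the monic degree-$n$ polynomial $\chi_e \cdot \phi$. Pairing the roots of $\chi_e$ optimally with those of $\chi_{e'}$ (via a pairing witnessing $d_{\sigma}(e,e') < \varepsilon$) and the roots of $\phi$ with themselves shows at once that $d_{\sigma}(c,\widetilde{c}) \le d_{\sigma}(e,e') < \varepsilon$. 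Now set $\widetilde{u}_j := \sem{(e,w)}^{(j)}(0)$ for $j = 0,\dots,n-1$. Because $\chi_e$ divides $\chi_{\widetilde{c}}$, the function $\sem{(e,w)}$ also solves the order-$n$ equation with coefficient vector $\widetilde{c}$, and it realises precisely these $n$ initial values; by uniqueness of solutions, $\sem{(\widetilde{c},\widetilde{u})} = \sem{(e,w)}$, and $\widetilde{u}_j = w_j$ for $j < m$, as required. So everything reduces to estimating $\norm{u - \widetilde{u}}_{\infty}$.

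To bound $|u_j - \widetilde{u}_j|$ for $0 \le j \le n-1$ I would interpolate through exact recurrence extensions. For $j < m$ the bound $|u_j - \widetilde{u}_j| = |u_j - w_j| < \varepsilon$ is hypothesis~(2). For $m \le j \le n-1$, let $\widehat{u}$ (resp. $\widehat{w}$) be the sequence that agrees with $u$ (resp. $w$) on indices $0,\dots,m-1$ and satisfies the $\chi_{e'}$-recurrence exactly, and recall that $\widetilde{u}$ agrees with $w$ on indices $0,\dots,m-1$ and satisfies the $\chi_e$-recurrence exactly; then $|u_j - \widetilde{u}_j| \le |u_j - \widehat{u}_j| + |\widehat{u}_j - \widehat{w}_j| + |\widehat{w}_j - \widetilde{u}_j|$, and I would bound the three terms separately. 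Hypothesis~(3) says the finite sequence $u_m,\dots,u_n$ obeys the $\chi_{e'}$-recurrence up to an additive error below $\varepsilon$ at each step, while $\widehat{u}$ obeys it exactly and agrees with $u$ below index $m$; since the characteristic roots of $\chi_{e'}$ are roots of $\chi_c$ and hence have modulus at most $1+\norm{c}_{\infty}$, a recursive estimate in the spirit of Lemma~\ref{Lemma: adding eigenvalue} bounds the first term by $\varepsilon$ times a quantity of the form $(n+\norm{c}_{\infty})^{O(n)}$. The middle term is an ordinary perturbation of the initial data of a fixed recurrence, of size $<\varepsilon$, and is bounded similarly. For the last term one uses hypothesis~(1): by Vieta's formulas $e$ and $e'$ differ by at most $\varepsilon$ times a Lipschitz constant polynomial in $n$ and a power of the (bounded) root modulus, and propagating this coefficient perturbation through the recurrence, whose exact solution $\widehat{w}$ has modulus controlled by $n$, $\norm{c}_{\infty}$ and $\norm{w}_{\infty} \le \norm{u}_{\infty}+1$, bounds it by $\varepsilon\,(n+\norm{c}_{\infty}+\norm{u}_{\infty})^{O(n)}$. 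Combining the three bounds with $d_{\sigma}(c,\widetilde{c})<\varepsilon$ yields the asserted estimate on $d_{\sigma}((c,u),(\widetilde{c},\widetilde{u}))$.

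The main obstacle is the quantitative bookkeeping behind these recursive estimates. The coefficients of $\chi_{e'}$, and the magnitudes of the solutions $\widehat{u}$, $\widehat{w}$, $\widetilde{u}$ over the first $n$ indices, can each be as large as a fixed power of $1+\norm{c}_{\infty}$, so one must verify that combining the two elementary error-propagation facts --- that multiplying a recurrence by $(z-\alpha)$ scales its error by $1+|\alpha|$ (Lemma~\ref{Lemma: adding eigenvalue}), and that combining recurrences with characteristic polynomials $(z-\alpha)\chi$ and $(z-\beta)\chi$ recovers $\chi$ with error scaled by $1/|\alpha-\beta|$ (Lemma~\ref{Lemma: eliminating non-shared eigenvalues}) --- over the roughly $n$ roots of $\chi_c$ keeps the total blow-up within $(n+\norm{c}_{\infty}+\norm{u}_{\infty})^{O(n\log n)}$; in particular the naive companion-matrix estimate for solution growth is too lossy, and one must instead track that solutions of an order-$k$ recurrence grow only polynomially in the index times a power of the largest root modulus. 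A secondary point, which fixes the precise shape of the hypotheses, is that hypothesis~(3) constrains only the terms $u_m,\dots,u_n$, so all recursive estimates must be run only up to index $n$ and the conclusion can only control $\widetilde{u}_0,\dots,\widetilde{u}_{n-1}$ --- which is exactly the data needed for a Cauchy problem in $C_n$.
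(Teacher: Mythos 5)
Your proposal is correct and follows essentially the same route as the paper: you build $\widetilde{c}$ by swapping the factor $\chi_{e'}$ of $\chi_c$ for $\chi_e$, extend $w$ to $\widetilde{u}$ by the $\chi_e$-recurrence (equivalently, the derivatives of $\sem{(e,w)}$ at $0$), conclude $\sem{(\widetilde{c},\widetilde{u})}=\sem{(e,w)}$ by uniqueness, and control $\norm{u-\widetilde{u}}_\infty$ by recursive error propagation up to index $n$. The only difference is cosmetic: you split the error into three terms via intermediate exact-recurrence sequences, whereas the paper runs a single combined recursion absorbing all three error sources at once.
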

\begin{proof}
    Write $\chi_c = \chi_{e'} \cdot \chi_{c'}$.
    Let 
    $\widetilde{c}$
    be defined by letting 
    $\chi_{\tilde{c}} = \chi_{e} \cdot \chi_{c'}$.

    Let $\widetilde{u}_j = w_j$ for $j = 0,\dots, m - 1$.
    For $j \geq m$, let $\widetilde{u}_j$ be defined by the recurrence
    \[\widetilde{u}_{j} = -e_{m - 1} \widetilde{u}_{j - 1} - \dots - e_{0} \widetilde{u}_{j - m}.\]
    
    By construction, the function $\sem{(\widetilde{c}, \widetilde{u})}$ defines the linear recurrence with characteristic polynomial $\chi_e$.
    Its first $m$ derivatives agree with those of $\sem{(e,w)}$.
    Hence we must have $\sem{(\widetilde{c}, \widetilde{u})} = \sem{(e,w)}$.

    By construction, the spectral distance between $\widetilde{c}$ and $c$ is smaller than $\varepsilon$
    and the distance between $\widetilde{u}_j$ and $u_j$ is smaller than $\varepsilon$ for $j = 0,\dots, m -1$.
    It remains to estimate the error 
    $|\widetilde{u}_j - u_j|$
    for $j = m, \dots, n$.
    Let $\varepsilon_j$ denote $|\widetilde{u}_j - u_j|$.
    We have $\varepsilon_j < \varepsilon$ for $j = 0, \dots, m -1$
    and 
    $\varepsilon_j \leq \varepsilon_{j + 1}$
    for $j = 0, \dots, m - 2$.

    We have:
    \begin{align*}
        |\widetilde{u}_j - u_j|
        &=
        \left|
        -e_{m - 1} \widetilde{u}_{j - 1} - \dots - e_{0} \widetilde{u}_{j - m}
        -
        u_j 
        \right|\\
        &\leq 
        \left|
            u_j + e'_{m - 1} u_{j - 1} + \dots + e'_{0} u_{j - m}
        \right|
        +
        \sum_{k = 1}^m
        \left|
            e'_{m - k} u_{j - k} - e_{m - k} \widetilde{u}_{j - k}
        \right| \\
        &\leq 
        \varepsilon
        +
        \sum_{k = 1}^m
        \left(
        \left|
            e'_{m - k}
        \right|
        \cdot
        \left|u_{j - k} - \widetilde{u}_{j - k}\right|
        +
        \left|
            \widetilde{u}_{j - k} 
        \right|
        \cdot 
        \left|
            e'_{m - k} - e_{m - k}
        \right|
        \right)\\
        &\leq 
        \varepsilon
        +
        \sum_{k = 1}^m
        \left(
        \norm{
            e' 
        }
        \cdot
        \left|u_{j - k} - \widetilde{u}_{j - k}\right|
        +
        \norm{
            \widetilde{u}
        }
        \cdot
        \norm{e' - e}
        \right)\\
        &\leq 
        \varepsilon
        +
        \norm{e'}
        m
        \varepsilon_{j - 1}
        +
        m
        \norm{
            \widetilde{u}
        }
        \cdot 
        \norm{e' - e}.
    \end{align*}

    It follows that 
    \[
        \varepsilon_{n} 
        \leq 
        \norm{e'}^{n - m + 1} m^{n - m + 1} \varepsilon
        + \left(m \norm{\widetilde{u}} \cdot \norm{e' - e} + \varepsilon\right) 
        \frac{1 - \norm{e'}^{n - m + 1}m^{n - m + 1}}{1 - \norm{e'}m}. 
    \]
    Estimating generously, we obtain 
    $\norm{e - e'} < n! (C + \varepsilon)^{n} \varepsilon$
    where $C$ is a bound on the roots of $\chi_c$.
    
    We obtain:
    \[
        \varepsilon_{n} 
        \leq 
        \left(\norm{e'}^{n - m + 1} m^{n - m + 1}
        + \left(m \norm{\widetilde{u}} \cdot n! (C + \varepsilon)^{n} + 1\right) 
        \frac{1 - \norm{e'}^{n - m + 1}m^{n - m + 1}}{1 - \norm{e'}m}
        \right)
        \varepsilon. 
    \]
    Standard estimates on $C$ and $\norm{e'}$ show that the bound is of the required shape.
\end{proof}

We are now in a position the prove Lemma \ref{Lemma: approximate equality perturbation}.
\begin{proof}[Proof of Lemma \ref{Lemma: approximate equality perturbation}]
First, assume that $n = m$.
Assume that $|u_k - v_k| \leq \varepsilon \leq 1$ for all $0 \leq k \leq n + m - 1$.
Then we have:
    \begin{align*}
        &\left|u_{k} + d_{n - 1} u_{k - 1} + \dots + d_{0} u_{k - n}\right|\\
        &\leq
            \left|u_{k} - v_k\right| 
            + \left|d_{n - 1}\right| \left|u_{k - 1} - v_{k - 1}\right| 
            + \dots 
            + \left|d_{0}\right| \left|u_{k - n} - v_{k - n}\right|\\
        &\leq 
        (n \norm{d}_{\infty} + 1) \varepsilon.
    \end{align*}
Write $\delta = \left(n \norm{d}_{\infty} + 1\right) \varepsilon$.
Write $\eta = \varepsilon^{\frac{1}{4n - 2}}$.
Write 
$\chi_c = \prod_{j = 1}^{n}(z - \lambda_j)$
and
$\chi_d = \prod_{j = 1}^{n}(z - \mu_j)$.

Consider the graph whose vertices are pairs of integers 
$(j, k)$ 
with 
$j \in \{1,2\}$ 
and 
$k \in \{1,\dots,n\}$,
and where there is an edge between 
$(1, k)$ 
and 
$(2, \ell)$
if and only if we have
$
    \left|\lambda_k - \mu_\ell\right| < \eta
$.
This graph is clearly bipartite.
Choose a maximal matching $\mathcal{M}$ in this graph.
Consider the set of vertices $\mathcal{V}$ which are not incident at an edge of $\mathcal{M}$.
Since the graph is bipartite, the set $\mathcal{V}$ has even cardinality,
all vertices in $\mathcal{V}$ are independent, 
and we have 
$\left|\Set{(j, k) \in \mathcal{V}}{j = 1}\right| 
= 
\left|\Set{(j, k) \in \mathcal{V}}{j = 2}\right|$.

Thus, up to relabelling the $\lambda_j$'s and $\mu_j$'s there exists an $s \in \{0,\dots,n\}$
such that 
$
    \left|\lambda_j - \mu_k\right| > \eta
$
for all $j, k \in \{1,\dots,s\}$
and 
$
    \left|\lambda_j - \mu_j\right| \leq \eta
$
for all $j > s$.

Consider the vector 
$e \in \R^{n - s}$ 
with 
$\chi_e(z) = \prod_{j = s + 1}^{n} (z - \lambda_j)$.
We will show that 
\[
    \left|
        u_k + e_{n - s - 1} u_{k - 1} + \dots + e_{0} u_{k - n + s}
    \right|
    \leq
    \binom{2 s - 1}{s - 1} \frac{\left(1 + \norm{c}_{\infty}\right)^{s - 1} \delta}{\eta^{2s - 1}}
\]
for all $n - s \leq k \leq 2n - 1$.

For $p, q \in \{0,\dots,s\}$, let $C(p, q)$ denote the coefficient vector of linear recurrence with characteristic polynomial
$
    \prod_{j = 1}^{p}(z - \lambda_j) \prod_{j = 1}^q (z - \mu_j) \prod_{j = s + 1}^{n} (z - \lambda_j)
$.
Let $\dim(p, q) = p + q + n - s$.
Write
\[
    E(p, q) = \max_{k \geq \dim(p,q)} \left|u_{k} + C(p, q)_{\dim(p,q) - 1} u_{k - 1} + \dots + + C(p, q)_{0} u_{k - \dim(p,q)} \right|.
\]
Then our claim can be rephrased as 
$E(0, 0) \leq \binom{2 s - 1}{s - 1} \frac{\left(1 + \norm{c}_{\infty}\right)^{s - 1} \delta}{\eta^{2s - 1}}$.

We will prove this by recursively estimating $E(p, q)$ in terms of $E(p + 1, q)$ and $E(p, q + 1)$.
The next two lemmas provide the base case and the recursive estimate:

\begin{Lemma}\label{Lemma: E(p, q) one argument equals s}
    We have 
    $E(s, q) = 0$
    and 
    $E(p, s) \leq (1 + \norm{c}_{\infty})^p\delta$.
\end{Lemma}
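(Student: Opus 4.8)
The plan is to unwind what the two claims say about residuals of linear recurrences and reduce both to Lemma~\ref{Lemma: adding eigenvalue} together with facts already recorded in the proof of Lemma~\ref{Lemma: approximate equality perturbation}. In both cases one writes down the characteristic polynomial of $C(p,q)$, factors it in a convenient way, and propagates a residual bound through the quantitative version of the ``first lemma'' (now Lemma~\ref{Lemma: adding eigenvalue}).

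For $E(s,q)=0$: the characteristic polynomial of $C(s,q)$ is
\[
    \prod_{j=1}^{s}(z-\lambda_j)\,\prod_{j=1}^{q}(z-\mu_j)\,\prod_{j=s+1}^{n}(z-\lambda_j)
    \;=\;
    \chi_c(z)\prod_{j=1}^{q}(z-\mu_j),
\]
a multiple of $\chi_c$. Since $(u_j)_j$ satisfies the recurrence with characteristic polynomial $\chi_c$ \emph{exactly} (for all $k\ge n$), applying Lemma~\ref{Lemma: adding eigenvalue} $q$ times with $\varepsilon=0$ -- once for each additional factor $z-\mu_j$ -- shows that $(u_j)_j$ satisfies the recurrence with characteristic polynomial $\chi_c\prod_{j=1}^{q}(z-\mu_j)$ exactly for all $k\ge n+q=\dim(s,q)$. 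Hence every term in the maximum defining $E(s,q)$ vanishes.

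For $E(p,s)\le(1+\norm{c}_\infty)^{p}\delta$: I would first record the case $p=0$. The characteristic polynomial of $C(0,s)$ is $\prod_{j=1}^{s}(z-\mu_j)\prod_{j=s+1}^{n}(z-\lambda_j)$, which, in the chosen labelling (where $\lambda_j$ was matched with $\mu_j$ for $j>s$) we identify with $\chi_d$; and at the start of the proof of Lemma~\ref{Lemma: approximate equality perturbation} it was shown that $\left|u_k+d_{n-1}u_{k-1}+\dots+d_0u_{k-n}\right|\le(n\norm{d}_\infty+1)\varepsilon=\delta$ for all relevant $k$, so $E(0,s)\le\delta$. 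For general $p$ the characteristic polynomial of $C(p,s)$ is obtained from that of $C(0,s)$ by multiplying by $\prod_{j=1}^{p}(z-\lambda_j)$, so $p$ further applications of Lemma~\ref{Lemma: adding eigenvalue} yield a residual bounded by $\bigl(\prod_{j=1}^{p}(1+|\lambda_j|)\bigr)\delta$, valid for $k\ge n+p=\dim(p,s)$. Since each $\lambda_j$ is a root of the monic polynomial $\chi_c$ and hence satisfies $|\lambda_j|\le\norm{c}_\infty$ by a standard root bound, this is at most $(1+\norm{c}_\infty)^{p}\delta$.

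The argument is essentially routine once Lemma~\ref{Lemma: adding eigenvalue} is available, and I do not expect a serious obstacle. The point that most needs care is the bookkeeping of the ranges of indices $k$ on which each residual estimate is valid: each application of Lemma~\ref{Lemma: adding eigenvalue} raises the lower bound on $k$ by one, so one must check that after $q$ (respectively $p$) applications this lower bound is exactly $\dim(s,q)$ (respectively $\dim(p,s)$), which is precisely what the definition of $E(p,q)$ requires. The remaining ingredients -- the two factorisations of the characteristic polynomials of $C(s,q)$ and $C(0,s)$, and the bound on the roots of $\chi_c$ in terms of $\norm{c}_\infty$ -- are standard.
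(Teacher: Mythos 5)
Your proof follows the same route as the paper's own: $E(s,q)=0$ because $\chi_c$ divides $\chi_{C(s,q)}$ and $(u_k)_k$ satisfies the $\chi_c$-recurrence exactly, and $E(p,s)$ is obtained from a base-case bound on the residual against $\chi_{C(0,s)}$ by applying Lemma~\ref{Lemma: adding eigenvalue} once per added factor $(z-\lambda_j)$. Your index bookkeeping (each application raising the validity threshold by one, landing exactly at $\dim(s,q)$ resp.\ $\dim(p,s)$) is in fact cleaner than the paper's two-line argument, which states the base case and inductive step with garbled indices ($E(p,0)=\delta$ and $E(p,s+1)\le(1+\norm{c}_\infty)E(p,s)$, where $E(0,s)\le\delta$ and $E(p+1,s)\le(1+\norm{c}_\infty)E(p,s)$ are evidently meant).

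Two caveats, both of which you inherit from the paper rather than introduce. First, $\chi_{C(0,s)}=\prod_{j\le s}(z-\mu_j)\prod_{j>s}(z-\lambda_j)$ is \emph{not} $\chi_d=\prod_{j\le n}(z-\mu_j)$: the matching only guarantees $|\lambda_j-\mu_j|\le\eta$ for $j>s$, not equality, so your ``we identify with $\chi_d$'' is an assertion, not a fact. Making it rigorous requires either an extra additive error of order $\eta$ times polynomial factors (which, since $\eta\gg\delta$, would change the final estimate), or a preliminary normalisation setting $\mu_j:=\lambda_j$ for $j>s$ (a perturbation of $d$ within the budget of the surrounding argument). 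The paper's ``$E(p,0)=\delta$ by definition of $\delta$'' elides exactly the same point, so this is a shared weak spot rather than a defect of your proof relative to the paper's. Second, the standard root bound for a monic polynomial is $|\lambda_j|\le 1+\norm{c}_\infty$, not $|\lambda_j|\le\norm{c}_\infty$ (consider $z^2-z-1$), so each application of Lemma~\ref{Lemma: adding eigenvalue} really contributes a factor $2+\norm{c}_\infty$; this only affects the constant and again matches what the paper implicitly does.
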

\begin{proof}
    We have $E(s, q) = 0$ since the sequence $(u_k)_k$ satisfies the linear recurrence associated with $\chi_c$, 
    which divides $\chi_{C(s, q)}$ for all $q \geq 0$.

    We have $E(p, 0) = \delta$ by definition of $\delta$.
    By Lemma \ref{Lemma: adding eigenvalue} we have $E(p, s + 1) \leq (1 + \norm{c}_{\infty}) E(p, s)$.
    The claim now follows by induction.
\end{proof}

\begin{Lemma}\label{Lemma: E(p, q) recursive equation}
    Let $p, q \in \{0,\dots,s - 1\}$.
    Then we have
    $E(p, q) \leq \frac{1}{\eta} \left(E(p + 1, q) + E(p, q + 1)\right)$.
\end{Lemma}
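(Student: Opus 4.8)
The plan is to prove Lemma~\ref{Lemma: E(p, q) recursive equation} by applying the quantitative ``eliminating non-shared eigenvalues'' lemma (Lemma~\ref{Lemma: eliminating non-shared eigenvalues}) to a suitably factored pair of characteristic polynomials. Fix $p, q \in \{0,\dots,s-1\}$ and set $\psi(z) = \prod_{j=1}^{p}(z-\lambda_j)\prod_{j=1}^{q}(z-\mu_j)\prod_{j=s+1}^{n}(z-\lambda_j)$, so that $\chi_{C(p,q)} = \psi$, $\chi_{C(p+1,q)} = (z-\lambda_{p+1})\psi$, and $\chi_{C(q,p+1)}$, after relabelling, equals $(z-\mu_{q+1})\psi$. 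Note that by our choice of $s$ and the relabelling, $|\lambda_{p+1} - \mu_{q+1}| > \eta$ since both indices lie in $\{1,\dots,s\}$.

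First I would observe that by definition of $E$, for every $k \geq \dim(p+1,q) = \dim(p,q)+1$ the sequence $(u_j)_j$ satisfies the recurrence with characteristic polynomial $(z-\lambda_{p+1})\psi$ up to error $E(p+1,q)$, and likewise the recurrence with characteristic polynomial $(z-\mu_{q+1})\psi$ up to error $E(p,q+1)$. For a fixed $k \geq \dim(p,q)+1$, I would then apply Lemma~\ref{Lemma: eliminating non-shared eigenvalues} with the window of consecutive terms $u_{k-\dim(p,q)-1}, \dots, u_k$, taking $\alpha = \lambda_{p+1}$, $\beta = \mu_{q+1}$, the common factor polynomial $\chi_e$ there equal to our $\psi$ (of degree $\dim(p,q)$), $\varepsilon = E(p+1,q)$ and $\delta = E(p,q+1)$. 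The conclusion of that lemma gives
\[
    \left| u_{k-1} + \sum_{i=0}^{\dim(p,q)-1} \psi_i\, u_{k-1-\dim(p,q)+i} \right| < \frac{E(p+1,q) + E(p,q+1)}{|\lambda_{p+1} - \mu_{q+1}|} < \frac{E(p+1,q)+E(p,q+1)}{\eta},
\]
which, after re-indexing $k-1 \rightsquigarrow k$ and noting this holds for all $k \geq \dim(p,q)$, is exactly the statement that $E(p,q) \leq \frac{1}{\eta}\bigl(E(p+1,q)+E(p,q+1)\bigr)$.

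The one point requiring care is the exact alignment of index windows: Lemma~\ref{Lemma: eliminating non-shared eigenvalues} is stated for a fixed block $u_0,\dots,u_n$ of $n+1$ terms with a degree-$n$ polynomial $\chi_c = (z-\alpha)\chi_e$, so I must verify that the shifted block $(u_{k-\dim(p,q)-1},\dots,u_k)$ really has the $E(p+1,q)$ and $E(p,q+1)$ estimates available at this shift — which holds precisely because $k \geq \dim(p,q)+1 = \dim(p+1,q)$ ensures $k$ is in the admissible range for both $E(p+1,q)$ and $E(p,q+1)$. I also need the trivial observation that $\chi_{C(q,p+1)} = \chi_{C(p+1,q)}$ as polynomials after the relabelling of roots is irrelevant since $E$ depends only on the polynomial, and that the two factorizations $(z-\lambda_{p+1})\psi$ and $(z-\mu_{q+1})\psi$ share the common factor $\psi$ exactly. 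I expect the main (mild) obstacle to be bookkeeping the off-by-one in the index window so that the hypotheses of Lemma~\ref{Lemma: eliminating non-shared eigenvalues} are literally met; there is no substantive difficulty beyond that, as the recursive estimate is a direct instantiation of the quantitative elimination lemma.
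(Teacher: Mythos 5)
Your argument is correct and is exactly the paper's proof, which simply states that the lemma follows from the definition of $E(p,q)$ together with Lemma~\ref{Lemma: eliminating non-shared eigenvalues}; you have filled in the same instantiation ($\alpha=\lambda_{p+1}$, $\beta=\mu_{q+1}$, common factor $\chi_{C(p,q)}$, and $|\lambda_{p+1}-\mu_{q+1}|>\eta$ since $p+1,q+1\le s$) with the correct index bookkeeping. The only blemish is the notational slip $C(q,p+1)$ where you mean $C(p,q+1)$, which does not affect the argument.
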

\begin{proof}
    This follows immediately from the definition of $E(p, q)$ and Lemma \ref{Lemma: eliminating non-shared eigenvalues}.
\end{proof}

Finally, we can recursively estimate $E(p, q)$ for all $p$ and $q$:

\begin{Lemma}\label{Lemma: estimating E(p, q)}
    We have the following estimate for all $q, p \in \{0,\dots,s\}$:
    \[ 
        E(p, q) \leq \binom{2s - 1 - (p + q)}{s - 1 - p} \frac{\left(1 + \norm{c}_\infty\right)^{s - 1} \delta}{\eta^{2s - 1 - (p + q)}}.
    \]
    In the above formula, we let $\binom{n}{k} = 0$ for all $n \geq 0$ and all $k < 0$.
\end{Lemma}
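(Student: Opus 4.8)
The plan is to prove the estimate by induction on the quantity $2s - p - q \geq 0$ (equivalently, by downward induction on $p + q$, from $2s$ down to $0$). This is the natural induction to run, since Lemma \ref{Lemma: E(p, q) recursive equation} bounds $E(p,q)$ in terms of $E(p+1,q)$ and $E(p,q+1)$, both of which have strictly larger coordinate sum, and since Lemma \ref{Lemma: E(p, q) one argument equals s} supplies closed-form bounds on the boundary $p = s$ or $q = s$ of the index square $\{0,\dots,s\}^2$. Throughout, I would use that $0 < \eta < 1$ (as $\eta = \varepsilon^{1/(4n-2)}$ with $0 < \varepsilon < 1$), so that enlarging the exponent of $1/\eta$ only weakens an inequality.

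First I would treat the boundary cases directly. When $p = s$, the lower index of the binomial on the right-hand side is $s - 1 - s = -1$, so the bound reads $0$, matching $E(s,q) = 0$ from Lemma \ref{Lemma: E(p, q) one argument equals s}; this also covers the genuine base case $p = q = s$, where $2s - p - q = 0$. When $q = s$ and $p \leq s-1$, the top and bottom of the binomial coincide (both equal $s - 1 - p \geq 0$), so the right-hand side simplifies to $(1 + \norm{c}_\infty)^{s-1}\delta \,\eta^{-(s-1-p)}$; since $p \leq s-1$ and $\eta < 1$, this dominates $(1+\norm{c}_\infty)^{p}\delta$, which is the bound furnished by Lemma \ref{Lemma: E(p, q) one argument equals s}.

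For the inductive step I would assume $p, q \leq s-1$ and that the estimate already holds for $(p+1,q)$ and $(p,q+1)$ (both legitimate indices in $\{0,\dots,s\}^2$). By Lemma \ref{Lemma: E(p, q) recursive equation}, $E(p,q) \leq \eta^{-1}\bigl(E(p+1,q) + E(p,q+1)\bigr)$. Substituting the two instances of the induction hypothesis, the scalar factor $(1+\norm{c}_\infty)^{s-1}\delta$ factors out, the two $\eta$-powers agree and, after multiplication by $\eta^{-1}$, produce exactly $\eta^{-(2s-1-p-q)}$; and the two binomial coefficients $\binom{2s-2-p-q}{s-2-p}$ and $\binom{2s-2-p-q}{s-1-p}$ combine, via Pascal's rule, into $\binom{2s-1-p-q}{s-1-p}$, which is precisely the binomial in the claimed bound. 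This closes the induction. I expect the only real (and fairly mild) obstacle to be the bookkeeping around the extended binomial convention: one must check that every binomial appearing in the interior step has non-negative upper index (which uses $p,q \leq s-1$) so that Pascal's identity is legitimately applicable, and that the degenerate value $\binom{n}{-1} = 0$ lines up with the honest vanishing $E(s,q) = 0$ on the boundary. No analytic input beyond Lemmas \ref{Lemma: E(p, q) one argument equals s} and \ref{Lemma: E(p, q) recursive equation} is required.
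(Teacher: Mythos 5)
Your proposal is correct and follows essentially the same route as the paper: induction on $2s-(p+q)$, with the boundary cases $p=s$ and $q=s$ handled via Lemma \ref{Lemma: E(p, q) one argument equals s} (using $\eta<1$), and the interior step combining Lemma \ref{Lemma: E(p, q) recursive equation} with Pascal's rule. The only cosmetic difference is that the paper folds the boundary cases into each inductive step rather than dispatching them all up front.
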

\begin{proof}
    We prove the claim by induction on 
    $k = 2s - (p + q)$
    when $k \geq 1$.
    
    Consider the case $k = 1$.
    Then $(p,q) \in \{(s, s - 1), (s - 1, s)\}$.
    We have 
    \[ 
        E(s, s - 1) = 0 = \binom{0}{-1} \frac{\left(1 + \norm{c}_\infty\right)^{s - 1}\delta}{\eta^{0}}
    \]
    and 
    \[ 
        E(s - 1, s) 
        \leq 
        (1 + \norm{c}_{\infty})^{s - 1}\delta 
        =
        \binom{0}{0} \frac{\left(1 + \norm{c}_\infty\right)^{s - 1}\delta}{\eta^{0}}.
    \]
    
    Assume that we have established the claim for $k < 2s$.
    Consider $p, q$ with 
    $k + 1 = 2s - (p + q)$,
    \textit{i.e.}
    $p + q = 2s - (k + 1)$.

    If $p = s$, then $q = s - (k + 1) \geq 0$, so that
    \[
        E(p, q) 
        = 0 
        =
        \binom{k}{-1} \frac{\left(1 + \norm{c}_\infty\right)^{s - 1}\delta}{\eta^{k}}.
    \]
    If $q = s$, then $p = s - (k + 1) \geq 0$
    \[
        E(p, q) 
        \leq 
        \left(1 + \norm{c}_\infty\right)^{s - 1} \delta 
        \leq 
        \binom{k}{k} \frac{\left(1 + \norm{c}_\infty\right)^{s - 1} \delta}{\eta^{k}}.
    \]
    Here, we have used that $\eta < 1$.

    Now, assume that $p, q \in \{0,\dots,s - 1\}$.
    Then by Lemma \ref{Lemma: E(p, q) recursive equation}, we have 
    \[
        E(p, q) \leq \frac{1}{\eta} \left(E(p + 1, q) + E(p, q + 1)\right).
    \]
    Employing the induction hypothesis we obtain:
    \begin{align*}
        &E(p, q)\\ 
        &\leq 
        \frac{1}{\eta} 
            \left(
                \binom{2s - 1 - (p + 1 + q)}{s - 1 - (p + 1)} \frac{\left(1 + \norm{c}_\infty\right)^{s - 1} \delta}{\eta^{2s - 1 - (p + 1 + q)}}
                + 
                \binom{2s - 1 - (p + q + 1)}{s - 1 - p} \frac{\left(1 + \norm{c}_\infty\right)^{s - 1} \delta}{\eta^{2s - 1 - (p + q + 1)}}
            \right)
        \\
        &= \binom{2s - 1 - (p + q)}{s - 1 - p} \frac{\left(1 + \norm{c}_\infty\right)^{s - 1} \delta}{\eta^{2s - 1 - (p + q)}}
        .
    \end{align*}
\end{proof}

Our claim now follows by setting $p = q = 0$ in Lemma \ref{Lemma: estimating E(p, q)}.
Since we have $\eta = \varepsilon^{\frac{1}{4n - 2}}$, we obtain 
\[
    E(0, 0) \leq 
    \binom{2 s - 1}{s - 1} \left(1 + \norm{c}_{\infty}\right)^{s - 1}\left(n \norm{d}_{\infty} + 1\right) \varepsilon^{\tfrac{1}{2}}.
\]
Now, let $\chi_e = \prod_{j = s + 1}^n (z - \lambda_j)$.
By Lemma \ref{Lemma: perturbing equation to simpler equation} there exists 
$\left(\widetilde{c}, \widetilde{u}\right) \in C_n(\C)$
with $\sem{(\widetilde{c},\widetilde{u})} = \sem{(e, u)}$ and 
\[ 
    d_{\sigma}
        \left(
            (c,u),
            \left(\widetilde{c}, \widetilde{u}\right)
        \right)
    \leq
    O\left(
        \left(n + \norm{c}_{\infty} + \norm{d}_{\infty} + \norm{u}_\infty\right)^{O(n\log n)}
    \right)
    \varepsilon^{\tfrac{1}{2}}.
\]
Applying Lemma \ref{Lemma: perturbing equation to simpler equation} to $(d, v)$ and $(e, u)$ we obtain a perturbation
$\left(\widetilde{d}, \widetilde{v}\right)$ of $(d, v)$ 
with 
$\sem{\left(\widetilde{d}, \widetilde{v}\right)} = \sem{(e, u)}$
and 
\[
    d_{\sigma}
        \left(
            (d, v),
            \left(\widetilde{d}, \widetilde{v}\right)
        \right)
    \leq
    O\left(
        \left(n + \norm{c}_{\infty} + \norm{d}_{\infty} + \norm{u}_\infty\right)^{O(n\log n)}
    \right)
    \varepsilon^{\tfrac{1}{4n - 2}}.
\]

If $n \neq m$, say $m < n$, choose $n - m$ roots of $\chi_c$ and add them as roots to $\chi_d$, carry out the above perturbation,
and remove the added roots from the characteristic polynomial of the perturbed instance (observing that the added roots themselves do not get perturbed).
This yields an estimate of the same shape.
\end{proof}

\section{Proof of Theorem \ref{Theorem: main theorem}}
We can now give the full algorithm for inequality testing:

\begin{Algorithm}\label{Algorithm: Inequality}
    \hfill
\begin{itemize}
    \item 
    \textbf{Input.} A pair of C-finite functions, specified by Cauchy problems $(c, u, d, v) \in C_n(\R) \times C_m(\R)$.
    \item $\textbf{Behaviour.}$ 
    The algorithm may halt and return a truth value or run indefinitely. 
    If the algorithm halts, it returns $\text{``true''}$ if and only if 
    $\sem{(c,u)}(t) \geq \sem{(d,v)}(t)$ for all sufficiently large $t \geq 0$.
    \item $\textbf{Procedure.}$
\begin{enumerate}
    \item\label{Step: computation of roots}
    Compute $\lambda_1,\dots,\lambda_{n + m} \in \C$ 
    such that the list $\lambda_1,\dots,\lambda_n$ contains all roots of $\chi_c$,
    and the list $\lambda_{n + 1},\dots,\lambda_{n + m} \in \C$ contains all roots of $\chi_d$, 
    listed with multiplicity.

   \item\label{Step: bound} Compute an integer 
    $B > \log\left(\max\left\{\norm{u}_\infty, \norm{v}_\infty, |\lambda_1|,\dots,|\lambda_{n + m}|\right\} + 1\right)$. 
    
    \item For $N \in \N$:

    \begin{enumerate}[3.1.]
        \item 
    Let $M = \max\{\Omega\left(n, B, N + 1\right), \Omega\left(m, B, N + 1\right)\}$,
    where $\Omega$ is the polynomial from the second item of Theorem \ref{Theorem: coefficient functions}.

    \item Query the numbers $\lambda_j$ for approximations $\widetilde{\lambda}_j \in \Q[i]$ with 
        $|\Re(\widetilde{\lambda}_j) - \Re(\lambda_j)| < 2^{-M}$ and $|\Im(\widetilde{\lambda}_j) - \Im(\lambda_j)| < 2^{-M}$
    such that the polynomials 
        $\prod_{j = 1}^n \left(z - \widetilde{\lambda}_j\right)$ and $\prod_{j = n + 1}^{n + m} \left(z - \widetilde{\lambda}_j\right)$
    have real coefficients.

    \item Compute an $(n + m) \times (n + m)$-matrix encoding the relation
    $\preceq_M \subseteq \{1,\dots,n + m\}$ which is defined as follows:
            $j \preceq_M k$
        if and only if 
            $\Re(\widetilde{\lambda}_j) - 2^{-M} < \Re(\widetilde{\lambda}_k) + 2^{-M}$.

    \item Compute the sets 
        $\mathcal{M}_1 = \Set{k \in \{1,\dots,n\}}{j \preceq_M k \text{ for all }j \in \{1,\dots,n + m\}}$
        and

        $\mathcal{M}_2 = \Set{k \in \{n + 1,\dots,n + m\}}{j \preceq_M k \text{ for all }j \in \{1,\dots,n + m\}}$

    \item Initialise Kleeneans\footnote{\textit{i.e.}, variables that can assume three values: \texttt{true}, \texttt{false}, and \texttt{unknown}.}  
    $\texttt{c-positive?}$ and $\texttt{d-positive?}$ with value $\texttt{unknown}$.

    \item If $|\mathcal{M}_1| = 1$:
    
        Writing $\mathcal{M}_1 = \{j_1\}$ and $\{j_2,\dots,j_n\} = \{1,\dots,n\}\setminus\{j_1\}$,
        compute an approximation to 
        $(-1)^{n - 1} G_{1,n}(\lambda_{j_1},\dots,\lambda_{j_n}, u_0,\dots,u_{n - 1})$
        to error $2^{-N - 1}$.
        If the result is greater than or equal to $2^{-N}$,
            assign the value $\texttt{true}$ to $\texttt{c-positive?}$.
        If the result is less than or equal to $-2^{-N}$,
            assign the value $\texttt{false}$ to $\texttt{c-positive?}$.
        
    \item If $|\mathcal{M}_2| = 1$:
    
        Writing $\mathcal{M}_2 = \{k_1\}$ and $\{k_2,\dots,k_n\} = \{n + 1,\dots,n + m\}\setminus\{k_1\}$,
        compute an approximation to 
        $(-1)^{m - 1} G_{1,m}(\lambda_{k_1},\dots,\lambda_{k_n}, v_0, \dots, v_{m - 1})$
            to error $2^{-N - 1}$.
        If the result is greater than or equal to $2^{-N}$,
            assign the value $\texttt{true}$ to $\texttt{d-positive?}$.
        If the result is less than or equal to $-2^{-N}$,
            assign the value $\texttt{false}$ to $\texttt{d-positive?}$.
    
    \item\label{Step: halt if c has dominant real root} If $|\mathcal{M}_1| = 1$, $|\mathcal{M}_2| = 0$, and $\texttt{c-positive?} \neq \texttt{unknown}$:
        Halt and output the value of $\texttt{c-positive?}$.
    \item\label{Step: halt if d has dominant real root} If $|\mathcal{M}_1| = 0$ and $|\mathcal{M}_2| = 1$, and $\texttt{d-positive?} \neq \texttt{unknown}$:
        Halt and output the negated value of $\texttt{d-positive?}$.
    \item\label{Step: halt if c and d have dominant real roots} If $|\mathcal{M}_1| = 1$, $|\mathcal{M}_2| = 1$, and $\texttt{c-positive?} = \lnot \texttt{d-positive?}$:
        Halt and output the value of $\texttt{c-positive?}$.
    \item 
    Compute the set $\mathcal{R}$ of indexes $j \in \{1,\dots,n+m\}$ such that $\Im(\widetilde{\lambda}_j) < 2^{-M}$. 

    \item Compute the sets 
        $\mathcal{MR}_1 = \Set{k \in \mathcal{R} \cap \{1,\dots,n\}}{j \preceq_M k \text{ for all }j \in \mathcal{R}}$
    and 

        $\mathcal{MR}_2 = \Set{k \in \mathcal{R} \cap \{n + 1,\dots,n + m\}}{j \preceq_M k \text{ for all }j \in \mathcal{R}}$.

    \item\label{Step: compute C} Compute the set 
        $\mathcal{C} = \Set{j \in \{1,\dots,n+m\}}{j \not\preceq_M k \text{ for all }k \in \mathcal{R}}$.
    \item If $\mathcal{C}$ is non-empty:
    \begin{enumerate}[3.{14}.1]
        \item 
        Compute the coefficients $e \in \R^{n + m}$ of the polynomial
            $\chi_e = \prod_{j \in \{1,\dots,n + m\}}(z - \lambda_j)$.
        \item 
        Compute the first $n + m$ terms of the sequence $w_j = u_j - v_j$.
        \item 
        Compute the coefficients $e' \in \C^\ell$ of the polynomial 
            $\chi_{e'} = \prod_{j \in \{1,\dots,n + m\} \setminus \mathcal{C}}(z - \lambda_j)$.
        \item 
        Let $(w'_j)_j$ be the recurrence sequence with initial values 
            $w'_j = w_j$ for $j = 0, \dots, \ell - 1$
            and characteristic polynomial $\chi_{e'}$.
        \item 
        Compute rational approximations $\widetilde{\varepsilon}_j$ to 
            $\varepsilon_j = |w_j - w'_j|$ for $j = \ell,\dots,n + m - 1$ to error $2^{-M}$.
        \item\label{Step: halt if there is dominant complex root}
            If $\widetilde{\varepsilon}_j > 2^{-M}$ for some $j$, halt and output $\textbf{false}$.
    \end{enumerate}
    
    \item Initialise an empty list $L = \left\langle\right\rangle$. 
    \item Let $m_1 = |\mathcal{MR}_1|$. Let $m_2 = |\mathcal{MR}_2|$.
    \item If $m_1 > 0$:
    \begin{enumerate}[3.{17}.1]
        \item 
        Pick an arbitrary index $j_1 \in \mathcal{MR}_1$.
        \item 
        Let $\{j_{m_1 + 1},\dots,j_n\} = \{1,\dots,n\}\setminus\{j_1\}$
        \item 
        For $\ell = 1,\dots,m_1$,
            Compute a rational approximation $\alpha_\ell$ to 
            
            $
                    (-1)^{n - \ell} 
                \Re G_{\ell,n}
                \left(
                    \lambda_{j_1},
                    \underbrace{\lambda_{j_1},\dots,\lambda_{j_1}}_{m_1 - \ell  \text{ times}},
                    \lambda_{j_{m_1+1}},\dots,\lambda_{j_n},
                    u_0,\dots,u_{n - 1}
                \right)
            $

            to error $2^{-N - 1}$.
            Add $\alpha_\ell$ to $L$.
    \end{enumerate}
    \item If $m_2 > 0$:
    \begin{enumerate}[3.{18}.1]
        \item Pick an arbitrary index $k_1 \in \mathcal{MR}_2$.
        \item Let $\{k_{m_2 + 1},\dots,k_m\} = \{n + 1,\dots, n + m\}\setminus\{k_2\}$
        \item For $\ell = 1,\dots,m_2$,
            Compute a rational approximation $\alpha_\ell$ to 
                
            $(-1)^{m - \ell + 1} 
                \Re G_{\ell,m}
                \left(
                    \lambda_{k_1},
                    \underbrace{\lambda_{k_1},\dots,\lambda_{k_1}}_{m_2 - \ell  \text{ times}},
                    \lambda_{k_{m_2+1}},\dots,\lambda_{k_m},
                    v_0,\dots,v_{m - 1}
                \right)$
                
            to error $2^{-N - 1}$.
            Add $\alpha_\ell$ to $L$.
    \end{enumerate}
    \item\label{Step: halt if largest real roots have negative coefficient} 
        If all elements of $L$ are strictly smaller than $-2^{-N}$: 
        halt and output $\textbf{false}$.
    \end{enumerate}
    \end{enumerate}
\end{itemize}
\end{Algorithm}

\begin{Remark}
    \begin{enumerate}
        \item 
        Our formulation of 
        Algorithm \ref{Algorithm: Inequality} requires an explicit evaluation of the ``modulus of continuity'' $\Omega$ of 
        the functions $G_{\ell, k}$.
        This can be avoided by evaluating the functions $G_{\ell, k}$ on suitable interval approximations of the input 
        using interval arithmetic.
        The only advantage of the formulation chosen here is that the complexity analysis becomes more straightforward.
    \item The relation $\preceq_M$ is an overapproximation of the relation 
    $j \preceq k$ defined by $\Re(\lambda_j) \leq \Re(\mu_k)$.
    Consequently, the set $\mathcal{M}_1$ contains all indexes of all roots of $\chi_c$ which are dominant roots of 
    $\chi_c \cdot \chi_d$.
    Similarly, if $\chi_c$ has real roots, then the set $\mathcal{MR}_1$ contains all indexes $j \in \{1,\dots,n\}$ such that $\lambda_j$ is the largest real root of 
    $\chi_c \cdot \chi_d$.
    Analogous statements hold true for the sets $\mathcal{M}_2$ and $\mathcal{MR}_2$.
    Dually, all indexes in the set $\mathcal{C}$ correspond to roots of $\chi_c \cdot \chi_d$ with non-zero imaginary part whose real part is strictly greater than any real root of $\chi_c \cdot \chi_d$.
    \end{enumerate}
\end{Remark}

\begin{Lemma}
Algorithm \ref{Algorithm: Inequality} is correct and runs in polynomial time.
\end{Lemma}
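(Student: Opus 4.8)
The plan is to establish the two halves of the claim in turn: \emph{soundness} (every halting branch returns the correct truth value) and \emph{completeness together with the running-time bound} (on a robust input $x=(c,u,d,v)$ the loop halts after a number of iterations polynomial in $\size{x}-\log\min\{d(x,\partial A),1\}$). One first records that a single pass of the loop body at a fixed $N$ costs time polynomial in $N+\size{x}$: the characteristic roots are computed once, in polynomial time; the bound $B$ of Step~\ref{Step: bound} and — since $\Omega$ is a fixed polynomial — the working precision $M$ are polynomial in $N+\size{x}$; the relation $\preceq_M$ and the sets $\mathcal{M}_i,\mathcal{MR}_i,\mathcal{C}$ are obtained from $(n+m)^2$ rational comparisons; and the finitely many evaluations of the polynomials $G_{\ell,k}$ are polynomial time and can be performed to the prescribed accuracy $2^{-N-1}$ by Theorem~\ref{Theorem: coefficient functions}. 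Thus if the loop halts at iteration $N_0$ the total cost is $N_0\cdot\operatorname{poly}(N_0+\size{x})$, so it suffices to bound $N_0$ as above; the final passage between spectral and ordinary distance is handled by Proposition~\ref{Proposition: spectral distance}.

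For soundness, write $h=\sem{(c,u)}-\sem{(d,v)}$; by the three elementary recurrence lemmas underlying Lemma~\ref{Lemma: approximate equality perturbation}, $h=\sem{(e,w)}$ with $\chi_e=\chi_c\chi_d$ and $w_j=u_j-v_j$, so $h$ is C-finite with characteristic roots among $\lambda_1,\dots,\lambda_{n+m}$, and its eventual sign is dictated by the leading coefficient of the root(s) of maximal real part occurring with non-zero coefficient in the exponential polynomial representation~\eqref{eq: exponential polynomial with indexes} of $h$ — which, by Corollary~\ref{Corollary: C-finite function without real characteristic roots} applied to the dominant part of $h$, is eventually of one sign exactly when that dominant root is real and simple, and otherwise $h$ oscillates. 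I would then treat the three halting branches. For Steps~\ref{Step: halt if c has dominant real root}--\ref{Step: halt if c and d have dominant real roots}: since the approximate roots are conjugate-closed, $|\mathcal{M}_1|=1$ forces $\chi_c$ to have a unique simple \emph{real} dominant root, and $|\mathcal{M}_2|=0$ forces every root of $\chi_d$ to have strictly smaller real part (iterating the overapproximation property of $\preceq_M$ up the real-part order of the roots of $\chi_d$); hence that root strictly dominates $\chi_c\chi_d$, is not a root of $\chi_d$, and the $\pm2^{-N}$ test on $(-1)^{n-1}G_{1,n}(\dots)$ — computed to error $2^{-N-1}$, with $\operatorname{sign}\prod_{j\ne j_1}(\lambda_j-\lambda_{j_1})=(-1)^{n-1}$ since real roots below $\lambda_{j_1}$ contribute negative factors and conjugate pairs contribute positive ones — certifies the sign of its coefficient in $h$; the symmetric and two-sided branches are analogous, and in all cases the eventual sign of $h$ is determined. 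For Step~\ref{Step: halt if there is dominant complex root}: $\mathcal{C}\ne\emptyset$ yields a root with non-zero imaginary part dominating every ``potentially real'' root, hence every real root of $\chi_c\chi_d$, and the certified inequality $\widetilde\varepsilon_j>2^{-M}$ shows that $(w_j)_j$ does not satisfy the recurrence of $\chi_{e'}=\prod_{j\notin\mathcal{C}}(z-\lambda_j)$, so some root in $\mathcal{C}$ has non-zero coefficient in $h$; thus $h$ has a complex dominant root with non-zero coefficient and is not eventually $\ge0$. For Step~\ref{Step: halt if largest real roots have negative coefficient}: certified negativity of every entry of $L$ makes all the quantities $(-1)^{n-\ell}\Re G_{\ell,n}(\dots)$ and $(-1)^{m-\ell+1}\Re G_{\ell,m}(\dots)$ negative; combining Theorem~\ref{Theorem: coefficient functions}(1), Proposition~\ref{Proposition: coefficient functions for equal Cauchy problems} and the sign of the generalised-Vandermonde denominators as above — with the alternating signs on the $\chi_c$- and $\chi_d$-sides matching the subtraction defining $h$ — this forces the leading coefficient of the largest real root of $\chi_c\chi_d$ in the exponential polynomial solution of $h$ to be negative, unless $\chi_c\chi_d$ in fact has a dominant complex root with non-zero coefficient; either way $h$ is not eventually $\ge0$, so outputting \textbf{false} is correct. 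Checking all $\ell\le m_1$ (resp.\ $\ell\le m_2$) rather than one value is precisely what keeps the branch sound despite ignorance of the exact multiplicity.

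For completeness and the time bound I would argue the contrapositive: if the algorithm has not halted by iteration $N$, then $x$ lies within spectral distance $\Phi(\size{x})\cdot2^{-N/\Psi(n+m)}$ of a ``Yes''-instance and of a ``No''-instance, for fixed polynomials $\Phi,\Psi$; since a segment in coefficient space joining such instances must meet $\partial A$, this bounds $d(x,\partial A)$ from below by essentially that quantity (after Proposition~\ref{Proposition: spectral distance}), and solving for $N$ gives the desired polynomial bound. The two perturbations are built from the data observed by the algorithm. If $\mathcal{C}\ne\emptyset$ but Step~\ref{Step: halt if there is dominant complex root} did not fire, the differences $|w_j-w'_j|$ are all at most $2^{-M+1}$, so Lemma~\ref{Lemma: approximate equality perturbation} (equivalently Lemma~\ref{Lemma: perturbing equation to simpler equation}) produces a small perturbation after which $h$ exactly satisfies the recurrence of $\chi_{e'}$ — eliminating all roots in $\mathcal{C}$ — which reduces matters to the case $\mathcal{C}=\emptyset$. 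When $\mathcal{C}=\emptyset$ the dominant roots of $h$ are ``potentially real'', and once $M$ exceeds a polynomial bound in $-\log d(x,\partial A)$ the overapproximations $\preceq_M$, $\mathcal{M}_i$, $\mathcal{MR}_i$, $\mathcal{C}$ agree with their exact counterparts; failure of the remaining halting branches then means the pertinent leading coefficient of the dominant real root of $h$ is within $O(2^{-N})$ of $0$. Since by Theorem~\ref{Theorem: coefficient functions}(1) this coefficient is affine in $u_{n-1}$ (resp.\ $v_{m-1}$) with coefficient $\pm\bigl((m_1-1)!\prod(\lambda_j-\lambda_{j_1})\bigr)^{-1}$ — bounded below in absolute value by an inverse polynomial in $\size{x}$ and $-\log d(x,\partial A)$, since a near-coincidence of the relevant roots would itself place $x$ close to $\partial A$ — one perturbs $u_{n-1}$ (resp.\ $v_{m-1}$) by a correspondingly small amount to drive it to either sign, additionally splitting the dominant real root (into a conjugate pair to realise Condition~\ref{Reason A}, resp.\ into a simple real root of slightly larger real part) when its multiplicity exceeds one; that the resulting instances really are ``Yes''- resp.\ ``No''-instances is then verified by re-applying the eventual-sign criterion of the soundness analysis.

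The main obstacle is this last step: making the ``has not halted by iteration $N$'' case analysis exhaustive and, in each case, producing the two perturbations with magnitude \emph{simultaneously} $2^{-\Omega(N)}$ and only polynomially amplified in $\size{x}$ and $-\log d(x,\partial A)$. The delicate ingredients are the chaining of perturbations (Lemma~\ref{Lemma: approximate equality perturbation} costs a root, so $N$ must be budgeted to absorb the exponent $4\max\{n,m\}-2$, and only \emph{after} eliminating the $\mathcal{C}$-roots may one adjust the dominant real coefficient), keeping the generalised-Vandermonde denominators away from zero via the root separation guaranteed on a robust instance, and handling multiplicities $\ge2$ of the dominant real root, where the leading-coefficient function $F_{m_1,\dots,m_N}$ is unbounded near the diagonal (Theorem~\ref{Theorem: coefficient functions}) so that the normalisation $A_{n,m_1,n}=1$ — monicity in $u_{n-1}$ — is exactly what certifies that the sign-flipping perturbation of $u_{n-1}$ remains small.
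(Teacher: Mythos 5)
Your proposal is correct and follows essentially the same route as the paper: the identical branch-by-branch soundness analysis (conjugate-pair sign counting for Steps \ref{Step: halt if c has dominant real root}--\ref{Step: halt if c and d have dominant real roots}, the $w\neq w'$ argument with Corollary \ref{Corollary: C-finite function without real characteristic roots} for Step \ref{Step: halt if there is dominant complex root}, the $G_{\ell,\cdot}$ sign certificates for Step \ref{Step: halt if largest real roots have negative coefficient}), and the identical contrapositive time bound via polynomially controlled ``Yes''/``No'' perturbations assembled from Lemma \ref{Lemma: approximate equality perturbation}, Proposition \ref{Proposition: coefficient functions for equal Cauchy problems}, the monicity $A_{n,m_1,n}=1$, and Proposition \ref{Proposition: spectral distance}. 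The only slips are cosmetic: eventual one-signedness of $h$ requires only that the dominant root with non-vanishing coefficient be real (not real \emph{and simple}), and the non-halting hypothesis bounds $d(x,\partial A)$ from \emph{above}, not below --- which is the direction you in fact use when solving for $N$.
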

\begin{proof}

Write $f = \sem{(c, u)}$ and $g = \sem{(d, v)}$.
We claim that if Algorithm \ref{Algorithm: Inequality} halts, then it reports correctly whether $f(t) \geq g(t)$ for all large $t$.

Consider the case where the algorithm halts in \hyperref[Step: halt if c has dominant real root]{Step 3.8}. 
Then the polynomial $\chi_c$ has a simple real root $\rho$ with $\rho > \Re(\nu)$ for all roots $\nu$ of $\chi_d$ and of $\chi_c/(z - \rho)$.
Without loss of generality, we have $\rho = \lambda_1$.
By Theorem \ref{Theorem: coefficient functions} we have 
\[
    f(t) - g(t) = F_{1, n}(\lambda_1, \dots,\lambda_n, u_0,\dots,u_{n - 1}) e^{\rho t} + o(e^{\rho t}).
\]
By definition, the sign of 
$F_{1, n}(\lambda_1,\dots,\lambda_n, u_0,\dots,u_{n - 1})$ 
is equal to 
\[
    \prod_{j = 2}^n\operatorname{sgn}(\lambda_j - \lambda_1) \operatorname{sgn}\left(G_{1, n}(\lambda_1, \dots,\lambda_n, u_0,\dots,u_{n - 1})\right).
\]
Since $\lambda_1$ is greater than all real roots of $\chi_c$, and the complex roots come in conjugate pairs, we have 
$\prod_{j = 2}^n\operatorname{sgn}(\lambda_j - \lambda_1) = (-1)^{n - 1}$.
It follows that the sign of $f(t) - g(t)$ is eventually equal to the sign of 
$(-1)^{n - 1} G_{1, n}(\lambda_1, \dots,\lambda_n, u_0,\dots,u_{n - 1})$,
as long as the latter is non-zero.
The correctness of the algorithm now follows easily.
The cases where the algorithm halts 
in Steps
\hyperref[Step: halt if d has dominant real root]{3.9}
or 
\hyperref[Step: halt if d has dominant real root]{3.10}
are treated analogously.

Assume that the algorithm halts in 
\hyperref[Step: halt if there is dominant complex root]{Step 3.14.6}.
Then there exists $j \in \{\ell,\dots,n + m\}$ such that $w_j \neq w'_j$.
Hence, $\sem{(e, w)} \neq \sem{(e',w')}$.
Observe that if $\lambda_j$ is a real eigenvalue of $(e, w)$, then $j \in \mathcal{R}$.
Further observe that if $j \in \mathcal{C}$, then $\Re(\lambda_j) > \Re(\lambda_k)$ for all $k \in \mathcal{R}$,
so that $\lambda_j \notin \R$ and $\Re(\lambda_j) > \lambda_k$ for all real eigenvalues $\lambda_k$ of $(e, w)$.

Now, if the coefficient of $\lambda_j$ in the exponential polynomial representation of $f - g$ is zero for all $j \in \mathcal{C}$,
then $\sem{(e, w)} = f - g = \sem{(e', w')}$ by construction of $\sem{(e', w')}$.
But we have $\sem{(e', w')} \neq \sem{(e,w)}$, so that $f - g$ is non-zero and does not have a real dominant characteristic root.
It follows from Corollary \ref{Corollary: C-finite function without real characteristic roots} that $f$ cannot be eventually greater than $g$.

Finally, assume that the algorithm halts in 
\hyperref[Step: halt if largest real roots have negative coefficient]{Step 3.19}.
Observe that we must have $\sem{(c, u)} \neq \sem{(d, v)}$ in this case. If $\chi_d$ and $\chi_c$ do not have real roots then it follows from $\sem{(c, u)} \neq \sem{(d, v)}$ and Corollary \ref{Corollary: C-finite function without real characteristic roots} that $f$ is not eventually greater than or equal to $g$.

Thus, let us assume that $\chi_c$ or $\chi_d$ have real roots.
Let $\rho$ denote the largest real element of $\sigma_{(c,u)} \cup \sigma_{(d,v)}$.
Assume that $\rho$ is a root of both $\chi_c$ and $\chi_d$.
The cases where $\rho$ is a root of only one the polynomials are similar.
Observe that if $j \in \{1,\dots,n\}$ with $\lambda_j = \rho$, then $j \in \mathcal{MR}_1$.
Similarly, if $j \in \{n + 1,\dots, n + m\}$ with $\lambda_j = \rho$, then $j \in \mathcal{MR}_2$.
In particular, $m_1$ is an upper bound on the multiplicity $\mu_1$ of $\rho$ as a root of $\chi_c$
and $m_2$ is an upper bound to the multiplicity $\mu_2$ of $\rho$ as a root of $\chi_d$.

By assumption, there exist $j_1 \in \mathcal{MR}_1$ and $k_1 \in \mathcal{MR}_2$ such that
\[
    (-1)^{n - \mu_1} G_{\mu_1,n}(\lambda_{j_1},\underbrace{\lambda_{j_1},\dots,\lambda_{j_1}}_{\mu_1 - mu_1 \text{ times}},\lambda_{j_{m_1+1}},\dots,\lambda_{j_n})
    < -2^{-N-1}
\]
and 
\[
    (-1)^{m - \mu_2 + 1} G_{\mu_2,m}(\lambda_{k_1},\underbrace{\lambda_{k_1},\dots,\lambda_{k_1}}_{\mu_2 - m_2 \text{ times}},\lambda_{k_{m_2+1}},\dots,\lambda_{k_m})
    < -2^{-N-1}.
\]
By construction, we have $|\rho - \lambda_j| < 2^{-M}$ for all $j \in \mathcal{MR}_1 \cup \mathcal{MR}_2$.
It follows that the leading coefficients of the polynomials 
$P_{\rho}(c,u,t)$ and $P_{\rho}(d,v,t)$ in \eqref{eq: exponential polynomial with indexes} are strictly negative.
Together with Corollary \ref{Corollary: C-finite function without real characteristic roots} the claim follows.

Let us now show that Algorithm \ref{Algorithm: Inequality} halts in polynomial time on all robust instances.
The roots of $\chi_c$ and $\chi_d$ can be computed in polynomial time \cite[p. 117]{KoBook}.
It is relatively easy to see that each iteration of the For-loop takes polynomially many steps.
It therefore suffices to show that the number of iterations of the For-loop is bounded polynomially in the negative logarithm of the spectral distance of $\left(c, u, d, v\right)$ to the set of boundary instances of Ultimate Inequality.

Assume that the algorithm does not halt within the first $N$ iterations of the For-loop.
We will show that there exist polynomially controlled perturbations of $(c, u, d, v)$
such that one of the perturbed instances is a ``Yes''-instance and the other is a ``No''-instance.

First consider the case where we have $|\mathcal{M}_1| = 1$ and $|\mathcal{M}_2| = 0$.
In this case we have 
\[
    f(t) - g(t) = F_{1, n}(\lambda_{j_1},\dots,\lambda_{j_n}, u_0, \dots, u_{n - 1}) e^{\lambda_{j_1} t} + o(e^{\lambda_{j_1} t}).
\]
Since the algorithm does not halt in \hyperref[Step: halt if c has dominant real root]{Step 3.8},
$
    \left|
        G_{1, n}(\lambda_{j_1},\dots,\lambda_{j_n}, u_0, \dots, u_{n - 1})
    \right|
$
is less than $2^{-N}$.

It follows from the definition of $G_{1, n}$ (Theorem \ref{Theorem: coefficient functions}) that 
there exists a perturbation of $u_{n - 1}$ by at most $2^{-N}$ such that 
$(-1)^{n - 1} G_{1, n}(\lambda_{j_1},\dots,\lambda_{j_n}, u_0, \dots, u_{n - 1})$ is strictly positive, and 
another perturbation by at most $2^{-N}$ such that it is strictly negative.
It is easy to see that the sign of $F_{1, n}(\lambda_{j_1},\dots,\lambda_{j_n}, u_0, \dots, u_{n - 1})$ is equal to that of $(-1)^{n - 1} G_{1, n}(\lambda_{j_1},\dots,\lambda_{j_n}, u_0, \dots, u_{n - 1})$.
Hence, these perturbations yield a ``Yes''-instance and a ``No''-instance of Ultimate Inequality.
The cases where $|\mathcal{M}_1| = 0$ and $|\mathcal{M}_2| = 1$
or $|\mathcal{M}_1| = 1$ and $|\mathcal{M}_2| = 1$ are treated analogously.

It remains to consider the case where $|\mathcal{M}_1| \geq 2$ or $|\mathcal{M}_2| \geq 2$.
Let us assume without loss of generality that $|\mathcal{M}_1| \geq 2$.

Let us first construct the ``No''-instance.
Consider the set of all $\lambda_j$ with $j \in \mathcal{M}_1$. 
If this set contains only real numbers, then the ball of radius $2^{-M}$ about the largest real root of $\chi_c \cdot \chi_d$ contains at least two real roots of $\chi_c$, counted with multiplicity.
It follows that there exist perturbations $\widetilde{c}$ and $\widetilde{d}$ of $c$ and $d$ by at most $2^{-M}$ such that all roots of 
$\chi_{\widetilde{c}} \cdot \chi_{\widetilde{d}}$ have non-zero imaginary part.
We can further ensure that all roots of $\chi_{\widetilde{c}} \cdot \chi_{\widetilde{d}}$  with maximal real part are roots of  $\chi_{\widetilde{c}}$.
Let $\lambda$ be a root of $\chi_{\widetilde{c}}$ with maximal real part.
Let $P_{\lambda} \in \C[t]$ denote its coefficient in the exponential polynomial solution of $(\widetilde{c}, u)$.
Up to an arbitrarily small perturbation, $P_{\lambda}$ is non-zero.
This induces an arbitrarily small perturbation of $u$ by \eqref{eq: Vandermonde equation}.
It then follows from Corollary \ref{Corollary: C-finite function without real characteristic roots} that $\sem{(\widetilde{c}, \widetilde{u})} - \sem{(\widetilde{d}, v)}$ assumes negative values at arbitrarily large times, so that $(\widetilde{c}, \widetilde{u}, \widetilde{d}, v)$ is a ``No''-instance of Ultimate Inequality.

Let us now construct the ``Yes''-instance.
Assume that the set $\mathcal{C}$ computed in \hyperref[Step: compute C]{Step 3.14} is non-empty.
Since the algorithm does not halt in \hyperref[Step: halt if there is dominant complex root]{Step 3.14.6}, 
it follows exactly like in the proof of Lemma \ref{Lemma: approximate equality perturbation} that there exists a perturbation
$(\widetilde{c}, \widetilde{u}, \widetilde{d}, \widetilde{v})$ of $(c, u, d, v)$ by at most 
\[
    \delta 
    :=
    O\left(
        \left(n + m + \norm{c}_{\infty} + \norm{d}_{\infty} + \norm{u}_\infty\right)^{O(\max\{n, m\}\log \max\{n, m\})}
    \right)
    2^{\tfrac{-M}{4\max\{n, m\} - 2}}
\]
such that we have 
$
    \sem{(\widetilde{c},\widetilde{u})} = 
        \sum_{j \in \{1,\dots,n\}}
        P_j(t) e^{\widetilde{\lambda}_j t}
$
and 
$
    \sem{(\widetilde{d},\widetilde{v})} = 
    \sum_{j \in \{n + 1,\dots,n + m\}}
    P_j(t) e^{\widetilde{\lambda}_j t}
$
with $|\Re(\lambda_j) - \Re(\widetilde{\lambda}_j)| < \delta$,
$|\Im(\lambda_j) - \Im(\widetilde{\lambda}_j)| < \delta$,
and $P_j = 0$ for all $j \in \mathcal{C}$.
In particular, the real part of the dominant characteristic roots of $\sem{(\widetilde{c},\widetilde{u})} - \sem{(\widetilde{d},\widetilde{v})}$ is $\delta$-close to the real part of a characteristic root whose imaginary part is at most $2\delta$.
Further, we can ensure that 
$\gcd(\chi_c, \chi_{\widetilde{c}}) = \left(\prod_{j \in \{1,\dots,n\}\setminus\mathcal{C}} \left(z - \lambda_j\right)\right)$
and 
$\gcd(\chi_d, \chi_{\widetilde{d}}) = \left(\prod_{j \in \{n + 1,\dots,n + m\}\setminus\mathcal{C}} \left(z - \lambda_j\right)\right)$.
If $\mathcal{C}$ is empty, then these properties already hold true for $(\widetilde{c}, \widetilde{u}, \widetilde{d}, \widetilde{v}) = (c, u, d, v)$.

Now, let $c' \in \R^{n'}$ and $d' \in \R^{m'}$ be defined by letting their characteristic polynomials be
$\chi_{c'} = \prod_{j \in \{1,\dots,n\} \setminus \mathcal{C}} \left(z - \widetilde{\lambda}_j\right)$
and
$\chi_{d'} = \prod_{j \in \{n + 1,\dots,n + m\} \setminus \mathcal{C}} \left(z - \widetilde{\lambda}_j\right)$.
Define initial values
$u' = \left(\widetilde{u}_0, \dots, \widetilde{u}_{n' - 1}\right)$
and 
$v' = \left(\widetilde{v}_0, \dots, \widetilde{v}_{m' - 1}\right)$.
Then we have 
$\sem{\left(c',u'\right)} = \sem{\left(\widetilde{c}, \widetilde{u}\right)}$
and 
$\sem{\left(d', v'\right)} = \sem{\left(\widetilde{d}, \widetilde{v}\right)}$.

Let $N'$ be the largest integer with 
$
    2^{-\max\{\Omega(n, B, N' + 1), \Omega(m, B, N' + 1)\}} > \delta. 
$
Then, since $\Omega$ is a polynomial and $-\log \delta$ depends polynomially on $N$, $n$, and $B$ 
we have $N' \geq \alpha(n + m + B) N^{1/\beta(n + m + B)} - \gamma(n + m + B)$ for polynomials $\alpha, \beta, \gamma \in \N[x]$.

Observe that $\mathcal{MR}_1 \cup \mathcal{MR}_2 \neq \emptyset$.
Since the algorithm does not halt, there exists -- without loss of generality -- 
an $\ell \in \{1,\dots,m_1\}$ such that 

$
    (-1)^{n - \ell} \Re G_{\ell,n}
        \left(
            \lambda_{j_1},
            \underbrace{\lambda_{j_1},\dots,\lambda_{j_1}}_{m_1 - \ell  \text{ times}},
            \lambda_{j_{m_1+1}},\dots,\lambda_{j_n},
            u_0,\dots,u_{n - 1}
        \right)
    \geq
    -2^{-N'}.
$

By construction, there exist 
$\widetilde{\lambda}_{j_1}, 
\widetilde{\lambda}_{j_{\ell + 1}},
\dots,
\widetilde{\lambda}_{j_{m_1}}$,
whose real and imaginary parts are $2^{-M}$-close to those of 
$\lambda_{j_k}$,
such that 
$\widetilde{\lambda}_{j_1}$ 
is real and 
$\Re(\widetilde{\lambda}_{j_1}) > \Re(\nu)$
for all 

$\nu \in \{\widetilde{\lambda}_{j_{\ell + 1}},\dots,\widetilde{\lambda}_{j_{m_1}},\lambda_{j_{m_1 + 1}},\dots,\lambda_{j_n}\}$.

We then have:
$
    G_{\ell, n}
    \left(
        \widetilde{\lambda}_{j_1},
        \widetilde{\lambda}_{j_{\ell + 1}},
        \dots,
        \widetilde{\lambda}_{j_{m_1}},
        \lambda_{j_{m_1+1}},
        \dots,
        \lambda_{j_{n'}},
        \lambda_{j_{n' + 1}}
        \dots,
        \lambda_{j_n},
        \widetilde{u}_0, \dots, \widetilde{u}_{n - 1}
    \right)
    \geq 
    -2^{-N' + 1}.
$

Further, up to perturbing the characteristic roots of $\chi_{d'}$ by at most $2^{-M}$, we can ensure that 
$\Re(\widetilde{\lambda}_{j_1}) > \Re(\nu)$
for all characteristic roots $\nu$ of $\chi_{d'}$.

Up to relabelling, we may assume that 
$\mathcal{C} \cap \{1, \dots, n\} = \{j_{n' + 1},\dots,j_n\}$.
By Proposition \ref{Proposition: coefficient functions for equal Cauchy problems} we have that 
\[
    G_{\ell, n}
    \left(
        \widetilde{\lambda}_{j_1},
        \widetilde{\lambda}_{j_{\ell + 1}},
        \dots,
        \widetilde{\lambda}_{j_{m_1}},
        \lambda_{j_{m_1+1}},
        \dots,
        \lambda_{j_{n'}},
        \lambda_{j_{n' + 1}}
        \dots,
        \lambda_{j_n},
        \widetilde{u}_0, \dots, \widetilde{u}_{n - 1}
    \right)
\]
is equal to 
\[
    \left(
            \prod_{k = n' + 1}^{n} \left(\lambda_{j_k} - \widetilde{\lambda}_{j_1}\right)
    \right)
    G_{\ell, n'}
    \left(
        \widetilde{\lambda}_{j_1},
        \widetilde{\lambda}_{j_{\ell + 1}},
        \dots,
        \widetilde{\lambda}_{j_{m_1}},
        \lambda_{j_{m_1+1}},
        \dots,
        \lambda_{j_{n'}}
        \widetilde{u}_0, \dots, \widetilde{u}_{n' - 1}
    \right)
\]
so that the number
\[
    (-1)^{n - \ell}
    G_{\ell, n'}
    \left(
        \widetilde{\lambda}_{j_1},
        \widetilde{\lambda}_{j_{\ell + 1}},
        \dots,
        \widetilde{\lambda}_{j_{m_1}},
        \lambda_{j_{m_1+1}},
        \dots,
        \lambda_{j_{n'}}
        \widetilde{u}_0, \dots, \widetilde{u}_{n' - 1}
    \right)
\]
is greater than or equal to
$
    \left(
            \prod_{k = n' + 1}^{n} \left(\lambda_{j_k} - \widetilde{\lambda}_{j_1}\right)
    \right)^{-1}
    -2^{-N' + 1}.
$
Now, 
$\left(
    \prod_{k = n' + 1}^{n} \left(\lambda_{j_k} - \widetilde{\lambda}_{j_1}\right)
\right)$
is a positive real number, since the numbers $\lambda_{j_k}$ with $k \geq n' + 1$ come in complex conjugate pairs.
Further, the difference $n - n'$ is even, so that $(-1)^{n - \ell} = (-1)^{n' - \ell}$.
Let $C$ be an upper bound on $\left|\lambda_{j_k} - \widetilde{\lambda}_{j_1}\right|$.
We obtain that
$
    (-1)^{n' - \ell}
    G_{\ell, n'}
    \left(
        \widetilde{\lambda}_{j_1},
        \widetilde{\lambda}_{j_{\ell + 1}},
        \dots,
        \widetilde{\lambda}_{j_{m_1}},
        \lambda_{j_{m_1+1}},
        \dots,
        \lambda_{j_{n'}}
        \widetilde{u}_0, \dots, \widetilde{u}_{n' - 1}
    \right)
$
is greater than or equal to 
$
-C^n2^{-N' + 1}.
$

Now,
by Proposition \ref{Proposition: coefficient functions for equal Cauchy problems},
the left-hand side of this inequality can be written as
\[
        \widetilde{u}_{n' - 1} 
        +
        \sum_{j = 1}^{n' - 1} (-1)^{n' + j} A_{n',\ell,j}
            \left(
                \widetilde{\lambda}_{j_1},
                \widetilde{\lambda}_{j_{\ell + 1}},
                \dots,
                \widetilde{\lambda}_{j_{m_1}},
                \lambda_{j_{m_1+1}},
                \dots,
                \lambda_{j_{n'}}
            \right) \widetilde{u}_{j - 1}
\]
so that there exists a perturbation 
$\widetilde{\widetilde{u}}_{n' - 1}$
of $\widetilde{u}_{n' - 1}$ by at most $C^n2^{-N' + 1}$ such that the 
number 

\[(-1)^{n' - \ell} G_{\ell, n'}
\left(
    \widetilde{\lambda}_{j_1},
    \widetilde{\lambda}_{j_{\ell + 1}},
    \dots,
    \widetilde{\lambda}_{j_{m_1}},
    \lambda_{j_{m_1+1}},
    \dots,
    \lambda_{j_{n'}}
    \widetilde{u}_0, \dots, \widetilde{u}_{n' - 2}, \widetilde{\widetilde{u}}_{n' - 1}
\right)
\]
is strictly positive.
Observing that $N'$ is controlled polynomially in the input data and that the logarithm of $C$ is bounded polynomially in the input data, we obtain that the perturbation is controlled polynomially.
We have thus constructed polynomially controlled perturbations 
$(\widetilde{c'}, \widetilde{u'}) \in C_{n'}(\R)$ and $(\widetilde{d'}, v') \in C_{m'}(\R)$
of 
$\left(c', u'\right)$ and $\left(d', v'\right)$
with 
$\sem{\left(\widetilde{c'}, \widetilde{u'}\right)}(t) \geq \sem{\left(\widetilde{d'}, v'\right)}(t)$ 
for all large $t$.
We can construct polynomially controlled perturbations of $(c, u)$ and $(d, v)$ by adding back the characteristic roots we have removed by 
passing from $c$ to $c'$ and from $d$ to $d'$ and by extending $\widetilde{u'}$ and $v'$ using the linear recurrence equations with characteristic polynomial $\chi_{\widetilde{c'}}$ and $\chi_{\widetilde{d'}}$ respectively.
By a calculation similar to Lemma \ref{Lemma: perturbing equation to simpler equation}, this yields polynomially controlled perturbations of $(c, u)$ and $(d, v)$.
\end{proof}

\bibliography{CLDS}
\nocite{*}

\newpage
\appendix

\section{Proof of Proposition \ref{Proposition: spectral distance}}\label{Appendix: Spectral Distance}

We recall Bernstein's inequality:% \cite{TODO}:

\begin{Theorem}[Bernstein]\label{Theorem: Bernstein}
    Let $P$ be a complex polynomial of degree $n$.
    Then 
    \[
        \max_{|z| \leq 1}\left|P^{(k)}(z)\right|
        \leq 
        \frac{n!}{(n - k)!} \max_{|z| \leq 1} \left|P(z)\right|.
    \]
\end{Theorem}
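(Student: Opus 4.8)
The plan is to reduce the general statement to the classical case $k = 1$ and then to prove that case on the unit circle. First I would observe that it suffices to treat $k = 1$: if we know that $\max_{|z| \le 1}|R'(z)| \le (\deg R)\,\max_{|z| \le 1}|R(z)|$ for every polynomial $R$, then applying this successively to $R = P, P', \dots, P^{(k-1)}$ — which have degrees $n, n-1, \dots, n - k + 1$ — and composing the resulting inequalities yields
\[
    \max_{|z| \le 1}\left|P^{(k)}(z)\right| \le n(n-1)\cdots(n - k + 1)\max_{|z| \le 1}|P(z)| = \frac{n!}{(n-k)!}\max_{|z| \le 1}|P(z)|.
\]
(If $k > n$ then both sides vanish and there is nothing to prove.)

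Second, I would invoke the maximum modulus principle: since $P$ and $P'$ are entire, $\max_{|z|\le 1}|P(z)| = \max_{|z|=1}|P(z)|$ and $\max_{|z|\le 1}|P'(z)| = \max_{|z|=1}|P'(z)|$, so it is enough to bound $|P'(z_0)|$ for an arbitrary fixed $z_0$ with $|z_0| = 1$ in terms of $M := \max_{|z|=1}|P(z)|$, and after rescaling we may assume $M = 1$.

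Third — the heart of the matter — I would prove that $|P'(z_0)| \le n$ whenever $|z_0| = 1$ and $|P| \le 1$ on the unit circle. My preferred route is reduction to trigonometric polynomials: writing $z = e^{i\theta}$, the function $T(\theta) = P(e^{i\theta})$ is a trigonometric polynomial of degree at most $n$ with $\sup_\theta |T(\theta)| \le 1$, and the chain rule gives $T'(\theta) = i e^{i\theta} P'(e^{i\theta})$, so $|P'(z_0)| = |T'(\theta_0)|$. One then applies the M. Riesz interpolation identity, which expresses $T'(\theta_0)$ as a finite linear combination $\sum_j c_j\, T(\theta_0 + \xi_j)$ of values of $T$ at $2n$ equally spaced nodes, the coefficients $c_j = (-1)^{j-1}\bigl(4n\sin^2(\tfrac{(2j-1)\pi}{4n})\bigr)^{-1}$ satisfying $\sum_j |c_j| = n$; hence $|T'(\theta_0)| \le \bigl(\sum_j |c_j|\bigr)\sup_\theta|T(\theta)| \le n$. (An alternative is to argue directly in the disk by comparison with $z^n$: since $|P(z)| \le |z^n|$ on $|z| = 1$ and $z \mapsto z^n$ has all its zeros in the closed unit disk, a Rouché-type argument gives $|P(z)| \le |z|^n$ for $|z| \ge 1$, whence the bound on $|P'(z_0)|$ follows from a boundary estimate. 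I would use whichever is shorter to write out in full.)

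The main obstacle is concentrated entirely in the third step, the base case $k = 1$: the first two steps are routine bookkeeping. Since this is a classical inequality, in practice I would cite a standard reference for the base case; a fully self-contained proof would need the M. Riesz quadrature identity (or the comparison lemma) with its sharp constant $n$, which is the only genuinely non-trivial computation.
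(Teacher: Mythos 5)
Your proposed argument is correct, but note that the paper does not actually prove this statement: it is recalled as a classical result (Bernstein's inequality) and used as a black box in the proof of Proposition \ref{Proposition: spectral distance}, so there is no proof in the paper to compare against. Your three steps are the standard route and they all check out: the iteration from the $k=1$ case gives exactly the factor $n(n-1)\cdots(n-k+1)=n!/(n-k)!$; the maximum modulus principle legitimately moves everything to the unit circle; and the substitution $T(\theta)=P(e^{i\theta})$ turns the problem into Bernstein's inequality for trigonometric polynomials of degree $n$, with $|T'(\theta)|=|e^{i\theta}P'(e^{i\theta})|=|P'(e^{i\theta})|$, which the M.~Riesz interpolation identity settles with the sharp constant $n$ since $\sum_j|c_j|=n$. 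The only soft spot is your parenthetical alternative: deducing $|P'(z_0)|\le n$ on $|z_0|=1$ from the comparison $|P(z)|\le|z|^n$ for $|z|\ge 1$ is not quite a one-line ``boundary estimate'' --- the standard way is to show that for $|\lambda|>1$ the polynomial $P(z)-\lambda z^n$ has all zeros in the closed disk and then invoke Gauss--Lucas for its derivative --- but since you designate the Riesz route as the primary one and spell out its key identity and constant, the proof is complete as proposed.
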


\begin{Proposition}
    Let $c, c' \in \mathbb{K}^{n}$.
    Assume that $c \neq c'$.
    Then we have:
    \[
        - \log d_{\sigma}(c,c')
        \leq 
        - \log d(c,c')  + 2n \log (n) +  2n \log \left(\max\{\norm{c}_\infty, \norm{c'}_\infty\}\right).
    \]
\end{Proposition}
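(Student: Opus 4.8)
The plan is to reduce the statement to the (elementary) Lipschitz dependence of the coefficients of a monic polynomial on its roots, and then take logarithms. Concretely, I will show that
\[
    d(c,c') \;\le\; n\,2^{n-1} R^{n-1}\,d_{\sigma}(c,c'), \qquad R := 1 + \max\{\norm{c}_{\infty},\norm{c'}_{\infty}\},
\]
from which the asserted inequality follows by taking negative base-$2$ logarithms and collecting the resulting constants. The guiding observation is that the coefficient $c_{n-j}$ of $z^{n-j}$ in $\chi_{c}$ equals $(-1)^{j}$ times the $j$-th elementary symmetric polynomial $e_{j}$ evaluated at the roots of $\chi_{c}$, and $e_{j}$ is affine-linear in each of its arguments separately.

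First I would fix the relevant matching of roots. Since $S_{n}$ is finite, the infimum defining $d_{\sigma}(c,c')$ is attained, so after relabelling the roots $\mu_{1},\dots,\mu_{n}$ of $\chi_{c'}$ we may assume $\left|\lambda_{k}-\mu_{k}\right| \le d_{\sigma}(c,c') =: \varepsilon$ for all $k$, where $\lambda_{1},\dots,\lambda_{n}$ are the roots of $\chi_{c}$ listed with multiplicity; this relabelling changes neither $d$ nor $d_{\sigma}$. Next I would record the standard Cauchy root bound: every root of a monic polynomial $z^{n}+c_{n-1}z^{n-1}+\dots+c_{0}$ has modulus at most $1+\norm{c}_{\infty}$, so all $\lambda_{k}$ and all $\mu_{k}$ have modulus at most $R$.

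The main computation is the telescoping estimate. For each $j\in\{1,\dots,n\}$ we have $c_{n-j}-c'_{n-j} = (-1)^{j}\left(e_{j}(\lambda)-e_{j}(\mu)\right)$, and, writing $e_{j}$ out as a telescoping sum over the $n$ single-coordinate swaps $\lambda_{k}\mapsto\mu_{k}$ and using that replacing one argument $\lambda_{k}$ by $\mu_{k}$ changes $e_{j}$ by $(\lambda_{k}-\mu_{k})$ times $e_{j-1}$ of the remaining $n-1$ arguments (each of modulus at most $R$),
\[
    \left|e_{j}(\lambda)-e_{j}(\mu)\right| \;\le\; \sum_{k=1}^{n}\left|\lambda_{k}-\mu_{k}\right|\binom{n-1}{j-1}R^{j-1} \;\le\; n\,\binom{n-1}{j-1}R^{j-1}\,\varepsilon .
\]
Taking the maximum over $j$ and using $\binom{n-1}{j-1}\le 2^{n-1}$ and $R^{j-1}\le R^{n-1}$ gives $d(c,c')=\norm{c-c'}_{\infty}\le n\,2^{n-1}R^{n-1}\,d_{\sigma}(c,c')$ (if $d$ is an $\ell^{p}$-metric with $p<\infty$ one picks up an extra harmless factor $n^{1/p}$). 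Taking $-\log_{2}$ and using $R = 1+\max\{\norm{c}_{\infty},\norm{c'}_{\infty}\}\le 2\max\{\norm{c}_{\infty},\norm{c'}_{\infty},1\}$ yields $-\log d_{\sigma}(c,c') \le -\log d(c,c') + \log n + 2(n-1) + (n-1)\log\max\{\norm{c}_{\infty},\norm{c'}_{\infty},1\}$, which is of the claimed shape, since $\log n + 2(n-1)\le 2n\log n$ for $n\ge 1$. (Alternatively, one can bound $\max_{|z|\le 1}\left|\chi_{c}(z)-\chi_{c'}(z)\right|$ by a telescoping argument on the unit disc and then recover the coefficient bound by applying Bernstein's inequality, Theorem~\ref{Theorem: Bernstein}, to the degree-$(n-1)$ polynomial $\chi_{c}-\chi_{c'}$, using $c_k-c'_k = \tfrac{1}{k!}(\chi_c-\chi_{c'})^{(k)}(0)$; this route gives the same bound up to the constants.)

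The only delicate point is bookkeeping of constants: one must be slightly careful to land inside $2n\log n + 2n\log\max\{\norm{c}_{\infty},\norm{c'}_{\infty}\}$ (where, as elsewhere, the logarithm of a quantity smaller than $1$ is read as $0$, consistently with the size function used in the paper), but no step is conceptually hard — the whole argument is the telescoping identity for elementary symmetric polynomials together with the Cauchy root bound.
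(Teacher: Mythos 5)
Your proof is correct, and it takes a genuinely different route at the key technical step. Both arguments begin the same way: fix the optimal matching of roots (attained since $S_n$ is finite) and telescope over the $n$ single root swaps $\lambda_k \mapsto \mu_k$. The paper carries out this telescoping at the level of the polynomials $z \mapsto \chi_c(z)$ themselves, obtaining a bound on $\max_{|z|\le 1}\left|\chi_c(z)-\chi_{c'}(z)\right|$, and then converts this sup-norm bound on the unit disk into a coefficient bound via Bernstein's inequality (Theorem \ref{Theorem: Bernstein}) together with $c_k - c'_k = \tfrac{1}{k!}(\chi_c-\chi_{c'})^{(k)}(0)$ --- precisely the alternative you sketch in your parenthetical remark. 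You instead telescope directly on the elementary symmetric polynomials, using that $e_j$ is affine in each argument with partial derivative equal to $e_{j-1}$ of the remaining $n-1$ arguments; this yields the coefficient bound immediately, with no detour through the disk and no appeal to Bernstein. Your route is more elementary and produces a sharper Lipschitz constant ($n\,2^{n-1}R^{n-1}$ versus the paper's $(nC)^{2n}$), and your use of the Cauchy root bound $1+\norm{c}_\infty$ is also safer than the bound $|\lambda_j|\le n\,\norm{c}_\infty$ invoked in the paper, which degenerates when the coefficient norms are small. The remaining loose ends --- the interpretation of $\log$ applied to quantities below $1$, and the precise choice of the metric $d$ on $\mathbb{K}^n$ --- are ambiguities already present in the statement itself, and you resolve them consistently with the paper's conventions, so they do not constitute gaps in your argument.
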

\begin{proof}
    Let $(\lambda_1,\dots,\lambda_n) \in \C^n$ be a vector containing all roots of the polynomial 
    $\chi_c$, where each root occurs according to its multiplicity.
    Let $(\mu_1,\dots,\mu_n) \in \C^n$ be a vector containing all roots of the polynomial $\chi_{c'}$, where each root occurs according to its multiplicity.
    Let $S_n$ denote the group of all bijections $\pi \colon \{1,\dots,n\} \to \{1,\dots,n\}$. 
    Let $\pi \colon \{1,\dots,n\} \to \{1,\dots,n\}$ be a bijection such that the minimum
    $\norm{(\lambda_1,\dots,\lambda_n) - (\mu_{\tau(1)},\dots,\mu_{\tau(n)})}_{\infty}$
    where $\tau$ ranges over all bijections of type $\{1,\dots,n\} \to \{1,\dots,n\}$
    is attained at $\pi$.
    Consider the sequence of polynomials
    \[
        P_j(z) = (z - \lambda_1)\cdot \dots \cdot (z - \lambda_j) \cdot (z - \mu_{\pi(j + 1)}) \cdot \dots \cdot (z - \mu_{\pi(n)}),
    \]
    for $j \in \{0,\dots,n\}$.
    Observe that 
    $P_0(z) = \chi_c(z)$
    and 
    $P_n(z) = \chi_{c'}(z)$.

    We have:
    \begin{align*}
        &\left|P_{j + 1}(z) - P_j(z)\right| \\
        &=
        \big|
            (z - \lambda_1)\cdot \dots \cdot (z - \lambda_j) \cdot (z - \lambda_{j + 1}) \cdot (z - \mu_{\pi(j + 2)}) \cdot \dots \cdot (z - \mu_{\pi(n)})
            \\
            &\hspace{1cm} - (z - \lambda_1)\cdot \dots \cdot (z - \lambda_j) \cdot (z - \mu_{\pi(j + 1)}) \cdot (z - \mu_{\pi(j + 2)}) \cdot \dots \cdot (z - \mu_{\pi(n)})
        \big|\\
        &=
        \left|
            (z - \lambda_1)\cdot \dots \cdot (z - \lambda_j) \cdot (\mu_{\pi(j + 1)} - \lambda_{j + 1}) \cdot (z - \mu_{\pi(j + 2)}) \cdot \dots \cdot (z - \mu_{\pi(n)})
        \right|\\
        &\leq d_{\sigma}(c,c') \cdot \left|(z - \lambda_1)\cdot \dots \cdot (z - \lambda_j)\cdot (z - \mu_{\pi(j + 2)}) \cdot \dots \cdot (z - \mu_{\pi(n)})\right|
    \end{align*}
    Write $C = \max\{\norm{c}_{\infty}, \norm{c'}_{\infty}\}$.
    By \cite[Proposition 5.9]{BasuPollackRoy} we have $|\lambda_j| \leq n C$ and $|\mu_j| \leq n C$ for all $j$. 
    We hence obtain:
    \[
        \left|P_{j + 1}(z) - P_j(z)\right| 
        \leq 
        d_{\sigma}(c,c') (n C)^{n - 1}
    \]
    for all $z$ in the complex unit disk.
    This implies 
    \begin{align*}
        \left| 
            \chi_c(z) - \chi_{c'}(z)
        \right|
        &\leq 
        \left|P_{n}(z) - P_{n - 1}(z)\right| 
        +
        \left|P_{n - 1}(z) - P_{n - 2}(z)\right| 
        +
        \dots
        +
        \left|P_{1}(z) - P_{0}(z)\right|\\
        &\leq 
        d_{\sigma}(c,c') n (n C)^{n - 1}
    \end{align*}
    for all $z$ in the complex unit disk.

    The coefficient of $z^k$ in the polynomial $\chi_c(z) - \chi_{c'}(z)$ is given by
    \[
        \frac{1}{k!}\left(\tfrac{d^k}{dz^k} \left(\chi_c(z) - \chi_{c'}(z)\right)\big|_{z = 0}\right).
    \]
    By Bernstein's inequality (Theorem \ref{Theorem: Bernstein}) we have 
    \[
        \left|\tfrac{d^k}{dz^k} \left(\chi_c(z) - \chi_{c'}(z)\right)\big|_{z = 0}\right| 
        \leq 
        \frac{n!}{(n - k)!} \max_{|z| \leq 1} \left| \chi_c(z) - \chi_d(z) \right|
        \leq 
        d_{\sigma}(c,c') (n C)^{2n}.
    \]
    It follows that 
    \[
        - \log d(c,c') \geq - \log d_{\sigma}(c,c') - 2 n \log (n C).
    \]
    This yields the result.
\end{proof}
\section{Proof of Lemma \ref{Lemma: recursive equations for Q}}
\label{Appendix: Vandermonde lemmas}

\begin{proof}[Proof of Lemma \ref{Lemma: recursive equations for Q}]
    \hfill
    \begin{enumerate}
        \item Observe that we have 
        $
        Y^k - Z^k =
            (Y - Z)\sum_{\ell_1 + \ell_2 = k - 1} Y^{\ell_1}Z^{\ell_2}
        $.
        We calculate:
        \begin{align*}
            &Q_{n, m, 0}(X_1,\dots,X_{n - 1}, Y) - Q_{n, m, 0}(X_1,\dots,X_{n - 1}, Z)\\
            &=  \sum_{k_1 + \dots + k_n = m}
                    \left(
                        Y^{k_n}X_{n - 1}^{k_{n - 1}} \cdot \dots \cdot X_1^{k_1}
                        -
                        Z^{k_n}X_{n - 1}^{k_{n - 1}} \cdot \dots \cdot X_1^{k_1}
                    \right)\\
            &=  \sum_{k_1 + \dots + k_n = m}
                    \left(
                        (Y^{k_n} - Z^{k_n})X_{n - 1}^{k_{n - 1}} \cdot \dots \cdot X_1^{k_1}
                    \right)\\
            &=  (Y - Z)
                \sum_{k_1 + \dots + k_n = m}
                    \left(
                        \left(\sum_{\ell_1 + \ell_2 = k_n - 1} Y^{\ell_1}Z^{\ell_2}\right)
                        X_{n - 1}^{k_{n - 1}} \cdot \dots \cdot X_1^{k_1}
                    \right)\\
            &=  (Y - Z)
            \sum_{k_1 + \dots + k_{n - 1} + \ell_1 + \ell_2 = m - 1}
                \left(
                    Y^{\ell_1}Z^{\ell_2}X_{n - 1}^{k_{n - 1}} \cdot \dots \cdot X_1^{k_1}
                \right)\\
            &= (Y - Z) Q_{n + 1, m - 1, 0}(X_1,\dots,X_{n - 1}, Y, Z).
        \end{align*}
        \item Let 
        \begin{align*}
            \Delta&(X_1,\dots,X_{n + 1}) = \\
            &(X_{n + 1} - X_1) Q_{n + 1, m - 1, i + 1}(X_1,\dots,X_{n + 1})\\
            &- Q_{n + 1, m , i}(X_1,\dots,X_{n + 1})\\
            &+ Q_{n, m , i + 1}(X_1,\dots,X_n).
        \end{align*}
        Our goal is to show that $\Delta$ is zero.
        Write out $\Delta$ according to the definition:
        \begin{align*}
            \Delta&(X_1,\dots,X_{n + 1}) =\\
            &\sum_{k_1 + \dots + k_{n + 1} = m - 1}\binom{k_1 + i + 1}{i + 1} X_{n+1}^{k_{n + 1} + 1} \cdot \dots \cdot X_1^{k_1} \\
            &- \sum_{k_1 + \dots + k_{n + 1} = m - 1}\binom{k_1 + i + 1}{i + 1} X_{n+1}^{k_{n + 1}} \cdot \dots \cdot X_1^{k_1 + 1} \\
            &- \sum_{k_1 + \dots + k_{n + 1} = m }\binom{k_1 + i}{i} X_{n+1}^{k_{n + 1}} \cdot \dots \cdot X_1^{k_1}\\
            &+ \sum_{k_1 + \dots + k_{n} = m }\binom{k_1 + i + 1}{i + 1} X_{n}^{k_{n}} \cdot \dots \cdot X_1^{k_1}.
        \end{align*}
        Re-write the first two sums in the above equation as follows:
        let $k_{n + 1}$ in the first sum range from $1$ to $m $ rather than from $0$ to $m - 1$.
        Let $k_1$ in the second sum range from $1$ to $m $ rather than from $0$ to $m - 1$.
        We obtain:
        \begin{align*}
            \Delta&(X_1,\dots,X_{n + 1}) =\\
            &\sum_{\substack{k_1 + \dots + k_{n + 1} = m  \\ k_{n + 1} \geq 1}}\binom{k_1 + i + 1}{i + 1} X_{n+1}^{k_{n + 1}} \cdot \dots \cdot X_1^{k_1} \\
            &- \sum_{\substack{k_1 + \dots + k_{n + 1} = m  \\ k_1 \geq 1}}\binom{k_1 + i + 1}{i + 1} X_{n+1}^{k_{n + 1}} \cdot \dots \cdot X_1^{k_1 + 1} \\
            &- \sum_{k_1 + \dots + k_{n + 1} = m }\binom{k_1 + i}{i} X_{n+1}^{k_{n + 1}} \cdot \dots \cdot X_1^{k_1}\\
            &+ \sum_{k_1 + \dots + k_{n} = m }\binom{k_1 + i + 1}{i + 1} X_{n}^{k_{n}} \cdot \dots \cdot X_1^{k_1}.
        \end{align*}
        To show that $\Delta(X_1,\dots,X_{n + 1})$ is zero, consider a monomial 
        $X_1^{k_1} \cdot \dots \cdot X_{n + 1}^{k_{n + 1}}$
        with $k_1 + \dots + k_{n + 1} = m $.
        We show that the coefficient of this monomial in $\Delta$ is zero.
        We distinguish four cases:
        \begin{enumerate}
            \item $k_1 \geq 1$, $k_{n + 1} \geq 1$. 
                In this case we get a contribution from the first three sums.
                The coefficient of $X_1^{k_1} \cdot \dots \cdot X_{n + 1}^{k_{n + 1}}$ is hence equal to 
                \[
                    \binom{k_1 + i + 1}{i + 1} - \binom{k_1 + i}{i + 1} - \binom{k_1 + i}{i} = 0.
                \]
            \item $k_1 \geq 1$, $k_{n + 1} = 0$. 
                In this case we get a contribution from the last three sums.
                The coefficient of $X_1^{k_1} \cdot \dots \cdot X_{n + 1}^{k_{n + 1}}$ is hence equal to 
                \[
                    -\binom{k_1 + i}{i + 1} - \binom{k_1 + i}{i} + \binom{k_1 + i + 1}{i + 1} = 0.
                \]
            \item $k_1 = 0$, $k_{n + 1} \geq 1$.
                In this case we get a contribution from the first and third sum. 
                The coefficient of $X_1^{k_1} \cdot \dots \cdot X_{n + 1}^{k_{n + 1}}$ is hence equal to 
                \[
                    \binom{k_1 + i + 1}{i + 1} - \binom{k_1 + i}{i}
                    =
                    \binom{i + 1}{i + 1} - \binom{i}{i}
                    =
                    0.
                \]
            \item $k_1 = 0$, $k_{n + 1} = 0$.
                In this case we get a contribution from the last two sums.
                The coefficient of $X_1^{k_1} \cdot \dots \cdot X_{n + 1}^{k_{n + 1}}$ is hence equal to 
                \[
                    -\binom{k_1 + i}{i} + \binom{k_1 + i + 1}{i + 1}
                    =
                    -\binom{i}{i} + \binom{i + 1}{i + 1}
                    = 0.
                \]
        \end{enumerate}
        Thus, $\Delta$ is indeed equal to zero, as claimed. This finishes the proof.
    \end{enumerate}
\end{proof}

Let us write
\[
    C^{n}_{k, m, i}(X_1,\dots,X_k) = 
        \left[
            Q_{k, m + p - n, i}(X_1,\dots,X_k)
        \right]_{p = 1,\dots,n}.
\]
In the above notation, $k$ denotes the number of variables,
$m$ denotes the degree of the last entry of the column,
and $i$ determines the binomial coefficient
$\binom{k_1 + i}{i}$.

Then the recursive equations from Lemma \ref{Lemma: recursive equations for Q} become:
\[
    C^n_{k, m, 0}(X_1,\dots,X_{k - 1}, Y) - C^n_{k, m, 0}(X_1,\dots,X_{k - 1}, Z)
    =
    (Y - Z) C^n_{k + 1, m - 1, 0}(X_1,\dots, X_{k - 1}, Y, Z)
\]
and 
\[
    C^n_{k + 1, m, i}(X_1,\dots,X_{k + 1}) - C^n_{k, m, i + 1}(X_1,\dots,X_{k})
    =
    (X_{k + 1} - X_1) C^n_{k + 1, m - 1, i + 1}(X_1,\dots,X_{k + 1}).
\]
These recursive equations allow us to decrease the degree of a column, which allows us 
to bring the matrix $\widetilde{V}_{m_1,1,\dots,1}(\Lambda_1,\dots,\Lambda_n)$ into lower echelon form.

\section{Minors of Generalised Vandermonde Matrices}
\label{Appendix: Vandermonde lemmas 2}

Based on Lemma \ref{Lemma: recursive equations for Q}
we can compute the minors of a generalised Vandermonde matrix:

\begin{Lemma}\label{Lemma: minors of generalised Vandermonde matrix}
    Let $m_1 \leq n$.
    Let $K = (\Q[i])(\Lambda_1,\Lambda_{m_1 + 1},\dots,\Lambda_n)$.

    \begin{enumerate}
        \item Assume that $m_1 = 1$.
        Then we have:
        \begin{align*}
            &\det\left(V_{1, \dots, 1}(\Lambda_1,\Lambda_{m_1 + 1},\dots,\Lambda_n)_{j,m_1}\right)\\
            &= \det(V_{1, \dots, 1}(\Lambda_{m_1 + 1},\dots,\Lambda_n)) 
                \cdot 
                \det
                    \left(
                    \left[
                        Q_{q - 1, p + 1 - q, 0}(\Lambda_{2},\dots, \Lambda_q)
                    \right]^{q = 2,\dots,n}_{p = 1,\dots,n, p \neq j}
                    \right)\\
            &= \left(\prod_{2 \leq \beta < \alpha \leq n} \left(\Lambda_{\alpha} - \Lambda_{\beta}\right)\right)
            \cdot 
                \det
                    \left(
                        \left[
                            Q_{q - 1, p + 1 - q, 0}(\Lambda_{2},\dots, \Lambda_q)
                        \right]^{q = 2,\dots,n}_{p = 1,\dots,n, p \neq j}
                    \right).
        \end{align*}
        Moreover, the function 
        \[
            \sum_{n \in \mathbb{N}} \C^n \to \sum_{n \in \mathbb{N}} \C^{n \times (n - 1)},
            \;
            (\lambda_1,\dots,\lambda_n) 
                \mapsto 
                \left[
                    Q_{q - 1, p + 1 - q, 0}(\lambda_{2},\dots, \lambda_q)
                \right]^{q = 2,\dots,n}_{p = 1,\dots,n}
        \]
        is computable in polynomial time.
        \item Assume that $m_1 \geq 2$.
        Then 
        \begin{align*}
            \det&\left(V_{m_1, 1, \dots, 1}(\Lambda_1,\dots,\Lambda_n)_{j,m_1}\right)=\\
             \det&\left(V_{{m_1 - 1}, 1, \dots, 1}(\Lambda_1,\dots,\Lambda_n)\right) \cdot \\
            &\det
                \left(
                    \left[
                        \left[Q_{1, p - q, q - 1}(\Lambda_1)\right]_{p = 1, \dots, n, p \neq j}^{q = 1,\dots, m_1 - 1}; 
                        \left[Q_{q - m_1 + 1, n - q + 1, m_1 - 2}(\Lambda_1,\dots,\Lambda_{q + 2}) \right]_{p = 1, \dots, n, p \neq j}^{q = m_1 + 1,\dots, n}
                    \right]
                \right)=\\
            &\left(\prod_{1 < \gamma \leq n}\left(\Lambda_\gamma - \Lambda_1\right)^{m_1 - 1}\right)
            \left(\prod_{1 < \beta < \alpha \leq n} \left(\Lambda_{\alpha} - \Lambda_{\beta}\right)\right)\cdot \\
            &\det
                \left(
                \left[
                    \left[Q_{1, p - q,q - 1}(\Lambda_1)\right]_{p = 1, \dots, n, p \neq j}^{q = 1,\dots, m_1 - 1}; 
                    \left[Q_{q - m_1 + 1, n - q + 1, m_1 - 2}(\Lambda_1,\dots,\Lambda_{q + 2}) \right]_{p = 1, \dots, n, p \neq j}^{q = m_1 + 1,\dots, n}
                \right]
                \right).
        \end{align*}
        Moreover, the function 
        \begin{align*}
            &\sum_{n \in \mathbb{N}} \C^n \to \sum_{n \in \mathbb{N}} \C^{n \times (n - 1)},
            \\
            &(\lambda_1,\dots,\lambda_n) \mapsto 
            \left[
                    \left[Q_{1, p - q,q - 1}(\lambda_1)\right]_{p = 1, \dots, n}^{q = 1,\dots, m_1 - 1}; 
                    \left[Q_{q - m_1 + 1, n - q + 1, m_1 - 2}(\lambda_1,\dots,\lambda_{q + 2}) \right]_{p = 1, \dots, n}^{q = m_1 + 1,\dots, n}
            \right]
        \end{align*}
        is computable in polynomial time.
    \end{enumerate}
\end{Lemma}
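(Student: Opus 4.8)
The plan is to compute the minors of $V_{m_1,1,\dots,1}$ by a sequence of elementary column operations steered by the recursive identities of Lemma \ref{Lemma: recursive equations for Q}, in the column form $C^n_{k,m,i}$ recorded immediately after its proof. Since column operations commute with deleting a row, throughout the reduction I would ignore the deleted row $j$ and work with the $n\times(n-1)$ array obtained from $V_{m_1,1,\dots,1}$ by deleting only column $m_1$; only at the very end do I delete row $j$ and take the determinant. All operations are carried out over the field $K$, where the differences $\Lambda_\alpha-\Lambda_\beta$ and $\Lambda_\gamma-\Lambda_1$ are invertible, and Lemma \ref{Lemma: recursive equations for Q} guarantees that after dividing out such a difference the entries stay polynomial, i.e. lie in $(\Q[i])[\Lambda_1,\Lambda_{m_1+1},\dots,\Lambda_n]$; dividing a column by a scalar $s$ multiplies the determinant by $1/s$, so the determinant of the original array equals the product of all extracted scalars times the determinant of the final array.

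For the case $m_1=1$ the matrix is the ordinary Vandermonde matrix, and after deleting its first column the remaining columns are $C^n_{1,n-1,0}(\Lambda_q)$, $q=2,\dots,n$. I would process the columns left to right: column $q=2$ is left alone, and for $q\ge 3$ I successively subtract the already-processed columns $2,3,\dots,q-1$, dividing out one factor $(\Lambda_q-\Lambda_\ell)$ at each step via the first identity of Lemma \ref{Lemma: recursive equations for Q} together with the total symmetry of $Q_{k,m,0}$. A degree count shows that after $q-2$ steps the column for $\Lambda_q$ has become $\big[Q_{q-1,\,p+1-q,\,0}(\Lambda_2,\dots,\Lambda_q)\big]_p$, and the product of all extracted scalars is $\prod_{2\le\beta<\alpha\le n}(\Lambda_\alpha-\Lambda_\beta)=\det V_{1,\dots,1}(\Lambda_2,\dots,\Lambda_n)$, which gives both displayed equalities.

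For the case $m_1\ge 2$ the columns remaining after deleting column $m_1$ are the confluent columns $\big[Q_{1,p-q,q-1}(\Lambda_1)\big]_p$ for $q=1,\dots,m_1-1$ together with the singleton columns $C^n_{1,n-1,0}(\Lambda_q)$ for $q=m_1+1,\dots,n$, and the reduction runs in two phases. In the first phase, for each singleton column I perform $m_1-1$ reduction steps; the $r$-th step subtracts the confluent column with index $r$ — which one checks equals $C^n_{1,n-r,r-1}(\Lambda_1)$ — and divides out one factor $(\Lambda_q-\Lambda_1)$, using the first identity of Lemma \ref{Lemma: recursive equations for Q} for $r=1$ and the second identity for $2\le r\le m_1-1$, so the parameter $i$ climbs from $0$ to $m_1-2$. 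After these steps the column for $\Lambda_q$ has the form $C^n_{2,\,n-m_1,\,m_1-2}(\Lambda_1,\Lambda_q)$ and a factor $(\Lambda_q-\Lambda_1)^{m_1-1}$ has been extracted; over all $q$ this produces $\prod_{1<\gamma\le n}(\Lambda_\gamma-\Lambda_1)^{m_1-1}$. In the second phase I reduce the modified singleton columns against one another exactly as in the case $m_1=1$ — using the symmetry of $Q_{k,m,i}$ in its last $k-1$ arguments and the difference identity $Q_{k,m,i}(X_1,\dots,X_{k-1},Y)-Q_{k,m,i}(X_1,\dots,X_{k-1},Z)=(Y-Z)Q_{k+1,m-1,i}(X_1,\dots,X_{k-1},Y,Z)$, valid for all $i\ge 0$ by the same computation as in Lemma \ref{Lemma: recursive equations for Q}(1) — extracting the remaining factor $\prod_{1<\beta<\alpha\le n}(\Lambda_\alpha-\Lambda_\beta)$ and arriving at the claimed residual matrix. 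The product of all scalars extracted in both phases is the confluent Vandermonde determinant $\det V_{m_1-1,1,\dots,1}(\Lambda_1,\Lambda_{m_1+1},\dots,\Lambda_n)=\prod_{1<\gamma\le n}(\Lambda_\gamma-\Lambda_1)^{m_1-1}\prod_{1<\beta<\alpha\le n}(\Lambda_\alpha-\Lambda_\beta)$, which is classical and is also recoverable by running the same reduction on the full square matrix; this yields both factorizations.

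For polynomial-time computability of the residual matrices: every entry is a value $Q_{k,m,i}(x_1,\dots,x_k)$ with $k,m\le n$ and $i\le n$, and $Q_{k,m,i}$ is the degree-$m$ coefficient of $(1-x_1t)^{-(i+1)}\prod_{\ell=2}^{k}(1-x_\ell t)^{-1}$. So it is evaluated by forming the first $m+1$ coefficients of $(1-x_1t)^{-(i+1)}$ (the integers $\binom{s+i}{i}$ have $O(n)$-bit size) and multiplying in the factors $(1-x_\ell t)^{-1}$ one at a time via $c'_s=c_s+x_\ell c'_{s-1}$ modulo $t^{m+1}$; this takes time polynomial in $n$ and the requested output precision, uniformly in the parameters given in unary, and assembling the $n\times(n-1)$ array of such values is then polynomial-time as well. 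I expect the main obstacle to be purely combinatorial bookkeeping: pinning down, at each reduction step, exactly which column must be subtracted and verifying that the entries of the resulting column match the asserted $Q$-polynomial with the correct three indices (and, for $m_1\ge2$, keeping the confluent-reduction phase and the Vandermonde-reduction phase disentangled). Once the indexing is settled, each individual step is an immediate application of Lemma \ref{Lemma: recursive equations for Q}.
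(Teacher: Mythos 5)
Your proposal is correct and follows essentially the same route as the paper's proof: reduce the column-deleted $n\times(n-1)$ array by elementary column operations driven by the recursive identities for the $Q_{k,m,i}$, defer the row deletion via multilinearity of $M\mapsto\det(M_j)$ in the columns, and evaluate the residual entries in polynomial time via a one-variable-at-a-time recursion for $Q$ (your generating-function formulation is equivalent to the paper's tableau recursion). The only deviation is that you run the confluent phase before the Vandermonde phase, which requires the difference identity of Lemma \ref{Lemma: recursive equations for Q}(1) at level $i=m_1-2>0$ rather than only $i=0$; as you correctly note, that extension holds by the identical computation (the binomial weight depends only on the exponent of $X_1$ while the identity varies only the last variable), so the reordering is harmless and yields the same residual matrix and the same extracted prefactor.
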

\begin{proof}
    We only show the second claim. The proof of the first claim is essentially identical.

    Let $W_{1, 0}$ be the matrix which is obtained from $\widetilde{V}_{m_1,1,\dots,1}(\Lambda_1,\dots,\Lambda_n)$
    by deleting the $m_1^{\text{st}}$ column.
    
    Observe that the mappings
    \[
        \mathcal{M}_j \colon K^{n \times (n - 1)} \to K, M \mapsto \det(M_{j}),
    \]
    where $M_j$ is obtained from $M$ by deleting the $j^{\text{th}}$ row, are alternating linear forms.
    Further, we have 
    \[
    \det\left(
        \widetilde{V}_{m_1 -1, 1,\dots,1} (\Lambda_1,\Lambda_{m_1 + 1}\dots,\Lambda_n)
    \right)
    =
    \mathcal{M}_{n}(W_{1, 0}).
    \]

    Thus, consider an arbitrary multilinear form
    \[
        \mathcal{M} \colon K^{n \times (n - 1)} \to K.
    \]
    We then have:
    \[
        W_{1, 0} = 
        \left[
        \left[
            C^{n}_{1, n - q, q - 1}(\Lambda_1)
        \right]^{q = 1,\dots,m_1 - 1};
        \left[
            C^{n}_{1, n - 1, 0}(\Lambda_{j})
        \right]^{j = m_1 + 1,\dots,n}
        \right].
    \]
    Subtract the first column of $W_{1, 0}$ from columns $m_1, \dots, n$ and use Lemma \ref{Lemma: recursive equations for Q}.1 to obtain:
    \[
        \mathcal{M}(W_{1, 0}) = 
        \left(
            \prod_{j \geq m_1 + 1}(\Lambda_j - \Lambda_1)
        \right)
        \cdot 
        \mathcal{M}(W_{1, 1}),
    \]
    where 
    \[
        W_{1, 1} = 
            \left[
                \left[
                    C^{n}_{1, n - q, q - 1}(\Lambda_1)
                \right]^{q = 1,\dots,m_1 - 1};
                \left[
                    C^{n}_{2, n - 2, 0}(\Lambda_1, \Lambda_{j})
                \right]^{j = m_1 + 1,\dots,n}
            \right].
    \]
    Subtract the $m_1^{\text{st}}$ column from the following columns to obtain with Lemma \ref{Lemma: recursive equations for Q}.1:
    \[
        \mathcal{M}(W_{1, 1}) = 
        \left(
            \prod_{j \geq m_1 + 2}(\Lambda_j - \Lambda_{m_1 + 1})
        \right)
        \cdot 
        \mathcal{M}(W_{1, 2}),
    \]
    where 
    \begin{align*}
        &W_{1, 2} =\\ 
            &\left[
                \left[
                    C^{n}_{1, n - q, q - 1}(\Lambda_1)
                \right]^{q = 1,\dots,m_1 - 1};
                \left[
                    C^n_{2, n - 2, 0}(\Lambda_1, \Lambda_{m_1 + 1})
                \right];
                \left[
                    C^n_{3, n - 3, 0}\left(\Lambda_1, \Lambda_{m_1 + 1}, \Lambda_j\right)
                \right]^{j = m_1 + 2,\dots,n}
            \right].
    \end{align*}
    Proceeding in this manner, we eventually obtain:
    \[
        \mathcal{M}(W_{1, 0}) = 
        \left(
            \prod_{1 \leq \beta < \alpha \leq n}(\Lambda_\alpha - \Lambda_\beta)
        \right)
        \cdot 
        \mathcal{M}(W_{1, n - m_1}),
    \]
    where 
    \begin{align*}
        &W_{1, n - m_1} =\\ 
            &\left[
                \left[
                    C^{n}_{1, n - q, q - 1}(\Lambda_1)
                \right]^{q = 1,\dots,m_1 - 1};
                \left[
                    C^{n}_{j - m_1 + 1, n - j + m_1 - 1, 0}(\Lambda_1, \Lambda_{m_1 + 1},\dots, \Lambda_j)
                \right]^{j = m_1 + 1,\dots,n}
            \right].
    \end{align*}
    If $m_1 = 2$, then the matrix $W_{1, n - m_1}$ is already of the required form.
    
    If $m_1 > 2$, subtract the second column of $W_{1, n - m_1}$ from the $m_1^{\text{st}}$ column and apply Lemma \ref{Lemma: recursive equations for Q}.2 to obtain:
    \[
        \mathcal{M}(W_{1, n - m_1}) = 
        \left(
                \Lambda_{m_1 + 1} - \Lambda_1
        \right)
        \cdot 
        \mathcal{M}(W_{2, 1}),
    \]
    where
    \begin{align*}
        &W_{2, 1} =\\ 
            &\Big[
                \left[
                    C^{n}_{1, n - q, q - 1}(\Lambda_1)
                \right]^{q = 1,\dots,m_1 - 1};
                \left[
                    C^{n}_{2, n - 3, 1}(\Lambda_1, \Lambda_{m_1 + 1})
                \right];\\
             &{\hspace{1cm}}\left[
                    C^{n}_{j - m_1 + 1, n - j + m_1 - 1, 0}(\Lambda_1, \Lambda_{m_1 + 1},\dots, \Lambda_j)
                \right]^{j = m_1 + 2,\dots,n}
            \Big].
    \end{align*}
    Now, subtract the $m_1^{\text{st}}$ column from the next column to obtain:
    \[
        \mathcal{M}(W_{1, n - m_1}) = 
        \left(
                \Lambda_{m_1 + 2} - \Lambda_1
        \right)
        \cdot 
        \mathcal{M}(W_{2, 2}),
    \]
    where
    \begin{align*}
        &W_{2, 2} =\\ 
            &\Big[
                \left[
                    C^{n}_{1, n - q, q - 1}(\Lambda_1)
                \right]^{q = 1,\dots,m_1 - 1};
                \left[
                    C^{n}_{2, n - 3, 1}(\Lambda_1, \Lambda_{m_1 + 1})
                \right];\\
                &{\hspace{1cm}}\left[
                    C^{n}_{3, n - 4, 1}(\Lambda_1, \Lambda_{m_1 + 1}, \Lambda_{m_1 + 2})
                \right];
                \left[
                    C^{n}_{j - m_1 + 1, n - j + m_1 - 1, 0}(\Lambda_1, \Lambda_{m_1 + 1},\dots, \Lambda_j)
                \right]^{j = m_1 + 3,\dots,n}
            \Big].
    \end{align*}
    Proceed in this manner to obtain:
    \[
        \mathcal{M}(W_{1, n - m_1}) = 
        \left(
                \prod_{j \geq m_1 + 1}\left(\Lambda_{j} - \Lambda_1\right)
        \right)
        \cdot 
        \mathcal{M}(W_{2, n - m_1})
    \]
    where 
    \begin{align*}
        &W_{2, n - m_1} =\\ 
            &\left[
                \left[
                    C^{n}_{1, n - q, q - 1}(\Lambda_1)
                \right]^{q = 1,\dots,m_1 - 1};
                \left[
                    C^{n}_{j - m_1 + 1, n - j + m_1 - 2, 1}(\Lambda_1, \Lambda_{m_1 + 1},\dots, \Lambda_j)
                \right]^{j = m_1 + 1,\dots,n}
            \right].
    \end{align*}

    If $m_1 > 3$, we proceed in the same manner, starting with the third column of $W_{2, n - m_1}$ to eventually obtain: 
    \[
        \mathcal{M}(W_{1, 0}) = 
        \left(
            \prod_{1 \leq \beta < \alpha \leq n}(\Lambda_\alpha - \Lambda_\beta)
        \right)
        \cdot 
        \left(
            \prod_{j \geq m_1 + 1}
                \left( 
                    \Lambda_j - \Lambda_1
                \right)^{m_1 - 2}
        \right)
        \mathcal{M}\left(W_{m_1 - 1, n - m_1}\right)
    \]
    where 
    \begin{align*}
        &W_{m_1 - 1, n - m_1} =\\ 
            &\left[
                \left[
                    C^{n}_{1, n - q, q - 1}(\Lambda_1)
                \right]^{q = 1,\dots,m_1 - 1};
                \left[
                    C^{n}_{j - m_1 + 1, n - j + 1, m_1 - 2}(\Lambda_1, \Lambda_{m_1 + 1},\dots, \Lambda_j)
                \right]^{j = m_1 + 1,\dots,n}
            \right].
    \end{align*}
    This finishes the proof of the claimed identity.
    The claim about 
    $\det\left(
        \widetilde{V}_{m_1 -1, 1,\dots,1} (\Lambda_1,\Lambda_{m_1 + 1}\dots,\Lambda_n)
    \right)$
    follows by noticing that the matrix which is obtained from
    $W_{m_1 - 1, n - m_1}$ by removing the last row is a lower echelon matrix whose diagonal entries are all equal to $1$.

    It remains to show that the entries of the matrix 
    \begin{align*}
        &W_{m_1 - 1, n - m_1}(\lambda_1,\lambda_{m_1 + 1},\dots,\lambda_n)\\
        &=
        \left[
            \left[
                C^{n}_{1, n - q, q - 1}(\lambda_1)
            \right]^{q = 1,\dots,m_1 - 1};
            \left[
                C^{n}_{j - m_1 + 1, n - j + 1, m_1 - 2}(\lambda_1, \lambda_{m_1 + 1},\dots, \lambda_j)
            \right]^{j = m_1 + 1,\dots,n}
        \right]
    \end{align*}
    are uniformly polytime computable in $\lambda_1,\lambda_{m_1 + 1},\dots,\lambda_n$.

    To this end, observe that the polynomials
    $Q_{n, m, i}$ 
    satisfy the recursive equation
    \[
        Q_{n + 1, m, i}(X_1,\dots,X_{n + 1}) 
        =
        \sum_{k = 0}^m X_{n + 1}^k Q_{n, m - k, i}(X_1,\dots,X_n).
    \]
    We can hence compute a tableau that contains all entries of
        $W_{m_1 - 1, n - m_1}(\lambda_1,\lambda_{m_1 + 1},\dots,\lambda_n)$
    as follows:
    
    \begin{enumerate}
        \item 
    Start with the numbers 
    $Q_{1, m, i}(\lambda_1) = \binom{m + i}{i} \lambda_1^m$
    for 
    $m = 0,\dots,n - 1$ and $i = 0,\dots,m_1 - 2$.
        \item
    Having computed the numbers  
    $Q_{\ell + 1, k, i}\left(\lambda_1, \lambda_{m_1 + 1},\dots,\lambda_{m_1 + \ell}\right)$
    for $\ell \geq 0$, compute
    \[
        Q_{\ell + 2, m, i}\left(\lambda_1, \lambda_{m_1 + 1}, \dots, \lambda_{m_1 + \ell}, \lambda_{m_1 + \ell + 1}\right) 
        =
        \sum_{k = 0}^m \lambda_{m_1 + \ell + 1}^k Q_{\ell + 1, m - k, i}(\lambda_1, \lambda_{m_1 + 1}, \dots,\lambda_{m_1 + \ell}).
    \]
    \end{enumerate}
    It is easy to prove that the size of the numbers 
    $Q_{\ell + 2, m, i}\left(\lambda_1, \lambda_{m_1 + 1}, \dots, \lambda_{m_1 + \ell}, \lambda_{m_1 + \ell + 1}\right)$ 
    grows polynomially in $\ell$, $m$, and $i$.
    From this, it is easy to deduce that the full tableau with 
    $\ell = 0, \dots, n - 1$, $m = 0,\dots,n - 1$, $i = 0,\dots, m_1 - 2$
    is computable in polynomial time.
    The matrix entries can then be easily computed from this tableau.
\end{proof}
\section{Proof of Theorem \ref{Theorem: coefficient functions}}\label{Appendix: coefficient function theorem}

\begin{enumerate}
    \item 
            Assume first that $m_1 = 1$.
            Let
            \[ 
                W_{n, 1, j} =
                \left[
                    Q_{q - 1, p + 1 - q, 0}(\Lambda_{2},\dots, \Lambda_q)
                \right]^{q = 2,\dots,n}_{p = 1,\dots,n, p \neq j}
            \]
            By Lemma \ref{Lemma: minors of generalised Vandermonde matrix} we have:
            \begin{align*}
                F_{1,\dots,1}
                &= 
                \frac{\sum_{j = 1}^n (-1)^{j + 1} \det\left(\widetilde{V}_{1,\dots,1}(\Lambda_1, \dots,\Lambda_n)_{j, 1}\right) \cdot U_{j - 1} }
                    {\det \widetilde{V}_{1,\dots,1}(\Lambda_1, \dots,\Lambda_n)}\\
                &=
                \frac{\sum_{j = 1}^n (-1)^{j + 1} 
                    \det\left(V_{1,\dots,1}(\Lambda_1, \dots,\Lambda_n)_{j, 1}\right) \cdot U_{j - 1} }
                {\det V_{1,\dots,1}(\Lambda_1, \dots,\Lambda_n)}\\
                &=
                \frac{\sum_{j = 1}^n (-1)^{j + 1} \det\left(V_{1, \dots, 1}(\Lambda_2,\dots,\Lambda_n)\right)
                \cdot 
                \det \left(W_{n, 1, j}\right) \cdot U_{j + 1} }
                    {\det V_{1,\dots,1}(\Lambda_1, \dots,\Lambda_n)}.
            \end{align*}
            The matrix $V_{1, \dots, 1}(\Lambda_2,\dots,\Lambda_n)$ has determinant 
            \[
                \prod_{2 \leq j < k \leq n} \left( \Lambda_k - \Lambda_j \right).
            \]
            The matrix $V_{1, \dots, 1}(\Lambda_1,\dots,\Lambda_n)$ has determinant 
            \[
                \prod_{1 \leq j < k \leq n} \left( \Lambda_k - \Lambda_j \right).
            \]
            The matrix $W_{n, 1, n}$ is a lower echelon matrix whose diagonal entries are equal to $1$.
            It follows that $\det\left(W_{n, 1, n}\right) = 1$.
            We hence obtain 
            \[
                F_{1,\dots,1} = 
                \prod_{k = 2}^n \left( \Lambda_k - \Lambda_1 \right)^{-1}
                \left(
                    (-1)^{n + 1} U_{n - 1}
                    +
                    \sum_{j = 1}^{n - 1} (-1)^{j + 1} \det \left(W_{n, 1, j}\right) U_{j + 1} 
                \right).
            \]
            Letting $A_{n,1,j}(\Lambda_1,\dots,\Lambda_n) = \det \left(W_{n, 1, j}\right)$
            shows that $F_{1,\dots,1}$ has the desired shape.

            Assume now that $m_1 \geq 2$.
            Let 
            \[
                W_{n, m_1, j} = 
                    \left[
                        \left[Q_{1, p - q,q - 1}(\Lambda_1)\right]_{p = 1, \dots, n, p \neq j}^{q = 1,\dots, m_1 - 1}; 
                        \left[Q_{q - m_1 + 1, n - q + 1, m_1 - 2}(\Lambda_1,\dots,\Lambda_{q + 2}) \right]_{p = 1, \dots, n, p \neq j}^{q = m_1 + 1,\dots, n}
                    \right]
            \]
            By Lemma \ref{Lemma: minors of generalised Vandermonde matrix} we have:
            \begin{align*}
                F_{m_1, 1,\dots,1}
                &= 
                \frac{\sum_{j = 1}^n (-1)^{j + m_1} \det\left(\widetilde{V}_{m_1,1,\dots,1}(\Lambda_1, \Lambda_{m_1 + 1}, \dots,\Lambda_n)_{j, m_1}\right) \cdot U_{j - 1} }
                    {\det \widetilde{V}_{m_1,1,\dots,1}(\Lambda_1, \Lambda_{m_1 + 1}, \dots,\Lambda_n)}\\
                &=
                \frac{\sum_{j = 1}^n (-1)^{j + m_1} 
                    \left(\prod_{k = 0}^{m_1 - 2}k!\right)\det\left(V_{m_1,1,\dots,1}(\Lambda_1, \Lambda_{m_1 + 1}, \dots,\Lambda_n)_{j, m_1}\right) \cdot U_{j - 1} }
                {\left(\prod_{k = 0}^{m_1 - 1}k!\right) \det V_{m_1,1,\dots,1}(\Lambda_1, \Lambda_{m_1 + 1}, \dots,\Lambda_n)}\\
                &=
                \frac{\sum_{j = 1}^n (-1)^{j + m_1} \det\left(V_{m_1 - 1, 1, \dots, 1}(\Lambda_1,\Lambda_{m_1 + 1},\dots,\Lambda_n)\right)
                \cdot 
                \det \left(W_{n, m_1, j}\right) \cdot U_{j + m_1} }
                    {(m_1 - 1)! \det V_{m_1,1,\dots,1}(\Lambda_1, \Lambda_{m_1 + 1}, \dots,\Lambda_n)}.
            \end{align*}
            We have 
            \[ 
                \det V_{m_1,1,\dots,1}(\Lambda_1, \Lambda_{m_1 + 1}, \dots,\Lambda_n) = 
                    \left(\prod_{j = m_1 + 1}^n (\Lambda_j - \Lambda_1)^{m_1}\right)
                    \left(\prod_{m_1 + 1 \leq j < k \leq n} (\Lambda_k - \Lambda_j)\right)
            \]
            and 
            \[ 
                \det V_{m_1 - 1,1,\dots,1}(\Lambda_1, \Lambda_{m_1 + 1}, \dots,\Lambda_n) = 
                    \left(\prod_{j = m_1 + 1}^n (\Lambda_j - \Lambda_1)^{m_1 - 1}\right)
                    \left(\prod_{m_1 + 1 \leq j < k \leq n} (\Lambda_k - \Lambda_j)\right).
            \] 
            The matrix $W_{n, m_1, n}$ is a lower echelon matrix whose diagonal entries are all equal to $1$.
            It follows that $\det(W_{n, m_1, n}) = 1$.
            Letting $A_{n, m_1, j} = \det(W_{n, m_1, j})$ for $j = 1,\dots, n - 1$ shows that $F_{m_1,1,\dots,1}$ has the desired shape.
            
            Using that, according to Lemma \ref{Lemma: minors of generalised Vandermonde matrix}, 
            $W_{j, m_1, n}(\lambda_1,\lambda_{m+1},\dots,\lambda_n)$ 
            is uniformly polynomial-time computable in $n$ and $(\lambda_1,\lambda_{m+1},\dots,\lambda_n)$
            we obtain that $A_{n, m_1, j}(\lambda_1,\lambda_{m+1},\dots,\lambda_n)$
            is uniformly polynomial-time computable in $n$ and $(\lambda_1,\lambda_{m+1},\dots,\lambda_n)$.
            It suffices to observe that the determinant of an $n \times n$ matrix is uniformly polynomial-time computable in $n$ and the matrix entries.
    \item By the previous item, the function $G_{m_1,n}$ is computable in polynomial time, uniformly in $n$ and the bitsize of the input vector. 
    In other words, there exist a polynomial $\omega \in \N[X,Y,Z]$
    and an algorithm which takes as input $n \geq 1$, $p \in \N$, and a complex vector $x \in \C^{n - m_1 + 1} \times \C^n$ and outputs
    a rational approximation of 
    $G_{m_1,n}(x)$
    to error $2^{-p-1}$
    within $\omega(n, C, p)$ steps,
    where $C$ is a bound on the logarithm of the absolute values of the entries of $x$.
    In particular, this algorithm can query its input vector to accuracy at most $2^{-\omega(n, C, p + 1)}$
    before producing an approximation to error $2^{-p-1}$.
    If two input vectors have distance less than $2^{-\omega(n, C, p)}$, then the algorithm outputs the same rational approximation 
    to error $2^{-p - 1}$ on both inputs. 
    The claim follows if we put $\Omega(X,Y,Z) = \omega(X,Y, Z + 1)$.
    \item 
    It remains to prove the computability of the function $\gamma$.

    Suppose we are given complex numbers 
    $(\lambda_1,\lambda_{m_1 + 1},\dots,\lambda_n)$
    and 
    $(u_0,\dots,u_{n - 1})$
    such that $\lambda_j \neq \lambda_1$ for $j \geq m_1 + 1$.
    Using Vieta's formulas we can compute the coefficients of the polynomial 
    \[
        \chi(z)
        =
        (z - \lambda_1)^{m_1} \prod_{j = m_1 + 1}^n (z - \lambda_n)
        =
        z^n + c_{n - 1} z^{n - 1} + \dots + c_1 z + c_0
        .
    \]
    Consider the matrix 
    \[
        A = 
        \begin{bmatrix}
            0         & 1          &            &            \\ 
            \vdots    & \ddots     & \ddots     &   \\ 
            0         & \dots      & 0          & 1      \\ 
            -c_0      & \dots      & -c_{n - 2} & -c_{n - 1} \\ 
        \end{bmatrix}
    \]
    and the associated initial value problem
    \begin{align}\label{eq: Vandermonde n-d initial value problem}
        &\dot{x}(t) = A x(t) \\
        &x(0) = u.
    \end{align}
    Let $x$ be a solution to \eqref{eq: Vandermonde n-d initial value problem}.
    Then the function $x_1(t)$ is a solution to the initial value problem \eqref{eq: differential equation}
    (with complex coefficients).

    By construction, the number $\lambda_1$ is an eigenvalue of $A$ with algebraic multiplicity $m_1$.
    By observing that $\chi(z)$ and the first $m_1 - 1$ derivatives of $\chi(z)$ vanish in $\lambda_1$,
    we find that the geometric multiplicity of $\lambda_1$ is equal to $1$
    and that a Jordan chain for $\lambda_1$ is given by the generalised Vandermonde block 
    $[\lambda_1]_{n, m_1}$ of size $n \times m_1$ in $\lambda_1$.
    It follows from \cite[Lemma 4.2, Lemma 4.3]{DPLSR20} that we can compute matrices $S$ and $S^{-1}$ such that 
    \[
        S A S^{-1} = 
        \begin{bmatrix}
            J & 0 \\
            0 & B    
        \end{bmatrix},
    \]
    where $J$ is a Jordan block for $\lambda_1$ of size $m_1$
    and all generalised eigenvectors of $S A S^{-1}$ for the eigenvalues $\lambda_{m_1 + 1},\dots, \lambda_n$
    lie in the span of the standard unit vectors $e_{m_1 + 1},\dots, e_n$.
    Note that the results in \cite{DPLSR20} are stated only for real matrices, but their proof equally applies to complex matrices.
    
    Consider the initial value problem 
    \begin{align*}
        &\dot{y}(t) = S A S^{-1} y(t) \\
        &y(0) = S u. 
    \end{align*}
    The unique solution $y$ of this initial value problem satisfies 
    $y(t) = S x(t)$,
    where $x$ is the unique solution of \eqref{eq: Vandermonde n-d initial value problem}.

        The simple shape of $S A S^{-1}$ allows us to solve the initial value problem 
        for the first $m_1$ components explicitly. 
        We have 
        \[
            y_j(t) = \sum_{k = 0}^{m_1 -j} \frac{u_{m_1 - k - 1}}{(m_1 - j - k)!} t^{m_1 - j - k} e^{\lambda_1 t}
        \]
        for $j = 1,\dots,m_1$.

    Let $s_{j,k}$, $j = 1,\dots,n$, $k = 1,\dots, n$ denote the entries of the matrix $S^{-1}$.
    We have 
    \[
        x_1(t) = s_{1,1} y_1(t) + s_{1,2} y_2(t) + \dots + s_{1,n} y_n(t),
    \]
    where $\sum_{k = m_1 + 1}^{n} s_{1,k} y_{k}(t)$ is a linear combination of terms of the form 
    $t^\ell e^{\lambda_j t}$ with $j \geq m_1 + 1$.
    It follows that the coefficient of $t^{m_1 - 1} e^{\lambda_1 t}$ in the unique solution of 
    \eqref{eq: differential equation}
    is equal to
    \[
        \frac{s_{1,1} u_{m_1 - 1}}{(m_1 - 1)!}.
    \]
    Since the matrix $S$ is uniformly computable in the input data, so is this number.
    This finishes the proof.
\end{enumerate}
\section{Proof of Proposition \ref{Proposition: coefficient functions for equal Cauchy problems}}
\label{Appendix: Proposition relating coefficient functions}

By continuity, it suffices to prove the equality when the numbers $\lambda_1,\dots,\lambda_{n_1}$ are pairwise distinct.
    Then 
    $F_{m, n_1}(\lambda_1, \lambda_{m + 1},\dots,\lambda_{n_1}, u_0,\dots,u_{n_1 - 1})$
    and 
    $F_{m, n_2}(\lambda_1, \lambda_{m + 1},\dots,\lambda_{n_2}, u_0,\dots,u_{n_2 - 1})$
    are well-defined with 
    \begin{align*}
        F_{m, n_1}&(\lambda_1, \lambda_{m + 1},\dots,\lambda_{n_1}, u_0,\dots,u_{n_1 - 1}) = \\
            &\frac{1}{(m - 1)!}
            \left(
                \prod_{j = m + 1}^{n_1}\left( \lambda_j - \lambda_1 \right)
            \right)^{-1}
            G_{m, n_1}(\lambda_1, \lambda_{m + 1},\dots,\lambda_{n_1}, u_0,\dots,u_{n_1 - 1})
    \end{align*}
    and 
    \begin{align*}
        F_{m, n_2}&(\lambda_1, \lambda_{m + 1},\dots,\lambda_{n_2}, u_0,\dots,u_{n_2 - 1}) = \\
            &\frac{1}{(m - 1)!}
            \left(
                \prod_{j = m + 1}^{n_2}\left( \lambda_j - \lambda_1 \right)
            \right)^{-1}
            G_{m, n_2}(\lambda_1, \lambda_{m + 1},\dots,\lambda_{n_2}, u_0,\dots,u_{n_2 - 1}).
    \end{align*}
    It hence suffices to prove that the numbers
    \[F_{m, n_1}(\lambda_1, \lambda_{m + 1},\dots,\lambda_{n_1}, u_0,\dots,u_{n_1 - 1})\]
    and 
    \[F_{m, n_2}(\lambda_1, \lambda_{m + 1},\dots,\lambda_{n_2}, u_0,\dots,u_{n_2 - 1})\]
    are equal.

    Consider the Cauchy problem with characteristic polynomial $\prod_{j = 1}^{n_1} (z - \lambda_j)$
    and initial values $u_0,\dots,u_{n_1 - 1}$.
    By assumption, the solution to this Cauchy problem already satisfies the linear differential equation 
    with characteristic polynomial $\prod_{j = 1}^{n_2} (z - \lambda_j)$.
    It follows that the solution to this Cauchy problem must be equal to the solution of the Cauchy problem 
    with characteristic polynomial $\prod_{j = 1}^{n_2} (z - \lambda_j)$
    and initial values $u_0,\dots,u_{n_2 - 1}$.
    But this implies that
    \[F_{m, n_1}(\lambda_1, \lambda_{m + 1},\dots,\lambda_{n_1}, u_0,\dots,u_{n_1 - 1})\]
    and 
    \[F_{m, n_2}(\lambda_1, \lambda_{m + 1},\dots,\lambda_{n_2}, u_0,\dots,u_{n_2 - 1})\]
    must be equal
    by definition of the functions $F_{\ell, n}$.

\end{document}